\newtheorem{Theorem}{Theorem}[section]
\newtheorem{Proposition}{Proposition}[section]
\newtheorem{Corollary}{Corollary}[section]
\newtheorem{Lemma}{Lemma}[section]
\def\proof{\par{\it Proof}. \ignorespaces}
\def\endproof{{\ \vbox{\hrule\hbox{%
     \vrule height1.3ex\hskip0.8ex\vrule}\hrule }}\par}
\newenvironment{Proof}{\proof}{\endproof}
\theoremstyle{definition}
\newtheorem{Definition}[Theorem]{Definition}
\theoremstyle{remark}
\newtheorem{Remark}[Theorem]{Remark}
\numberwithin{equation}{section}
\numberwithin{figure}{section}
\let\trueint=\int
\let\truesum=\sum
\def\int{\mathop{\textstyle\trueint}\limits}
\def\sum{\mathop{\textstyle\truesum}\limits}
\renewcommand\labelitemi{\ifmmode\circ\else$\circ$\fi}
\begin{document}

%\normalbaselineskip=2.0\normalbaselineskip
%\normalbaselines

\title[Dispersionless integrable systems and universality in random matrix theory]
{Combinatorics of dispersionless integrable systems and universality in random matrix theory}

%    Information for first author
\author{Yuji Kodama}
%    Address of record for the research reported here
\address{Department of Mathematics, Ohio State University, Columbus,
OH 43210}
\email{kodama@math.ohio-state.edu}
%    \thanks will become a 1st page footnote.
%    Information for second author

\thanks{Both authors are partially
supported by NSF grant DMS0806219}

% Information for second author
\author{Virgil U. Pierce}
%  Address of record for the research reported here
\address{Department of Mathematics, University of Texas -- Pan American, Edinburg, TX  78539}
\email{piercevu@utpa.edu}
%  \thanks{$^2$Partial supported by an NSF-VIGRE grant DMS??? and by NSF grant DMS???}

\begin{abstract}
It is well-known that the partition function of the unitary ensembles of random matrices
 is given by a $\tau$-function
of the Toda lattice hierarchy and those of the orthogonal and symplectic ensembles are
$\tau$-functions of the Pfaff lattice hierarchy.
In these cases the asymptotic expansions of the free energies given by the logarithm of the partition functions lead to the dispersionless (i.e. continuous) limits for the Toda and Pfaff lattice hierarchies.
There is a universality between all three ensembles of random matrices, one consequence of which is that the leading orders of the
free energy for large matrices agree.  In this paper, this universality, in the case of Gaussian ensembles, is explicitly demonstrated by computing the leading orders of the free energies in the expansions.
We also show that the free energy as the solution of the dispersionless Toda lattice hierarchy gives
a solution of the dispersionless Pfaff lattice hierarchy, which implies that this universality holds
in general for the leading orders of the unitary, orthogonal, and symplectic ensembles.

We also find an explicit formula for the two point function $F_{nm}$ which
represents the number of connected ribbon graphs with two vertices of degrees $n$ and $m$
on a sphere. The derivation is based on the Faber polynomials defined on the spectral curve of the dispersionless Toda lattice hierarchy, and $\frac{1}{nm}F_{nm}$ are the Grunsky coefficients of the Faber polynomials.
\end{abstract}

\maketitle

\thispagestyle{empty}
\pagenumbering{roman}\setcounter{page}{1}
\tableofcontents
%\clearpage

\pagenumbering{arabic}
\setcounter{page}{1}
\setcounter{figure}{1}

\section{Introduction and Background}
We begin with a brief review of the dispersionless limits of the Toda and Pfaff lattice hiearchies (see for examples \cite{AvM:02, KP:07, TT:95, T:07}). Here the Toda hierarchy is the 1-dimensional one with the parameters $\{t_0,t_1,t_2,\ldots\}$.  These limits are found from the equations for the $\tau$-functions of the lattice hierarchies by subsituting an asymptotic expansion for the logarithm of the $\tau$-functions in the small $\hbar$ limit.  We also describe the connection of the $\tau$-functions of these hierarchies with the partition functions of the unitary, orthogonal, and symplectic ensembles of random matrix models
with the matrix size $N$ (or $2N$) related to $N=\frac{1}{\hbar}$ (see for examples \cite{GHJ, EM-2003, MW-2003}).  In particular the asymptotic expansions of the free energy, the logarithm of the partition functions, as the size of the matrices approaches infinity satisfy the underlying assumption for the asymptotic expansion of the logarithm of the $\tau$-functions  of the Toda and Pfaff lattice hierarchies.  Then the leading order terms of the asymptotic expansions satisfy the dispersionless Toda and Pfaff lattice hierarchies, which are written in terms of
the two-point functions denoted by $F_{nm}$ (see for examples \cite{CK:95, TT:95, CT:06}).
There is a rich literature dealing with dispersionless integrable systems including the KP and Toda lattice hierarchies, which mainly involve studying algebraic and complex analytic aspects of the equations and/or discussing applications to 2-dimensional topological field theories (see, for examples, \cite{AK:96, D:92, Kr:94, MWZ:02, WZ:00, LT:03, Z:01}). However, it seems that there is no paper which connects directly the dispersionless integrable systems with combinatorial problems associated with random matrix
ensembles.
The present paper is to deal with this connection for
combinatorial problems such as counting ribbon graphs on a compact surface of genus zero, and
 give a universality result for the dispersionless integrable systems which is an analogue of that for the random matrix ensembles.

\subsection{The Toda lattice hierarchy}
The Toda lattice equation (the first member of the Toda lattice hierarchy) is given by
\[\left\{
\begin{array}{lllll}
\displaystyle{\frac{\partial a_n}{\partial t_1}}&=a_n(b_{n+1}-b_n),\\[2.0ex]
\displaystyle{\frac{\partial b_n}{\partial t_1}}&=a_n-a_{n-1}\,.
\end{array}\right. \qquad n=1,2,\ldots.
\]
There exists a sequence $\{\tau_n:n\ge 0\}$ of $\tau$-functions with $\tau_0=1$ which generate the $a_n, b_n$ by the formulas
\[
a_n=\frac{\tau_{n+1}\tau_{n-1}}{\tau_n^2},\qquad b_n=\frac{\partial}{\partial t_1}\log \left(\frac{\tau_n}{\tau_{n-1}}\right)\,.
\]
We may then write the Toda lattice equation in the Hirota bilinear form,
\begin{equation}\label{Toda-Hirota}
D_1^2\tau_n\cdot\tau_n=2\tau_{n+1}\tau_{n-1}.
\end{equation}
Here $D_1$ is the usual Hirota derivative, i.e. for a variable $t_k$ which is a flow-parameter for the $k$-th member of the Toda lattice hierarchy, $D_k$ is defined by
\[
D_kf\cdot g:=\left(\frac{\partial}{\partial t_k}-\frac{\partial }{\partial t_k'}\right) f(t_k)g(t_k')\Big|_{t_k=t_k'}\,.
\]
Note that the Hitota equation (\ref{Toda-Hirota}) of the Toda lattice gives
\begin{equation}\label{an}
a_n=\frac{\partial^2}{\partial t_1^2}\log\tau_n=\frac{\tau_{n+1}\tau_{n-1}}{\tau_{n}^2}\,.
\end{equation}

The Toda lattice is expressed in the Lax representation with a tridiagonal semi-infinite matrix $L$ by
\[
\frac{\partial L}{\partial t_1}=[L, B_1],\qquad B_1=[L]_{-}\,,
\]
where $[L]_-$ is the strictly lower triangular part of the matrix $L$ given by
\[
L=\begin{pmatrix}
b_1 &  1 &  0  & 0 & \cdots  \\
a_1 & b_2 & 1 & 0 & \cdots \\
0  & a_2 & b_3 & 1 & \cdots\\
0 &  0 & a_3 & b_4 & \cdots \\
\vdots&\vdots &\vdots&\vdots& \ddots
\end{pmatrix}\,.
\]
The Lax matrix is also written in terms of a shift operator $\Delta:=\exp(\frac{\partial}{\partial n})$, that is,
with the eigenvector $\phi=(\phi_1,\phi_2,\ldots)$, we have
\begin{align*}
(L\phi)_n &=\phi_{n+1}+b_n\phi_n+a_{n-1}\phi_{n-1}\\
&=(\Delta + b_n + a_{n-1}\Delta^{-1})\phi_n =\lambda\phi_n.
\end{align*}
The hierarchy of the Toda lattice is defined by
\[
\frac{\partial L}{\partial t_k}=[L, B_k],\qquad  B_k=[L^k]_-,\qquad k=1,2,3,\ldots\,.
\]
The $\tau$-functions are then the functions of infinite variables, i.e. $\tau(\mathbf{t})$ with $\mathbf{t}=(t_1,t_2,\ldots)$.
Then it is well-known that each $\tau_n$-function also satisfies the KP hierarchy \cite{AvM:02, BK:03}:
\begin{equation}\label{KPhierarchy}
\left(h_{k+1}(\hat{\mathbf{D}})-\frac{1}{2}D_1D_{k}\right)\tau_n\cdot\tau_n=0,
\qquad k=3,4,5,\ldots.
\end{equation}
where $\hat{\mathbf{D}}=(D_1,\frac{1}{2}D_2,\frac{1}{3}D_3,\ldots)$ and $h_n(\mathbf{x})$ with $\mathbf{x}=\hat{\mathbf{D}}$ are the elementary symmetric functions of $\mathbf{x}=(x_1,x_2,\ldots)$ defined by
\[
\exp\left(\sum_{k=1}^{\infty}x_n z^n\right)=\sum_{n=0}^{\infty}h_n(\mathbf{x})z^n.
\]
In particular, the first equation with $k=3$ gives
\[
(-4D_1D_3+D_1^4+3D_2^2)\tau_n\cdot\tau_n=0,
\]
which is the KP equation, that is,
the function $u=2\frac{\partial^2}{\partial x^2}\log \tau_n$ for each $n$ satisfies
\begin{equation}\label{KP}
\frac{\partial}{\partial x}\left(-4\frac{\partial u}{\partial t}+\frac{\partial^3u}{\partial x^3}+6u\frac{\partial u}{\partial x}\right)+3\frac{\partial^2u}{\partial y^2}=0\,,
\end{equation}
with $x=t_1, y=t_2$ and $t=t_3$.

It is also well-known that the $\tau_n$-functions of the Toda lattice hierarchy satisfy the following set of equations (see for example \cite{AvM:99}):
\begin{equation}\label{dHT2}
\left(D_{k}-h_k(\hat{\mathbf{D}})\right)\tau_{n+1}\cdot\tau_n=0,\qquad k=2,3,4,\ldots.
\end{equation}
The first equation with $k=2$ gives
\begin{equation}\label{nlsh}
(D_2-D_1^2)\tau_{n+1}\cdot\tau_n=0\,.
\end{equation}
This equation with (\ref{Toda-Hirota}) gives the nonlinear Schr\"odinger equation (with the change $t_2\to it_2$), i.e.
\[
i\frac{\partial \psi}{\partial t_2}+\frac{\partial^2\psi}{\partial t_1^2}+2\psi^2\bar{\psi}=0\,,
\]
with $\psi=\frac{\tau_{n+1}}{\tau_n}$ and $\bar{\psi}=\frac{\tau_{n-1}}{\tau_n}$. Thus, the nonlinear Schr\"odinger equation is the second member of the Toda lattice hierarchy.

We now briefly summarize the dispersionless limit of the Toda hierarchy:
The key ingredient in the dispersionless limit is to introduce a free energy, denoted $F$,
\begin{equation}\label{asymptotic-assumption}
\tau_n(\mathbf{t}; \hbar)=\exp\left(\frac{1}{\hbar^2}F(T_0,\mathbf{T}) +O(\hbar^{-1})\right),
\end{equation}
where $\hbar$ is a small parameter, and $ \mathbf{T}=(T_1,T_2,\ldots)$ represents the slow variables with $T_k=\hbar t_k$ for $n\ge 1$.
Note in particular that  the limit $\hbar\to 0$ gives a continuous limit of the lattice structure, that is,
the lattice spacing has the order $O(\hbar)$ and the limit introduces the continuous variable $T_0$,
\[
\hbar n \longrightarrow T_0.
\]
The free energy $F(T_0,\mathbf{T})$ is calculated from the limit,
\[
F(T_0,\mathbf{T})=\lim_{\hbar\to 0}\,\hbar^2\log\left[\tau_n\left(\hbar^{-1}\mathbf{T};\hbar\right)\right].
\]
Then the dispersionless Toda hierarchy can be written in terms of the second derivatives,
\[
F_{mn}=\frac{\partial^2 F}{\partial T_m\partial T_n}\,,\qquad m,n \ge 0.
\]
The $F_{mn}$ play an important role for the dispersionless integrable systems, and they are sometimes referred to as the {\it two point functions} of the corresponding topological field theory (see for examples \cite{D:92, Kr:94, AK:96}).

In the limit $\hbar\to 0$, the $a_n,b_n$ variables in the Toda lattice become
\[\left\{
\begin{array}{llll}
\displaystyle{a_n(t)=\exp\left(\log\tau_{n+1}-2\log\tau_n+\log\tau_{n-1}\right)\longrightarrow e^{F_{00}}},\\[1.5ex]
\displaystyle{b_n(t)=\frac{\partial}{\partial t_1}(\log\tau_{n}-\log\tau_{n-1})\longrightarrow F_{01}} \,.
\end{array}\right.
\]
Then the Hirota form of the Toda lattice equation (\ref{Toda-Hirota}) becomes
\begin{equation}\label{dToda-1}
F_{11}=e^{F_{00}}\,,
\end{equation}
which is called the dispersionless Toda (dToda) equation. Likewise the KP equation (\ref{KP}) gives
\[
-4F_{13}+6F_{11}^2+3F_{22}=0\,,
\]
which is the dispersionless KP (dKP) equation. Also the dispersionless limit of \eqref{nlsh} is
\[
F_{02}-2F_{11}-F_{01}^2=0\,.
\]

One should also note that in the dispersioless limit, the spectral problem $L\phi=\lambda\phi$ with the Lax operator and the eigenvector $\mathbf{\phi}=(\phi_1,\phi_2,\ldots)^T$ leads to an algebraic equation
(the spectral curve of the dToda equation),
\begin{equation}\label{todacurve}
\lambda=p(\lambda)+F_{01}+\frac{e^{F_{00}}}{p(\lambda)}\,,
\end{equation}
where $p(\lambda)$ is a quasi-momentum given by $\lim_{\hbar\to 0}\Delta\phi/\phi=\exp(\partial S/\partial  T_0)$. This is a spectral curve of genus $0$, and the dToda hierarchy defines a integrable deformation
of the curve (see Lemma \ref{tridiagonality}). Here the eigenvector $\phi$ is assumed to be in the WKB form as $\hbar\to 0$,
\[
\phi_n(\mathbf{t};\hbar)=\exp\left(\frac{1}{\hbar}S(T_0,\mathbf{T})+O(1)\right).
\]
Although the function $S$ plays a key role for the dispersionless theory, we will not use it in this paper.
We just mention that the $S$-function can be expressed in the form (see for example \cite{T:07}),
\begin{equation}\label{S}
S(T_0,\mathbf{T})=\sum_{n=1}^{\infty}\lambda^nT_n+T_0\log\lambda-\sum_{k=1}^{\infty}\frac{1}{k\lambda^k}F_k(T_0,\mathbf{T})\,.
\end{equation}

The dToda hierarchy may be formulated by the form,
\begin{equation}\label{dToda-Fay}\left\{
\begin{array}{lllll}
\displaystyle{e^{D(\lambda)D(\mu)F}=\frac{\lambda e^{-D(\lambda)F_0}-\mu e^{-D(\mu)F_0}}{\lambda-\mu}}\\[2.0ex]
\displaystyle{e^{-D(\lambda) D(\mu)F}=1-\frac{e^{F_{00}}}{\lambda\mu}e^{(D(\lambda)+D(\mu))F_0}}
\end{array}\right.
\end{equation}
where
\[
D(\lambda)=\sum_{n=1}^{\infty}\frac{1}{n\lambda^n}\frac{\partial}{\partial T_n}.
\]
This is obtained by the reduction from the 2-dimesional dToda hierarchy in \cite{CT:06,T:07}
with the constraints $\bar t_k=-t_k$ for all $k$, and the Toda hierarchy with those constraints
is referred to as the 1-dimensional Toda hierarchy.  Then one can show that the dToda hierarchy
\eqref{dToda-Fay} leads to the curve \eqref{todacurve} (see Lemma \ref{tridiagonality} below).
We will further discuss the detailed structure of the hierarchy \eqref{dToda-Fay} in Section
\ref{S:dToda}, where we show that \eqref{dToda-Fay} indeed gives the dispersionless limits
of \eqref{KPhierarchy} and \eqref{dHT2}.

%%%%%%%%%%%%%%%%%%%%%%%%%

\subsection{Unitary ensembles of random matrices}\label{UnitaryToda}
A special class of solutions to the Toda lattice equations defines the partition functions of the unitary ensembles of random matrices.  Explicitly
 the $\tau$-functions are taken to be
\[
\tau_n(\mathbf{t}) = Z_n^{(2)}(V_0(\lambda); \mathbf{t}) =\int_{\mathbb{R}}{d\lambda_1}\cdots\int_{\mathbb{R}}d\lambda_n\,\prod_{i<j} |\lambda_i - \lambda_j|^2 \exp\left( - \sum_{k=1}^n V_{\mathbf{t}}(\lambda_k) \right)\,,\]
where
$ V_{\mathbf{t}}(\lambda_k)=V_0(\lambda_k)-\sum_{j=1}^{\infty}\lambda_k^jt_j$ and $V_0(\lambda)$ is a polynomial of even degree  (e.g. $V_0(\lambda)=\frac{1}{2}\lambda^2$ for the Gaussian ensemble).
This function represents integration over the eigenvalues of the random matrices of the unitary ensemble, and its Taylor expansion in $\mathbf{t}$ contains the moments of traces of powers of the matrix (see for example a common reference \cite{M:2004}).

A ribbon graph is a one complex embedded on an oriented genus $g$ surface,
such that the complement of the complex is a disjoint collection of sets
homeomorphic to discs.   We may think of a ribbon graph as a collection of
discs (vertices) and ribbons (edges) such that the ribbons are glued together
preserving the local orientations of the discs in other words the ribbons lay
flat on the oriented surface.  Our ribbon graphs will be labeled in the
following way:  The vertices are labeled so that they are distinct, and the
edges emnating from each vertex are labeled as distinct as well. We also
define the degree of the vertex as the umber of edges attached to
the vertex. We represent ribbon graphs for this paper as graphs drawn on an oriented surface (we only consider a compact surface of genus zero (i.e. a sphere) in this paper, and refer a ribbon graph on the sphere as a ribbon graph of genus zero).
In Figure \ref{fig:vertex}, we illustrate one vertex of degree 8 and two vertices of degrees
of 3 and 5 on a sphere.
\begin{figure}
\begin{center}
\includegraphics[width=10cm]{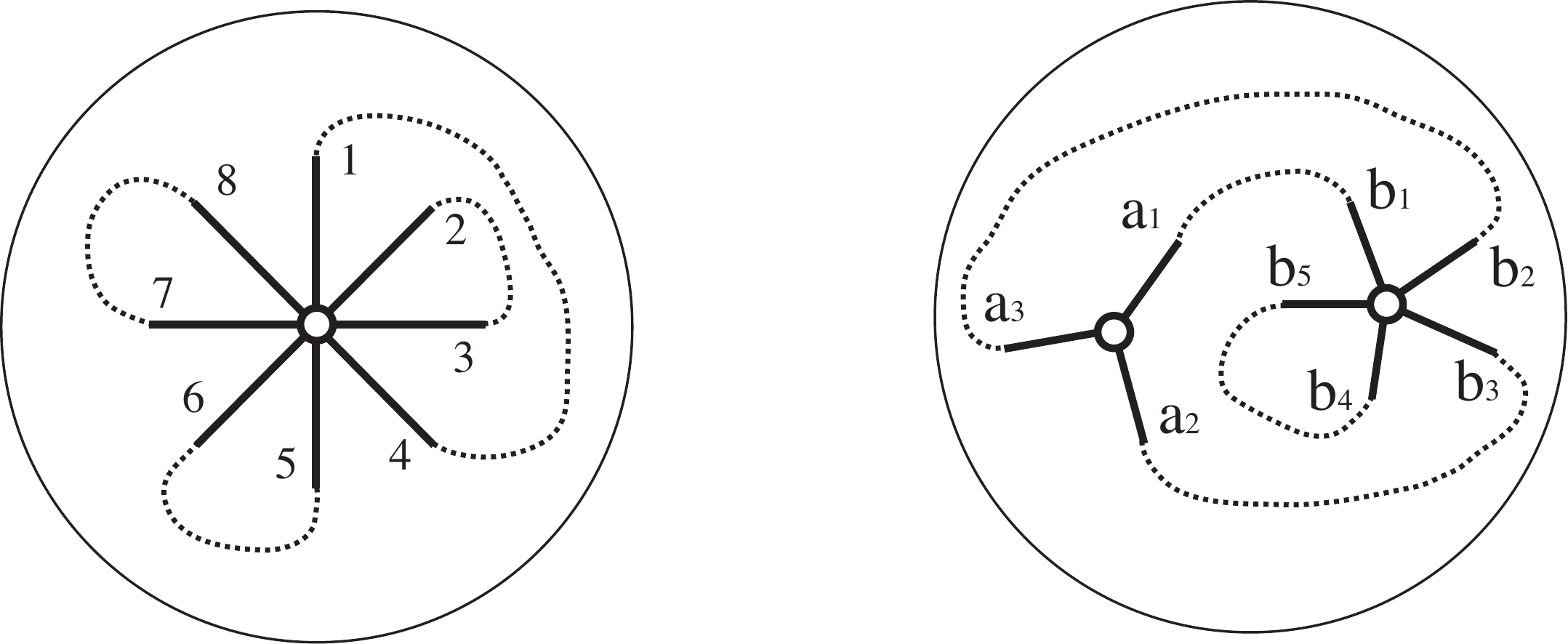}
\caption{A (ribbon) graph with  one oriented vertex of degree 8 on a sphere (left) and a graph with two oriented vertices of degrees 3 and 5 on a sphere (right).
Examples of complete connected
graphs are shown.  The edges drawn are to represent ribbons laying flat on the sphere.
\label{fig:vertex}}
\end{center}
\end{figure}

A fundamental result of random matrix theory is that for a quadratic potential $V_0(\lambda)$, $\log( Z^{(2)}_N)$ posseses an asymptotic expansion in even powers of ${N}$, whose terms give generating functions partitioning ribbon graphs by genus:
\begin{Theorem}[Bessis-Itzykson-Zuber \cite{BIZ}, Ercolani-McLaughlin \cite{EM-2003}] \label{BIZ-thm}
For the Gaussian unitary ensemble (GUE), we have
\[ \log\left[ Z_N^{(2)}\left( \frac{N}{2} \lambda^2; N \mathbf{T}\right) \Big/ Z_N^{(2)}\left(\frac{N}{2} \lambda^2; \mathbf{0}\right) \right] =
\sum_{g\geq 0} e_g(\mathbf{T}) N^{2-2g} \]
where
\[ e_g(\mathbf{T})
=\sum_{0\le j_1,j_2,\ldots}\kappa_g(j_1,j_2,\dots)\,\frac{T_1^{j_1}T_2^{j_2}\cdots}{j_1!j_2!\cdots}\,=\sum_{\mathbf{j}} \kappa_g(\mathbf{j}) \frac{\mathbf{T}^\mathbf{j}}{\mathbf{j}!} \,. \]
The coefficient $\kappa_g(\mathbf{j})$ gives the number of the connected ribbon graphs with $j_k$ labeled vertices of degree $k$ for $k=1,2,\ldots$ on a compact surface of genus $g$.
\end{Theorem}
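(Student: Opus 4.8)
The plan is to reduce the eigenvalue integral to a Hermitian matrix integral, extract the ribbon-graph content by a perturbative (Feynman-diagram) expansion governed by Wick's theorem, and finally upgrade the resulting formal series to a genuine asymptotic expansion in $N$ by Riemann--Hilbert analysis. First I would invoke the Weyl integration formula to write, up to a $\mathbf{T}$-independent constant that cancels in the ratio,
\[
Z_N^{(2)}\!\left(\tfrac{N}{2}\lambda^2;\,N\mathbf{T}\right)=c_N\int_{\mathcal{H}_N}dM\,\exp\!\left(-\tfrac{N}{2}\mathrm{Tr}\,M^2+N\sum_{k\ge1}T_k\,\mathrm{Tr}\,M^k\right),
\]
the integral running over the space $\mathcal{H}_N$ of $N\times N$ Hermitian matrices with flat measure. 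Dividing by the value at $\mathbf{T}=\mathbf{0}$ turns the ratio into the Gaussian expectation $\big\langle\exp(N\sum_k T_k\,\mathrm{Tr}\,M^k)\big\rangle$ taken against the weight $\exp(-\tfrac{N}{2}\mathrm{Tr}\,M^2)$, whose propagator is $\langle M_{ij}M_{kl}\rangle=\tfrac{1}{N}\delta_{il}\delta_{jk}$.

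Next I would expand the exponential as $\prod_k\sum_{j_k\ge0}\frac{(NT_k)^{j_k}}{j_k!}(\mathrm{Tr}\,M^k)^{j_k}$, so that the coefficient of $\prod_k\frac{T_k^{j_k}}{j_k!}$ equals $N^{\sum_k j_k}\big\langle\prod_k(\mathrm{Tr}\,M^k)^{j_k}\big\rangle$. Wick's theorem evaluates each Gaussian moment as a sum over pairings of the $M$-factors; representing each $\mathrm{Tr}\,M^k$ as a $k$-valent fat vertex with cyclically ordered half-edges and each contraction as a ribbon edge realizes every pairing as a labeled ribbon graph with $j_k$ vertices of degree $k$. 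The crucial bookkeeping is the 't~Hooft power count: a vertex contributes $N$, an edge contributes $N^{-1}$ through the propagator, and each closed index loop (face) contributes $N$, so a graph with $V$ vertices, $E$ edges, and $F$ faces weighs $N^{V-E+F}=N^{\chi}=N^{2-2g}$.

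Passing to the logarithm retains exactly the connected diagrams, i.e.\ replaces the moments by connected moments $\big\langle\prod_k(\mathrm{Tr}\,M^k)^{j_k}\big\rangle_c$. Collecting the powers of $N$ then yields $\log[\,\cdots\,]=\sum_{\mathbf{j}}\frac{\mathbf{T}^{\mathbf{j}}}{\mathbf{j}!}\sum_g(\#\,\text{connected genus-}g\text{ pairings})\,N^{2-2g}$, where the $\mathbf{j}!$ in the denominator is precisely the exponential-generating-function normalization accounting for the distinguishability (labeling) of the $j_k$ vertices of each degree. Identifying the connected genus-$g$ pairings with the connected labeled ribbon graphs of genus $g$ finishes the identification $\kappa_g(\mathbf{j})=\#\{\text{such graphs}\}$ and the formula for $e_g$. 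This is the content of the Bessis--Itzykson--Zuber computation, and establishes the statement at the level of formal power series in $\mathbf{T}$.

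The main obstacle is to prove that this formal expansion is a bona fide asymptotic expansion: that for $\mathbf{T}$ in a neighborhood of $\mathbf{0}$ there exist analytic functions $e_g(\mathbf{T})$ with $\log[\,\cdots\,]-\sum_{g=0}^{G}e_g(\mathbf{T})N^{2-2g}=O(N^{-2G})$ uniformly. For this I would use that $Z_N^{(2)}$ is the Toda $\tau$-function, so the orthogonal-polynomial recursion coefficients $a_n,b_n$ for the weight $\exp(-\tfrac{N}{2}\lambda^2+N\sum_k T_k\lambda^k)$ solve the Toda lattice and determine $\log Z_N^{(2)}$ through a discrete integration in $n$. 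A Deift--Zhou steepest-descent analysis of the associated Riemann--Hilbert problem then gives the large-$N$ asymptotics of $a_n,b_n$ with explicit error control uniform in the parameters, and resumming these asymptotics over $n$ produces the order-by-order expansion of the free energy with the claimed remainder. This analytic step is the substance of the Ercolani--McLaughlin contribution and completes the proof.
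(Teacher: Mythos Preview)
The paper does not supply its own proof of this theorem; it is quoted as a known result attributed to \cite{BIZ} and \cite{EM-2003} and used as input for the rest of the paper. Your proposal is a faithful outline of precisely the argument those references contain: the Weyl integration formula and Wick expansion with 't~Hooft power counting $N^{V-E+F}=N^{2-2g}$ is the Bessis--Itzykson--Zuber part, and the Deift--Zhou steepest-descent analysis of the orthogonal-polynomial Riemann--Hilbert problem to upgrade the formal series to a genuine asymptotic expansion is the Ercolani--McLaughlin part. So there is no discrepancy to report --- your approach \emph{is} the approach of the cited sources, and the paper itself offers nothing to compare against beyond the citation.

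One small point worth tightening in your write-up: the paper's convention is that both the vertices \emph{and} the half-edges at each vertex are labeled (see the discussion above Figure~1.1). In the Wick expansion this half-edge labeling is automatic, since the $k$ matrix factors in $\mathrm{Tr}\,M^k=\sum M_{i_1i_2}M_{i_2i_3}\cdots M_{i_ki_1}$ occupy distinguishable positions; you might make that explicit so that the identification of $\kappa_g(\mathbf{j})$ with the raw count of Wick pairings (no symmetry-factor division) is transparent.
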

Note that Theorem \ref{BIZ-thm} does not contain our $T_0 = \frac{n}{N} = n \hbar$ variable
in large $N$ limit.
The $T_0$-variable can be inserted naturally by taking the $\tau$-function in the form
with $\mathbf{t}=N\mathbf{T}$ (or $\mathbf{T}=\hbar\mathbf{t}$),
\[
\tau_n(\mathbf{t};\hbar)=Z_n^{(2)}\left( \frac{N}{2} \lambda^2 ; N \mathbf{T}\right)\,.
\]
Then the free energy is given by
\[
\log\left[\tau_n(\hbar^{-1}\mathbf{T};\hbar)\right]=\log\left[ Z_n^{(2)}\left( \frac{N}{2} \lambda^2; N \mathbf{T}\right) \Big/ Z_n^{(2)}\left(\frac{N}{2} \lambda^2; \mathbf{0}\right) \right] +\log\left[Z_n^{(2)}\left(\frac{N}{2} \lambda^2; \mathbf{0}\right)\right]\,.
\]
We note here that the scaling $\lambda\to\sqrt{T_0}\lambda$ leads to
\[
 Z_n^{(2)}\left( \frac{N}{2} \lambda^2 ; N \mathbf{T}\right)  = Z_n^{(2)}\left(  \frac{n}{2 T_0} \lambda^2;  \frac{n}{T_0} \mathbf{T} \right)
= T_0^{\frac{n^2}{2}}Z_n^{(2)}\left( \frac{n}{2} \lambda^2; n \mathbf{\hat{T}} \right)\,,
\]
where $\hat{T}_j = T_0^{j/2 -1} T_j $ (which is the so called Penner scaling).
This expression then has the asymptotic expansion including $T_0$,
\begin{align*}
\log\left[\tau_n(\hbar^{-1}\mathbf{T};\hbar)\right]&
= \sum_{g\geq 0} e_g( \mathbf{\hat{T}} ) n^{2-2g} +C(T_0;N)\\
&= \sum_{g\geq 0} T_0^{2-2g} e_g(\mathbf{\hat{T}} )\, N^{2-2g}  + C(T_0;N)\,,
\end{align*}
where $C(T_0;N)=\log[Z_n^{(2)}(\frac{N}{2}\lambda^2;\mathbf{0})]$ which can be computed directly (see Appendix A).
Note that this asymptotic expansion agrees (at leading order) with the assumption (\ref{asymptotic-assumption}); and thus gives a class of solutions of the Toda lattice satisfying this assumption.  Then
the leading order of this asymptotic expansion gives a solution of the dispersionless
Toda equation,
\begin{align*}
F(T_0, \mathbf{T}) &=\lim_{N\to\infty}\,\frac{1}{N^2}\,\log\left[Z_n^{(2)}\left(\frac{N}{2}\lambda^2;N\mathbf{T}\right)\right]\\
&= T_0^2 e_0( \hat{\mathbf{T}}) + C_0(T_0)\,,
\end{align*}
where $C_0(T_0)$ is the leading order of $C(T_0;N)$ for large limit of $N$, and
as shown in Appendix B, it is given by
\begin{equation}\label{C0Toda}
C_0(T_0)=\frac{1}{2}T_0^2\left(\log T_0-\frac{3}{2}\right)\,.
\end{equation}
This formula has also been found directly from the dToda equations in a connection to
the topological field theory (see (8.6) of p.209 in \cite{AK:96}).

In particular, Theorem \ref{BIZ-thm} implies that the two point function $F_{nm}(1; \mathbf{0})$
for $mn\ne0$
represents the number of connected ribbon graphs with two vertices of degrees $n$ and $m$
on a sphere $(g=0)$, i.e.
\[
F_{nm}(1;\mathbf{0})=\frac{\partial^2 F}{\partial T_n\partial T_m}(1;\mathbf{0})=
\frac{\partial^2 e_0}{\partial T_n\partial T_m}(\mathbf{0})=\kappa_0
(0,\ldots,1,0,\ldots,1,0,\ldots), \qquad n, m \ge1\,,
\]
where $1$'s in $\kappa_0(\mathbf{j})$ are at the $n$- and $m$-th places. We will give an explicit formula for $F_{nm}(1;\mathbf{0})$ in Section \ref{combin-dToda}.

%%%%%%%%%%%%%%%%   Enumerative meaning of T_0   %%%%%%%%%%%%%%%%%%%%%%%%%%%

We also attach an enumerative meaning to derivatives with respect to
$T_0$:
\begin{Corollary}\label{comb-F0}
Each derivative $F_{0,2k}$ for $k\ge 1$, corresponds to counting the number of connected ribbon graphs
with a vertex of degree $2k$ and a marked face on a sphere.
\end{Corollary}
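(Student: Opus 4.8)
The plan is to read the combinatorial content of $F_{0,2k}$ directly off the explicit formula $F(T_0,\mathbf{T})=T_0^2 e_0(\hat{\mathbf{T}})+C_0(T_0)$ together with the Penner scaling $\hat T_j=T_0^{\,j/2-1}T_j$. Since $C_0$ depends on $T_0$ alone it is killed by $\partial/\partial T_{2k}$ and contributes nothing. First I would substitute the scaling into the expansion $e_0(\hat{\mathbf{T}})=\sum_{\mathbf{j}}\kappa_0(\mathbf{j})\hat{\mathbf{T}}^{\mathbf{j}}/\mathbf{j}!$ and track the power of $T_0$ carried by each monomial. For a degree profile $\mathbf{j}$ set $V=\sum_k j_k$ (the number of vertices) and $E=\frac12\sum_k k\,j_k$ (the number of edges, each edge contributing $2$ to the total degree); then the exponent of $T_0$ produced by $\hat{\mathbf{T}}^{\mathbf{j}}$ equals $\sum_k(k/2-1)j_k=E-V$, so that
\[
T_0^2\,\hat{\mathbf{T}}^{\mathbf{j}} = T_0^{\,2+E-V}\,\mathbf{T}^{\mathbf{j}}.
\]

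The step carrying the geometry is Euler's formula: for a connected genus-$0$ ribbon graph $V-E+F=2$, so $2+E-V=F$ is exactly the number of faces. Hence
\[
T_0^2 e_0(\hat{\mathbf{T}}) = \sum_{\mathbf{j}}\kappa_0(\mathbf{j})\,\frac{T_0^{\,F(\mathbf{j})}\,\mathbf{T}^{\mathbf{j}}}{\mathbf{j}!},
\]
and the exponent of $T_0$ literally counts faces. Differentiating in $T_0$ then brings down the factor $F(\mathbf{j})$, which is exactly the act of distinguishing one of the $F(\mathbf{j})$ faces of each graph; this is the precise sense in which $\partial/\partial T_0$ ``marks a face''.

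It then remains to apply $\partial/\partial T_{2k}$ and evaluate at $(T_0,\mathbf{T})=(1,\mathbf{0})$, in parallel with the computation of $F_{nm}(1;\mathbf{0})$ above. Differentiating the series in $T_{2k}$ lowers $j_{2k}$ by one, and setting $\mathbf{T}=\mathbf{0}$ afterwards isolates the single multi-index $\boldsymbol{\delta}_{2k}$ having a $1$ in the $2k$-th slot and zeros elsewhere (one vertex of degree $2k$ and nothing else), since every monomial retaining a positive power of some $T_i$ vanishes. Combining with the $T_0$-derivative gives
\[
F_{0,2k}(1;\mathbf{0}) = F(\boldsymbol{\delta}_{2k})\,\kappa_0(\boldsymbol{\delta}_{2k}) = (k+1)\,\kappa_0(\boldsymbol{\delta}_{2k}),
\]
because a one-vertex graph of degree $2k$ has $V=1$, $E=k$, hence $F=k+1$ on the sphere.

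As every such graph has the same number $k+1$ of faces, multiplying $\kappa_0(\boldsymbol{\delta}_{2k})$ by $k+1$ enumerates exactly the pairs (graph, marked face), which is the asserted count. The only real obstacle is the Euler-characteristic bookkeeping in the first two steps; once the $T_0$-exponent is identified with the face number, the remainder is routine differentiation. The evenness of the degree is automatic: a single vertex carries only loops, so its degree $2E$ is even and $\kappa_0=0$ for odd degree, matching $F_{0,m}(1;\mathbf{0})=0$ there.
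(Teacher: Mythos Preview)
Your proof is correct and follows essentially the same approach as the paper: both expand $F=T_0^2 e_0(\hat{\mathbf T})+C_0(T_0)$ via the Penner scaling, identify the $T_0$-exponent $2-V+E$ with the face number $f$ through Euler's formula on the sphere, and then read off that $\partial/\partial T_0$ marks a face while $\partial/\partial T_{2k}$ at $\mathbf T=\mathbf 0$ isolates the single-vertex graphs of degree $2k$. Your write-up is in fact more explicit than the paper's about the evaluation at $(T_0,\mathbf T)=(1,\mathbf 0)$ and about why $C_0$ drops out under the mixed derivative.
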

\begin{proof}
We start with our relation
\begin{align}
F(T_0; \mathbf{T}) &= T_0^2 e_0(\hat{\mathbf{T}})+C_0(T_0) \label{T0meaning}\\
&= \sum_{\mathbf{j}} \kappa_0(\mathbf{j}) T_0^2
\frac{\hat{\mathbf{T}}^\mathbf{j} }{\mathbf{j}!} +C_0(T_0)\nonumber \\
&= \sum_{\mathbf{j}} \kappa_0(\mathbf{j}) T_0^{2 - \sum_i j_i + \frac{1}{2}\sum_i i j_i }
\frac{\mathbf{T}^\mathbf{j}}{\mathbf{j}! } +C_0(T_0).\nonumber
\end{align}
One then notes that the number of vertices of a ribbon graph with vertices of
type $\mathbf{j} = (j_1, j_2, \dots)$ is $v = \sum_i j_i $ and the number of edges is $e = \frac{1}{2}\sum_i
i j_i $.  If the ribbon graph is of genus 0, then by the Euler
characteristic formula, the number of faces is
\begin{equation*}
f = 2 - v + e = 2 - \sum_i j_i + \frac{1}{2}\sum_i j_i i
\end{equation*}
so that by inserting this equation into (\ref{T0meaning}), we find
\begin{equation}\label{Ufree}
 F(T_0; \mathbf{T}) = \sum_{\mathbf{j}} \kappa_0(\mathbf{j}) T_0^f
\frac{\mathbf{T}^\mathbf{j}}{\mathbf{j}!} +C_0(T_0)\,,
\end{equation}
from which one sees that differentiating with respect to $T_0$ is related to
counting ribbon graphs with marked faces in the enumerative part (i.e. except
for the contributions coming from $C_0$).  For example:
$F_{0, 2k}(1; \mathbf{0})$ with $k\ne 0$ is the number of ribbon graphs with a single vertex
of degree $2k$ and one of the $f= 1 + k$ faces marked.
These numbers will also be found in Theorem \ref{Fformula}
as a consequence of the dToda hierarchy.
\end{proof}

%%%%%%%%%%%%%%%%%%%%%%%%%%%%

\subsection{The Pfaff lattice hierarchy}
The Pfaff lattice hierarchy is normally referred to as the DKP hierarchy (The ``D" stands for
the D-type Lie group which is the symmetry group for the hierarchy \cite{JM:83}).
The first member of the DKP hierarchy, called the DKP equation,
 is given by the set of equations
\begin{equation}\label{DKPequation}\left\{\begin{array}{lllll}
\displaystyle{\frac{\partial}{\partial t_1}\left(-4\frac{\partial u}{\partial t_3}+\frac{\partial^3u}{\partial t_1^3}+12u\frac{\partial u}{\partial t_1}\right)+3\frac{\partial^2u}{\partial t_2^2}=12\frac{\partial^2}{\partial t_1^2}(v^+v^-)}\\[2.5ex]
\displaystyle{2\frac{\partial v^{\pm}}{\partial t_3}+\frac{\partial^3 v^{\pm}}{\partial t_1^3}+6u\frac{\partial v^{\pm}}{\partial t_1}
\mp 3\left(\frac{\partial^2 v^{\pm}}{\partial t_1\partial t_2}+2 v^{\pm}\int^{t_1}\frac{\partial u}{\partial t_2}\,dt_1\right)=0}  \,.
\end{array}\right.
\end{equation}
The left hand side of the first equation is just the KP equation, and the right hand side
gives a coupling term with the field $v^{\pm}$. Because of this, the DKP equation is
sometimes called a coupled KP equation \cite{HO:91}.

In terms of the $\tau$-functions, $u$ and $v^{\pm}$ are expressed as
\begin{equation}\label{uv}
u=\frac{\partial^2}{\partial t_1^2}\log \tau_n,\qquad v^{\pm}= \frac{\tau_{n\pm 1}}{\tau_n}\,.
\end{equation}
The DKP equation is then given by
\begin{equation}\label{Pfaff}
\left\{\begin{array}{lll}
\displaystyle{(-4D_1D_3+D_1^4+3D_2^2)\tau_n\cdot\tau_n=24 \tau_{n+1}\tau_{n-1}}\\[1.5ex]
\displaystyle{(2D_3+D_1^3\mp 3D_1D_2)\tau_{n\pm 1}\cdot\tau_{n}=0}\,.
\end{array}\right.
\end{equation}
It is known that the $\tau$-functions are given by Pfaffians \cite{HO:91, ASvM:02, KP:07}.
Since the equation with the set of $\tau$-functions has a lattice structure (i.e. $\tau_{n}$ is determined by the  previous $\tau$-functions with $\tau_0=1$),
we call it the {\it Pfaff lattice} in this paper.  This choice of name is also convenient for this paper as we wish to compare the Pfaff lattice, as given in \cite{ASvM:02, KP:07}, and the Toda lattice.  The Pfaff lattice hierarchy is then given by \cite{AvM:02, T:07}
\[\left\{\begin{array}{llll}
\left(h_{k+4}(\tilde{\mathbf{D}})-\frac{1}{2}D_1D_{k+3}\right)\tau_n\cdot\tau_n=h_{k}(\tilde{\mathbf{D}})\,\tau_{n+1}\cdot\tau_{n-1},\\[2.0ex]
\quad h_{k+3}(\mp \tilde{\mathbf{D}})\,\tau_{n\pm 1}\cdot\tau_{n}=0,
\end{array} \qquad k=0,1,2,\ldots.\right.
\]

In a similar manner as in the Toda lattice case, one can also consider the dispersionless limit of the Pfaff
lattice equation. The dispersionless Pfaff (dPfaff) lattice equation is then given by
\begin{equation}\label{dDKPequation}
\left\{\begin{array}{lll}
\displaystyle{-4F_{13}+6F_{11}^2+3F_{22}=12e^{F_{00}}}\\[1.5ex]
\displaystyle{2F_{03}+F_{01}^3+6F_{01}F_{11}-3F_{01}F_{02}-6F_{12}=0}\,.
\end{array}\right.
\end{equation}
Here we have used the limits, $v^{\pm}=\exp(\log\tau_{n\pm 1}-\log\tau_n)\to \exp(\pm \hbar^{-1}F_0)$
 and $v^+v^-=\exp(\log\tau_{n+1}-2\log\tau_n+\log\tau_{n-1})\to  \exp(F_{00})$ as $\hbar\to 0$.

 In \cite{T:07}, Takasaki formulates the dispersionless DKP (Pfaff lattice) hierarchy by taking the dispersionless limit in the differential Fay identities of the Pfaff lattice hierarchy:
 \begin{equation}\label{dDKP-Fay}
 \left\{\begin{array}{llll}
  \displaystyle{e^{D(\lambda)D(\mu)F}\left(\lambda+\mu-(D(\lambda)+D(\mu))F_1-F_{01}\right)=
 \frac{\lambda^2e^{-D(\lambda)F_0}-\mu^2e^{-D(\mu)F_0}}{\lambda-\mu}},\\[2.5ex]
 \displaystyle{ e^{-D(\lambda)D(\mu)F}\left(
 1-\frac{(D(\lambda)-D(\mu))F_1}{\lambda-\mu}\right)=
 1-\frac{e^{F_{00}}}{\lambda^2\mu^2}e^{(D(\lambda)+D(\mu))F_0}}\,.
 \end{array}\right.
 \end{equation}
This equation is somewhat similar to the dToda hierarchy \eqref{dToda-Fay}, and in fact, this similarity
is one of our motivations to consider the universality between dPfaff and dToda hierarchies.

%%%%%%%%%%%%%%%%%%%%%%%%%%

\subsection{Orthogonal and symplectic ensembles of random matrices}\label{OSensemble}
Two special classes of solutions of the Pfaff lattice hierarchy are given by the partition functions of the orthogonal and symplectic ensembles of random matrices (see for examples \cite{K:99, AvM:02, KP:07}).
We denote those partition functions by $Z^{(\beta)}$ with $\beta=1$ for the orthogonal ensembles
and $\beta=4$ for the symplectic ensembles. With this notation, the partition function (or $\tau$-function)
for the Toda hierarchy is denoted by $Z^{(2)}$, i.e. $\beta=2$.

\subsubsection{Orthogonal ensemble}
In this case, the partition function is taken to be
\[
 Z_{2n}^{(1)}(V_0(\lambda); \mathbf{t}) =\int_{\mathbb{R}}{d\lambda_1}\cdots\int_{\mathbb{R}}d\lambda_{2n}\,\prod_{i<j}
|\lambda_i- \lambda_j|\, \exp\left( - \sum_{k=1}^{2n} V_{\mathbf{t}}(\lambda_k) \right)\,,\]
where
$ V_{\mathbf{t}}(\lambda_k)=V_0(\lambda_k)-\sum_{j=1}^{\infty}\lambda_k^jt_j$ and $V_0(\lambda)$ is a polynomial of even degree.
This integral represents integration over the eigenvalues of the random matrices in the orthogonal ensemble, and as in the case of unitary ensemble, its Taylor expansion in $\mathbf{t}$ gives the moments of traces of powers of the matrix.

This partition function is also related to
 the combinatorial problem of counting ribbon graphs on a compact surface.
In this case we will consider M\"obius graphs, which are defined to be one-complexes embedded in an unoriented surface of Euler characteristic $\chi$ such that the
complement of the complex is a collection of sets homeomorphic to discs.  We may think of a M\"obius graph as a ribbon graph
where the ribbons are allowed to be glued together with a twist, i.e. reversing the local orientation of the vertices.  Our M\"obius graphs are
labeled in the same manner as the ribbon graphs.
We introduce M\"obius graphs here for completeness sake, however in this paper
we consider only those
M\"obius graphs of Euler characteristic $2$, such graphs are equivalent, up to
a choice of local orientations at the vertices,  to ribbon graphs
embedded in the sphere (see Lemma \ref{mobius-is-ribbon}),  so we will
not encumber our discussion here with examples of M\"obius graphs.
One may see \cite{MW-2003} for a discussion of M\"obius graphs with Euler
characteristics less than 2.

For a quadratic potential $V_0(\lambda)$, we have an asymptotic expansion for $\log(Z^{(1)}_{2n})$ of the
same basic character as that of the unitary ensembles:
\begin{Theorem}[Goulden-Harer-Jackson \cite{GHJ}] \label{ghj}
For the Gaussian orthogonal ensemble (GOE), we have
\[ \log\left[ Z_{2N}^{(1)}\left( \frac{N}{2} \lambda^2; 2N\mathbf{T} \right) \Big/
  Z_{2N}^{(1)}\left(\frac{N}{2} \lambda^2; \mathbf{0}\right)\right]
 = \sum_{\chi\leq 2} E^{(1)}_\chi(\mathbf{T}) (2N)^{\chi} \]
where
\[ E^{(1)}_\chi(\mathbf{T}) = \sum_{\mathbf{j}}
 \mathcal{K}_\chi(\mathbf{j}) \frac{\mathbf{T}^\mathbf{j}}{\mathbf{j}!}
 \,, \]
and $\mathcal{K}_\chi(\mathbf{j}) $ is the number of connected M\"obius
 graphs with $j_k$ labeled vertices of degree $k$ for $k=1,2,\ldots,$ on a compact surface of Euler characteristic $\chi$.
\end{Theorem}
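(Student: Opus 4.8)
The plan is to follow the matrix-model (Feynman diagram) derivation, which is the orthogonal-ensemble analogue of the Bessis--Itzykson--Zuber computation behind Theorem \ref{BIZ-thm}. First I would rewrite the eigenvalue integral $Z_{2N}^{(1)}$ as an integral over the space of real symmetric $2N\times 2N$ matrices $M$, since the weight $\prod_{i<j}|\lambda_i-\lambda_j|$ is precisely the Jacobian (Weyl integration factor) of the diagonalization map with Vandermonde exponent $\beta=1$; up to the volume of $O(2N)$, which cancels in the ratio, this turns the partition function into $\int dM\,\exp(-\mathrm{Tr}\,V_{\mathbf t}(M))$. With the Gaussian choice $V_0=\frac N2\lambda^2$ and the scaling $t_j=2N T_j$, the action splits as $\frac N2\mathrm{Tr}(M^2)-2N\sum_j T_j\,\mathrm{Tr}(M^j)$, so the quotient by $Z_{2N}^{(1)}(\frac N2\lambda^2;\mathbf 0)$ becomes the Gaussian expectation $\langle\exp(2N\sum_j T_j\,\mathrm{Tr}(M^j))\rangle$ over the GOE measure.

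Next I would expand the exponential as a formal power series in the $T_j$ and evaluate each monomial by Wick's theorem. Expanding $\prod_k (2N T_k\,\mathrm{Tr}(M^k))^{j_k}/j_k!$ produces $V=\sum_k j_k$ vertices, a prefactor $(2N)^V\,\mathbf T^{\mathbf j}/\mathbf j!$, and one degree-$k$ ``fat vertex'' for each factor $\mathrm{Tr}(M^k)=\sum M_{a_1a_2}\cdots M_{a_ka_1}$, whose cyclic index structure endows the vertex with a ribbon and ordered half-edges. The essential feature of the orthogonal case is the GOE propagator
\[
\langle M_{ab}M_{cd}\rangle=\frac{1}{2N}\bigl(\delta_{ac}\delta_{bd}+\delta_{ad}\delta_{bc}\bigr),
\]
whose \emph{two} terms encode the two ways of gluing a ribbon edge: the first glues without a twist (orientable), while the second reverses the local orientation (a twisted band, i.e. a cross-cap). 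Summing over all Wick pairings therefore organizes the expansion into a sum over M\"obius graphs, and the twisted edges produce non-orientable surfaces. This is exactly why $\chi$ here runs over all integers $\le 2$ rather than only the even values $2-2g$ of the unitary case.

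The power of $2N$ is then fixed by 't Hooft power counting. Each vertex carries a factor $2N$ (from $2N T_k$), each propagator carries $(2N)^{-1}$, and each closed index loop (face) contributes a free sum over $2N$ indices, hence a factor $2N$. A graph with $V$ vertices, $E$ edges and $F$ faces therefore weighs $(2N)^{V-E+F}=(2N)^\chi$, where $\chi=V-E+F$ is the Euler characteristic of the (possibly non-orientable) surface on which the fat graph is embedded, and the weight depends only on this topology since both propagator terms share the coefficient $\tfrac{1}{2N}$. Taking the logarithm restricts the sum to connected diagrams by the linked-cluster theorem; the Taylor coefficient $1/\mathbf j!$ together with the labeling conventions (distinct vertices, distinct half-edges at each vertex) identifies the coefficient of $(2N)^\chi\,\mathbf T^{\mathbf j}/\mathbf j!$ as precisely the number $\mathcal K_\chi(\mathbf j)$ of connected labeled M\"obius graphs with $j_k$ vertices of degree $k$ on a surface of Euler characteristic $\chi$.

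I expect the main obstacle to be the analytic justification that this formal series is a genuine (Poincar\'e) asymptotic expansion of $\log Z_{2N}^{(1)}$ uniform in the prescribed regime, with no non-perturbative contributions; unlike the combinatorial bookkeeping, this requires the kind of Riemann--Hilbert / steepest-descent analysis that Ercolani and McLaughlin carried out in the unitary case, adapted to the Pfaffian structure of the $\beta=1$ ensemble. A secondary subtlety is the topological accounting of the twisted contractions: one must verify carefully that inserting a twist lowers the Euler characteristic by the expected amount (adjoining a cross-cap), so that the exponent is indeed $\chi=V-E+F$ and the sum over all contraction patterns reproduces each M\"obius graph exactly once.
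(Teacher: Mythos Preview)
The paper does not supply its own proof of this theorem: it is quoted as a result of Goulden--Harer--Jackson \cite{GHJ} and used as input, exactly as Theorem~\ref{BIZ-thm} is quoted from \cite{BIZ, EM-2003}. So there is no ``paper's proof'' to compare against.

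That said, your sketch is the standard and correct route to the formal statement, and it is essentially the argument in \cite{GHJ} and \cite{MW-2003}: rewrite the eigenvalue integral as an integral over real symmetric matrices, expand the interaction, apply Wick with the two-term GOE propagator, and read off the $(2N)^{V-E+F}$ weight together with the M\"obius-graph combinatorics from the twisted contractions. Your identification of the two subtleties is also accurate. The topological bookkeeping (that the twisted term really corresponds to attaching a cross-cap and that every labeled M\"obius graph is counted exactly once) is handled carefully in \cite{MW-2003}; the analytic justification that the formal series is a genuine asymptotic expansion is, as you note, the harder point and is not addressed in \cite{GHJ} or in the present paper --- for $\beta=2$ it is the content of \cite{EM-2003}, and the $\beta=1,4$ analogues require separate work. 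For the purposes of this paper only the leading $\chi=2$ term is used, where the issue is milder.
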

We then take our $\tau$-function in the form with $T_0=\frac{n}{N}=n\hbar$,
\[
\tau_{n}(\hbar^{-1}\mathbf{T};\hbar)=Z_{2n}^{(1)}\left(\frac{N}{4} \lambda^2;  N \mathbf{T} \right)
= (2T_0)^{n^2+\frac{n}{2}}Z_{2n}^{(1)}\left( \frac{n}{2} \lambda^2; 2n\hat{\mathbf{T}}\right)   \,,
 \]
where $\mathbf{\hat{T}}$ is given by the Penner scaling, i.e.
$\hat{\mathbf{T}}=(\hat{T}_1,\hat{T}_2,\ldots)$ with $\hat{T}_j=(2T_0)^{\frac{j}{2}-1}T_j$.  This expression then has the asymptotic expansion
\begin{align*}
\log\left[ \tau_n(\hbar^{-1}\mathbf{T};\hbar) \right]
&= \sum_{\chi \leq 2} E^{(1)}_\chi( {\hat{\mathbf{T}}}) (2n)^{\chi} + C^{(1)}(T_0;N) \\
&= \sum_{\chi \leq 2} (2 T_0)^{\chi} E^{(1)}_\chi( {\hat{\mathbf{T}}})N^{\chi} +C^{(1)}(T_0;N)\,,
\end{align*}
where $C^{(1)}(T_0;N)=\log[Z_{2n}^{(1)}(\frac{N}{4}\lambda^2;\mathbf{0})]$.
Note that the form of this asymptotic expansion agrees (at leading order) with the assumption (\ref{asymptotic-assumption}).  For future reference, for this initial condition, the leading order of
the expansion gives a solution of the dPfaff hierarchy,
\begin{align*}
 F^{(1)}(T_0, \mathbf{T}) &=\lim_{N\to\infty}\,\frac{1}{N^2}\,\log\left[ Z_{2n}^{(1)}\left(\frac{N}{2} \lambda^2; N \mathbf{T}\right) \right]\\
 &= 4 T_0^2 E_2^{(1)}( \hat{\mathbf{T}})+C^{(1)}_0(T_0) \,,
\end{align*}
where $C_0^{(1)}(T_0)$ is the leading order of $C_0^{(1)}(T_0;N)$. In Appendix B, we explicitly calculate $C_0^{(1)}(T_0)$, and it gives
\[
C_0^{(1)}(T_0)=T_0^2\log(2T_0)-\frac{3}{2}T_0^2=\frac{1}{2}C_0^{(2)}(2 T_0)\,,
\]
where $C^{(2)}_0(T_0)=C_0(T_0)$ given in (\ref{C0Toda}).
The relation with $C^{(2)}(T_0)$ of the dToda hierarchy is a consequence of the universality, which will be discussed in Section
\ref{dToda-dPfaff}.

With $\hat{T}_j = (2T_0)^{\frac{j}{2}-1} T_j$, we have the expansion formula for the solution of the dPfaff hierarchy
corresponding to the Gaussian orthogonal ensemble,
 \begin{align}
  F^{(1)}(T_0, \mathbf{T}) &= (2T_0)^2 E_2^{(1)}(\hat{\mathbf{T}}) + C^{(1)}_0(T_0)\label{Ofree} \\
  &= (2 T_0)^2 \sum_{\mathbf{j}} \mathcal{K}_2(\mathbf{j}) (2 T_0)^{e-v} \frac{\mathbf{T}^\mathbf{j}}{\mathbf{j}!} + C^{(1)}_0(T_0)\nonumber \\
  &= \sum_{\mathbf{j}} \mathcal{K}_2(\mathbf{j}) (2T_0)^f \frac{\mathbf{T}^\mathbf{j}}{\mathbf{j}!} + C^{(1)}_0(T_0)\nonumber \,.
   \end{align}
This formula then gives the combinatorial meaning of the solution of the dPfaff hierarchy
(also notice the similarity with the formula (\ref{Ufree}) for the dToda hierarchy.

\subsubsection{Symplectic ensemble}
In this case,  the  partition function is given by
\[
Z_{n}^{(4)}(V_0(\lambda); \mathbf{t}) =\int_{\mathbb{R}}{d\lambda_1}\cdots\int_{\mathbb{R}}d\lambda_{n}\,\prod_{i<j} |\lambda_i - \lambda_j|^4\, \exp\left( - \sum_{k=1}^{n}2 V_{\mathbf{t}}(\lambda_k) \right)\,,\]
where
$ V_{\mathbf{t}}(\lambda_k)=V_0(\lambda_k)-\sum_{j=1}^{\infty}\lambda_k^jt_j$ and $V_0(\lambda)$ is a polynomial of even degree.
This function represents integration over the eigenvalues of the random matrices in the symplectic ensemble, and the moments of traces of powers of the matrix are obtained in its Taylor expansion in $\mathbf{t}$.

Then the corresponding combinatorial result is give by the following Theorem:
\begin{Theorem}[Mulase-Waldron \cite{MW-2003}] \label{mw}
For the Gaussian symplectic ensemble (GSE), we have
\[ \log\left[ Z_N^{(4)}\left( \frac{N}{2} \lambda^2; 2 N\mathbf{T}\right) \Big/ Z_N^{(4)}\left(\frac{N}{2}\lambda^2; 0\right)\right] = \sum_{\chi\leq 2} E^{(1)}_\chi(\mathbf{T}) (-2N)^{\chi}\,,
\]
where $E^{(1)}_\chi(\mathbf{T})$ is defined as in Theorem \ref{ghj}.
\end{Theorem}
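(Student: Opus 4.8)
The plan is to obtain Theorem \ref{mw} directly from the orthogonal result already in hand, Theorem \ref{ghj}, by establishing an exact $N \mapsto -N$ duality between the symplectic and orthogonal partition functions. Concretely, I would show that the Gaussian symplectic free energy, regarded as a function of the single expansion parameter $2N$, is the formal analytic continuation $2N \mapsto -2N$ of the Gaussian orthogonal free energy. Since the combinatorial coefficients $E^{(1)}_\chi(\mathbf{T})$ are pure graph counts independent of $N$, only the weight $(2N)^\chi$ of each term is affected, becoming $(-2N)^\chi$, which is exactly the assertion of the theorem.

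First I would pass from the eigenvalue integrals to the underlying matrix integrals, which differ from the stated integrals only by an $\mathbf{T}$-independent group-volume constant that cancels in the ratio. By the Weyl integration formula, $Z^{(1)}_{2N}$ is the Gaussian integral of $e^{-\mathrm{tr}(\frac{N}{2}M^2) + \sum_j t_j \mathrm{tr}(M^j)}$ over the real symmetric $2N\times 2N$ matrices, while $Z^{(4)}_N$ is the corresponding integral over the $N\times N$ quaternion self-dual matrices, realized as $2N\times 2N$ complex matrices compatible with the symplectic form $J$ (so the effective dimension is again $2N$, matching the Kramers degeneracy). Applying Wick's theorem to each, I would expand $e^{\sum_j t_j\mathrm{tr}(M^j)}$ and contract in pairs. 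In both ensembles the resulting terms are indexed by the \emph{same} set of labelled Möbius graphs: degree-$k$ vertices arise from $\mathrm{tr}(M^k)$, and edges arise from pairwise contractions that, for a symmetric or self-dual matrix, may glue two half-edges with either local orientation, producing precisely the possibly non-orientable ribbon graphs counted by $\mathcal{K}_\chi(\mathbf{j})$.

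The key step is that the two expansions differ only in the invariant bilinear form used to contract indices: the symmetric form $\delta_{ij}$ for the orthogonal ensemble versus the antisymmetric symplectic form $J_{ij}$ for the symplectic ensemble. I would then invoke the $O$--$Sp$ ``negative-dimension'' identity: contracting a Möbius graph with $J$ produces the same amplitude as contracting with $\delta$ after the substitution $2N \mapsto -2N$. Granting this, every connected Möbius graph of Euler characteristic $\chi$ and vertex type $\mathbf{j}$ contributes $\mathcal{K}_\chi(\mathbf{j})\,(-2N)^\chi\,\mathbf{T}^{\mathbf{j}}/\mathbf{j}!$ to $\log Z^{(4)}_N$; summing over connected graphs gives $\sum_{\chi\le 2} E^{(1)}_\chi(\mathbf{T})(-2N)^\chi$, and the $\mathbf{T}$-independent normalization $Z^{(4)}_N(\frac{N}{2}\lambda^2;0)$ obeys the same duality and is divided out exactly as in Theorem \ref{ghj}. (An alternative, equivalent route is to establish the same $N\mapsto -N$ relation between the de Bruijn Pfaffian representations of $Z^{(1)}$ and $Z^{(4)}$ directly.)

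The main obstacle is the sign bookkeeping in the quaternionic Wick calculus: one must show that the total sign produced by the symplectic form across all index contractions of a given Möbius graph is a topological invariant equal to $(-1)^\chi$. The subtlety is that the number of orientation-reversing (twisted) edges in a drawing is not itself invariant under changing the local orientation at a vertex; one must prove that the sign picked up from reorienting a vertex in $\mathrm{tr}(M^k)$ exactly compensates the signs on its incident edges, so that the product of all vertex-trace and edge-contraction signs is well defined and equals $(-1)^\chi$ — orientable contributions surviving unchanged while each cross-cap contributes a factor $-1$. This topological sign identity is the heart of the Mulase--Waldron argument; once it is in place, the theorem follows immediately from Theorem \ref{ghj}.
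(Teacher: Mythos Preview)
The paper does not give its own proof of this statement: Theorem~\ref{mw} is quoted as a result of Mulase--Waldron \cite{MW-2003}, with only the parenthetical remark that it also follows from Theorem~4.1 of \cite{BP:08}, and a subsequent Remark noting the $N\mapsto -N$ duality between the GOE and GSE expansions. There is nothing in the paper to compare your argument against.

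That said, your sketch is precisely the Mulase--Waldron argument itself: pass to the matrix integrals, expand by Wick's theorem, observe that both the $\beta=1$ and $\beta=4$ expansions are indexed by the same labelled M\"obius graphs, and then show that replacing the symmetric contraction $\delta_{ij}$ by the symplectic form $J_{ij}$ multiplies the contribution of a graph of Euler characteristic $\chi$ by $(-1)^\chi$, i.e.\ effects $2N\mapsto -2N$. You correctly identify the only nontrivial point---that the aggregate sign from the $J$-contractions is a topological invariant equal to $(-1)^\chi$ rather than depending on the particular drawing---and correctly flag it as the heart of the proof. The alternative you mention via de~Bruijn Pfaffians is exactly the route taken in \cite{BP:08}, which the paper also cites. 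So your proposal is sound and faithful to the literature; it simply reproduces the cited proof rather than offering something new, which is all one can do here since the paper supplies no independent argument.
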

This Theorem can also be found from Theorem 4.1 of \cite{BP:08}, in particular the $1/(4N)^{n-m}$ appearing in that theorem leads to the $2N \mathbf{T}$ in the above formula of Theorem \ref{mw}.
\begin{Remark} There is a duality in the Gaussian orthogonal and symplectic ensembles.
Theorems \ref{ghj} and \ref{mw} give the $N\mapsto -N$ duality between the
partition functions of the Gaussian orthogonal and symplectic ensembles of
random matrices.  In general the even terms of the asymptotic expansions agree,
while the odd terms have opposite signs.  See the discussions in \cite{MW-2003, BP:08}.
\end{Remark}

The $\tau$-function in this case is defined by
\[
\tau_n\left(\hbar^{-1}\mathbf{T};\hbar\right)=Z_n^{(4)}\left(\frac{N}{4} \lambda^2; N\mathbf{T}\right)
=(2T_0)^{n^2-\frac{n}{2}}Z_n^{(4)}\left(\frac{n}{2}\lambda^2;2n\hat{\mathbf{T}}\right)
\,,
\]
where $\hat{\mathbf{T}}$ is the Penner scaling defined in the same form as in the case of orthogonal ensemble, i.e. $\hat{\mathbf{T}}=(\hat T_1,\hat T_2,\ldots)$ with $\hat{T}_j = (2T_0)^{\frac{j}{2}-1} T_j$.  Then the free energy is given by
\begin{align*}
\log\left[\tau_n\left(\hbar^{-1}\mathbf{T};\hbar\right)\right]&=
\log\left[ Z_n^{(4)}\left( \frac{N}{4} \lambda^2 ; N \mathbf{T}\right) \Big/ Z_n^{(4)}\left( \frac{N}{4} \lambda^2; \mathbf{0}\right) \right] + C^{(4)}(T_0;N) \\
&=\log\left[ Z_n^{(4)}\left( \frac{n}{2} \lambda^2; 2 n \hat{\mathbf{T}}\right)\Big/ Z_n^{(4)}\left( \frac{n}{2} \lambda^2; 2n \hat{\mathbf{T}}\right) \right] + C^{(4)}(T_0;N) \\ &=
\sum_{\chi \leq 2} (-2T_0)^{\chi} E^{(1)}_\chi(\hat{\mathbf{T}}) N^{\chi} + C^{(4)}(T_0;N)\,,
\end{align*}
 where $C^{(4)}(T_0;N)=\log[Z_n^{(4)}(\frac{N}{4}\lambda^2;\mathbf{0})]$.  In the limit $N\to\infty$,
 the leading order of the free energy is given by
  \begin{align}\label{Sfree}
 F^{(4)}(T_0,\mathbf{T})&=
\lim_{N\to\infty}\frac{1}{N^2}\log\left[Z_n^{(4)}\left(\frac{N}{4}\lambda^2;N\mathbf{T}\right)\right]\\
&=(2T_0)^2E_2^{(1)}(\hat{\mathbf{T}})+C^{(4)}_0(T_0)\,,\nonumber
 \end{align}
 where $C^{(4)}_0(T_0)$ is the leading order of $C^{(4)}(T_0;N)$, and in Appendix B
 it is found to be the same as $C_0^{(1)}(T_0)$. This implies that with (\ref{Ofree})
 we have the duality of the free energies, i.e.
 \[
 F^{(1)}(T_0,\mathbf{T})=F^{(4)}(T_0,\mathbf{T}).
 \]
In Section \ref{dToda-dPfaff}, we further discuss the universality for those free energies
corresponding to the GUE, GOE and GSE random matrix theories.

%%%%%%%%%%%%%%%%%%%%%%%%%%%%%%%%%%%%%%%%%%%%%

\subsection{Outline of the paper}
The paper is organized as follows: In Section \ref{S:dToda}, we first show
that the dToda hierarchy
is determined by variables $F_{00}$ and $F_{01}$ (Proposition
\ref{F00F01}). Then we discuss some properties of $F_{mn}$ (Proposition
\ref{parity-proposition}). In particular, we show that $F_{1n}$ are related to
the Catalan numbers, which gives
a combinatorial meaning of the dToda hierarchy (Proposition \ref{CatalanToda}).
In Section \ref{combin-dToda}, we first derive the explicit formulae for the two
point functions $F_{nm}$
for the dToda hierarchy under the conditions of $F_{01}=0$ and $F_{11}=1$ (Theorem \ref{Fformula}), and
then discuss the combinatorial description of $F_{mn}$ as the enumeration of ribbon graphs on a sphere
(Proposition \ref{FmnGraph}). We also show that the combinatorial meaning of
$F_{mn}$ can be directly obtained from the dToda hierarchy (Theorem \ref{GFcomb}).
In Section \ref{dPfaffhierarchy}, we describe the dPfaff hierarchy. In
particular, we show that under the assumption of the special dependency of the free energy $F$ in the variables
$(T_0,\mathbf{T})$, the dPfaff hierarchy can be reduced to the
dToda hierarchy (Proposition \ref{SR-Todaconnection}). This corresponds to the
dispersionless limit
of the Pfaff lattice hierarchy restricted to symplectic matrices  \cite{KP:08}.
In Section \ref{dToda-dPfaff}, we first show the universality among the free energies for the
GUE, GOE and GSE in the leading order of large $N$ expansions (Proposition \ref{UOSrelations}).
Then we show in general that the solution of the dToda hierarchy also satisfies the dPfaff
hierarchy under the rescaling of the two point functions $F_{mn}$ (Theorem \ref{DKP-Toda}).
This means that the dPfaff hierarchy contains the dToda hierarchy, and
shows the universality of the dToda hierarchy at the leading orders among the unitary, orthogonal and symplectic ensembles in random
matrix theory. In Appendix A, we give a brief introduction of the dKP theory, and
derive the dKP hierarchy (\ref{Cauchykernel}) which plays an important role in this paper.
We also give in Appendix B explicit computations of the terms $C_0^{(\beta)}(T_0)$
for $\beta=1,2$ and $4$.

\section{The dispersionless Toda hierarchy}\label{S:dToda}
In this section, we first show that the dToda hierarchy \eqref{dToda-Fay} defines a deformation
of the algebraic
curve given by \eqref{todacurve}, and as a result we show that the two point functions $F_{nm}$ are
determined by
$F_{01}$ and $F_{00}$.   We also discuss a connection of the dToda hierarchy
with the dKP hierarchy, and define the Faber polynomials as the generators of the higher
members of the dKP hierarchy. These results will then be used in Section
\ref{combin-dToda} to derive explicit formulae of $F_{mn}$ solving the
two-vertex problem.

\subsection{The dToda curve and its deformation with the dToda hierarchy}\label{dTodaC}
Let us first define the function $p(\lambda)$,
\[
p(\lambda):=\lambda \exp\left(-D(\lambda)F_0\right).
\]
This corresponds to the equation $p(\lambda)=\frac{\partial S}{\partial T_0}$ with the $S$-function
given by (\ref{S}).
Then the dToda hierarchy is defined on the dToda curve \eqref{todacurve}:
\begin{Lemma}\label{tridiagonality}
The dToda hierarchy \eqref{dToda-Fay} gives a deformation of the algebraic curve of genus 0
given by
\[
\lambda=p+F_{01}+\frac{e^{F_{00}}}{p}\,.
\]
\end{Lemma}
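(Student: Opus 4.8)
The plan is to eliminate the bilinear quantity $D(\lambda)D(\mu)F$ between the two identities in \eqref{dToda-Fay} by exploiting that their left-hand sides are reciprocals of each other. First I would recast both identities entirely in terms of the quasi-momentum $p(\lambda) = \lambda e^{-D(\lambda)F_0}$. The first identity becomes immediately
\[
e^{D(\lambda)D(\mu)F} = \frac{\lambda e^{-D(\lambda)F_0} - \mu e^{-D(\mu)F_0}}{\lambda - \mu} = \frac{p(\lambda) - p(\mu)}{\lambda - \mu},
\]
straight from the definition of $p$. For the second I would use the factorization $e^{(D(\lambda)+D(\mu))F_0} = e^{D(\lambda)F_0}e^{D(\mu)F_0} = \frac{\lambda\mu}{p(\lambda)p(\mu)}$, which collapses the right-hand side to
\[
e^{-D(\lambda)D(\mu)F} = 1 - \frac{e^{F_{00}}}{p(\lambda)p(\mu)}.
\]

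Next I would multiply the two identities. Since the left-hand sides are $e^{D(\lambda)D(\mu)F}$ and $e^{-D(\lambda)D(\mu)F}$, their product is $1$, and the right-hand sides yield
\[
\lambda - \mu = \bigl(p(\lambda)-p(\mu)\bigr)\left(1 - \frac{e^{F_{00}}}{p(\lambda)p(\mu)}\right).
\]
Using $\frac{p(\lambda)-p(\mu)}{p(\lambda)p(\mu)} = \frac{1}{p(\mu)} - \frac{1}{p(\lambda)}$, the right-hand side rearranges into the separated form
\[
\lambda - \left(p(\lambda) + \frac{e^{F_{00}}}{p(\lambda)}\right) = \mu - \left(p(\mu) + \frac{e^{F_{00}}}{p(\mu)}\right).
\]
The left member depends only on $\lambda$ and the right only on $\mu$, so each equals a quantity $c$ independent of both spectral parameters (it may a priori depend on $T_0, \mathbf{T}$). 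Hence $\lambda = p(\lambda) + \frac{e^{F_{00}}}{p(\lambda)} + c$.

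Finally I would pin down $c$ from the $\lambda \to \infty$ behaviour. Since $D(\lambda)F_0 = F_{01}/\lambda + O(\lambda^{-2})$, one gets $p(\lambda) = \lambda - F_{01} + O(\lambda^{-1})$ and $e^{F_{00}}/p(\lambda) = O(\lambda^{-1})$; matching the $O(1)$ terms in the curve relation forces $c = F_{01}$, which is exactly \eqref{todacurve}. The same expansion $p(\lambda) = \lambda - F_{01} + O(\lambda^{-1})$ exhibits $p$ as a rational uniformizing coordinate, so the curve is rational (genus $0$), while the dependence of $F_{00}$ and $F_{01}$ on the times $\mathbf{T}$ supplies the asserted deformation. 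The main obstacle is modest and lies in the separation-of-variables step — namely certifying that the common value $c$ is genuinely free of $\lambda$ and $\mu$ and then identifying it through the large-$\lambda$ asymptotics of $p$; once the two Fay identities are rewritten in terms of $p(\lambda)$ and $p(\mu)$, the remaining algebra is routine.
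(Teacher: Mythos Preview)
Your proof is correct and follows essentially the same route as the paper: multiply the two Fay identities to eliminate $e^{D(\lambda)D(\mu)F}$, separate variables, and fix the constant via the large-$\lambda$ asymptotics of $p(\lambda)$. Your write-up is in fact slightly more detailed than the paper's (e.g., the explicit factorization $e^{(D(\lambda)+D(\mu))F_0}=\lambda\mu/(p(\lambda)p(\mu))$ and the remark on rationality), but the argument is the same.
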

\begin{Proof}
Eliminating $e^{D(\lambda)D(\mu)F}$ from two equations in \eqref{dToda-Fay}, we obtain
\[
1 = \left(\frac{p(\lambda)-p(\mu)}{\lambda-\mu}\right)\left(1-\frac{e^{F_{00}}}{p(\lambda)p(\mu)}\right).
\]
 We then separate the variables,
\[
\lambda -p(\lambda)-\frac{e^{F_{00}}}{p(\lambda)}=\mu-p(\mu)-\frac{e^{F_{00}}}{p(\mu)}.
\]
With the asymptotic condition, $\lambda-p(\lambda)\to F_{01}+O(\lambda^{-1})$, we obtain the curve.
\end{Proof}

With this Lemma \ref{tridiagonality}, we define in this paper
the dispersionless Toda (dToda) hierarchy in the form,
\begin{equation}\label{dTodahierarchy}\left\{\begin{array}{llllll}
\displaystyle{1-\frac{e^{F_{00}}}{p(\lambda)p(\mu)}}=e^{-D(\lambda)D(\mu)F}\\[1.0ex]
\displaystyle{\lambda=p(\lambda)+F_{01}+\frac{e^{F_{00}}}{p(\lambda)} \quad {\rm with}\quad p(\lambda)=\lambda e^{-D(\lambda)F_0}}\,.
\end{array}\right.
\end{equation}
In particular one should note through the paper that the dToda curve (the second equation)
on $(\lambda,p)$-plane plays a central role for the combinatorial
description of the dToda hierarchy.

We now note that the first equation in \eqref{dTodahierarchy} implies:
\begin{Lemma}\label{f0k}
Each $F_{mn}$ with $m,n\ge 1$ can be determined by the set $\{F_{0k}: 0\le k\le m+n\}$.
\end{Lemma}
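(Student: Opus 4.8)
The plan is to take the logarithm of the first equation in \eqref{dTodahierarchy} and match coefficients of the double power series in $\lambda^{-1}$ and $\mu^{-1}$ on both sides. Writing $D(\lambda)D(\mu)F=\sum_{m,n\ge 1}\frac{1}{mn\lambda^m\mu^n}F_{mn}$, the first equation gives, after taking logarithms,
\[
-\sum_{m,n\ge 1}\frac{F_{mn}}{mn\,\lambda^m\mu^n}=\log\left(1-\frac{e^{F_{00}}}{p(\lambda)p(\mu)}\right).
\]
Thus the coefficient of $\lambda^{-m}\mu^{-n}$ on the left is exactly $-F_{mn}/(mn)$, and it suffices to show that the corresponding coefficient on the right is a polynomial in the variables $\{F_{0k}:0\le k\le m+n\}$.

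First I would record the expansion of $p(\lambda)^{-1}$. Since $p(\lambda)=\lambda e^{-D(\lambda)F_0}$ and $D(\lambda)F_0=\sum_{k\ge1}\frac{1}{k\lambda^k}F_{0k}$, we have
\[
p(\lambda)^{-1}=\frac{1}{\lambda}\exp\left(\sum_{k\ge1}\frac{F_{0k}}{k\lambda^k}\right)=\sum_{i\ge 1}\frac{c_i}{\lambda^i},
\]
where, by the complete-homogeneous-symmetric-function expansion of the exponential, $c_1=1$ and each $c_i$ for $i\ge2$ is a polynomial in $F_{01},\ldots,F_{0,i-1}$ only. In particular the coefficient of $\lambda^{-m}$ in $\bigl(p(\lambda)^{-1}\bigr)^r$ is a sum of products $c_{j_1}\cdots c_{j_r}$ over compositions $j_1+\cdots+j_r=m$ with $j_i\ge1$, whose largest index is $m-r+1$; hence that coefficient depends only on $F_{01},\ldots,F_{0,m-r}$.

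Next I would expand the right-hand side using $\log(1-x)=-\sum_{r\ge1}x^r/r$ with $x=e^{F_{00}}p(\lambda)^{-1}p(\mu)^{-1}$, so that $x^r=e^{rF_{00}}\bigl(p(\lambda)^{-1}\bigr)^r\bigl(p(\mu)^{-1}\bigr)^r$. Because $p(\lambda)^{-1}$ starts at order $\lambda^{-1}$, the term $x^r$ begins at order $\lambda^{-r}\mu^{-r}$, so only the values $r=1,\ldots,\min(m,n)$ contribute to the coefficient of $\lambda^{-m}\mu^{-n}$. For each such $r$, the previous step shows the relevant coefficient involves $e^{rF_{00}}$ together with $F_{0k}$ for $k\le\max(m,n)-r\le\max(m,n)-1$. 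Summing over $r$, the coefficient of $\lambda^{-m}\mu^{-n}$ on the right is a polynomial in $F_{00}$ and in $F_{0k}$ for $1\le k\le\max(m,n)-1$, which is contained in $\{F_{0k}:0\le k\le m+n\}$. Equating with the left-hand side then expresses $F_{mn}$ as $-mn$ times this polynomial, proving the claim (in fact with the sharper bound $\max(m,n)-1$).

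The argument is essentially a bookkeeping of degrees, so the main point requiring care is the index tracking in the $\log(1-x)$ expansion: one must verify that the shift of the lowest order in $x^r$ truncates the sum over $r$ to a finite range and that no $F_{0k}$ with $k$ exceeding the stated bound can enter. Everything is manipulated at the level of formal power series in $\lambda^{-1},\mu^{-1}$, so no convergence issue arises.
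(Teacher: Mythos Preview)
Your proof is correct and follows the same approach as the paper: take the logarithm of the first equation in \eqref{dTodahierarchy} and observe that the right-hand side involves only the $F_{0k}$ through $p(\lambda)=\lambda e^{-D(\lambda)F_0}$. Your version is in fact more detailed than the paper's, which simply notes that the right-hand side depends only on $\{F_{0k}:k\ge0\}$ without tracking the indices; your bookkeeping yields the sharper bound $\max(m,n)-1$ in place of $m+n$.
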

\begin{Proof}  The first  equation with $p(\lambda)=\lambda \exp(-D(\lambda)F_0)$ gives
\[
D(\lambda)D(\mu)F=\sum_{m\ge 1}\sum_{n\ge 1}\frac{F_{mn}}{mn}\frac{1}{\lambda^m\mu^n}=-\log\left(1-\frac{e^{F_{00}}}{\lambda\mu}e^{(D(\lambda)+D(\mu))F_0}\right)\,.
\]
The right hand side depends only on $F_{0k}$ for $k\ge 0$.
\end{Proof}
Also, setting $\lambda=\mu$ and expanding for large $\lambda$, the first equation gives
\begin{align*}
1-\frac{e^{F_{00}}}{p(\lambda)^2}&= 1-\frac{e^{F_{00}}}{\lambda^2}\sum_{n=0}^{\infty}
\frac{1}{\lambda^n}h_n(2\hat{\mathbf{F}}_0) \\[1.0ex]
&=e^{-D(\lambda)^2F}=\sum_{n=0}^{\infty}\frac{1}{n}h_n(-\hat{\mathbf{Z}})\,,
\end{align*}
where $\hat{\mathbf{F}}_0=(\hat{F}_{01},\hat{F}_{02},\ldots)$ with $\hat{F}_{0k}=\frac{1}{k}F_{0k}$, and
$\hat{\mathbf{Z}}=(0,\hat{Z}_2,\hat{Z}_3,\ldots)$ with $\hat Z_n$ for $n\ge 2$ defined by
\[
\hat Z_n=\sum_{k+l=n}\frac{F_{kl}}{kl} \qquad(k,l\ge 1).
\]
Comparing the coefficients of the power $\lambda^{-n}$, we obtain
\begin{equation}\label{dToda1}
h_{n+2}(-\hat{\mathbf{Z}})+e^{F_{00}}h_{n}(2\hat{\mathbf{F}}_0)=0\qquad n\ge 0.
\end{equation}
This set of equations is called the dipersionless Hirota equation for the dToda hierarchy,
which gives a half set of the dispersionless Hirota equations (see for example \cite{CT:06},
and the other half is given by \eqref{dToda2} below).
The first equation with $n=0$ then gives the dispersionless Toda equation,
\[
F_{11}=e^{F_{00}}\,.
\]
For $n=1$, we obtain
\[
F_{12}=2e^{F_{00}}F_{01}=2F_{01}F_{11}.
\]
Thus $F_{11}$ and $F_{12}$ are determined by only $F_{00}$ and $F_{01}$. In fact, we have
 from the second equation (dToda curve) that
\begin{Proposition}\label{F00F01}
The two point functions $F_{nm}$ of the dToda hierarchy is determined by $F_{00}$ and $F_{01}$ only.
\end{Proposition}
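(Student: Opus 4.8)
The plan is to reduce the statement to a claim about the single-index functions $F_{0k}$ and then extract those from the dToda curve. By Lemma~\ref{f0k}, every $F_{mn}$ with $m,n\ge 1$ is a function of the set $\{F_{0k}:0\le k\le m+n\}$, so it suffices to show that each $F_{0k}$ is itself determined by $F_{00}$ and $F_{01}$ alone. Since $F_{00}$ is trivially such a function, the heart of the matter is to show that $F_{0n}$ for $n\ge 1$ depends only on $F_{00}$ and $F_{01}$. I would obtain this from the second equation of \eqref{dTodahierarchy}, the dToda curve, which relates $\lambda$ and $p(\lambda)$ using only the two parameters $F_{00}$ and $F_{01}$.

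First I would solve the curve for $p$ as a function of $\lambda$. Clearing the denominator in $\lambda=p+F_{01}+e^{F_{00}}/p$ gives the quadratic $p^2-(\lambda-F_{01})p+e^{F_{00}}=0$, whose two roots are $p=\frac{1}{2}\bigl((\lambda-F_{01})\pm\sqrt{(\lambda-F_{01})^2-4e^{F_{00}}}\bigr)$. The correct branch is fixed by the asymptotic normalization $p(\lambda)=\lambda e^{-D(\lambda)F_0}\sim\lambda$ as $\lambda\to\infty$, which selects the $+$ root; expanding this root as a formal Laurent series in $\lambda$ yields $p(\lambda)=\lambda-F_{01}-e^{F_{00}}\lambda^{-1}+\cdots$, with every coefficient a polynomial in $F_{01}$ and $e^{F_{00}}$, hence a function of $F_{00}$ and $F_{01}$ only.

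Next I would read off the $F_{0n}$ from this series. By the definition $p(\lambda)=\lambda e^{-D(\lambda)F_0}$ with $D(\lambda)F_0=\sum_{n\ge 1}\frac{1}{n\lambda^n}F_{0n}$, taking logarithms gives $\log(p(\lambda)/\lambda)=-\sum_{n\ge 1}\frac{F_{0n}}{n\lambda^n}$. Since the left side is now a known Laurent series whose coefficients depend only on $F_{00}$ and $F_{01}$, comparing the coefficient of $\lambda^{-n}$ on both sides expresses each $F_{0n}$ as a function of $F_{00}$ and $F_{01}$; the coefficient of $\lambda^{-1}$ simply returns $F_{01}$, while that of $\lambda^{-2}$ yields $F_{02}$ as an explicit expression in $F_{00}$ and $F_{01}$, consistent with the earlier relations $F_{11}=e^{F_{00}}$ and $F_{12}=2F_{01}F_{11}$. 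Feeding these $F_{0k}$ back into Lemma~\ref{f0k} then proves the proposition.

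The only delicate point is purely organizational: one must verify that the branch selection and the formal Laurent inversion are legitimate, i.e.\ that the $+$ root really is the series with leading term $\lambda$ and that $\log(p/\lambda)$ is a well-defined power series in $\lambda^{-1}$ (which holds because $p/\lambda=1+O(\lambda^{-1})$). No genuine obstruction arises; all steps are elementary manipulations of the genus-zero curve, and the substantive reduction has already been carried out in Lemma~\ref{f0k}.
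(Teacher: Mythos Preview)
Your proposal is correct and follows essentially the same route as the paper: reduce via Lemma~\ref{f0k} to showing that each $F_{0n}$ is determined by $F_{00}$ and $F_{01}$, and extract this from the dToda curve. The only cosmetic difference is that the paper expands the curve directly (dividing by $\lambda$ and using the generating function of the $h_n$) to obtain the recursion $h_n(-\hat{\mathbf F}_0)+e^{F_{00}}h_{n-2}(\hat{\mathbf F}_0)=0$, whereas you solve the quadratic for $p(\lambda)$ explicitly and then take $\log(p/\lambda)$; your explicit-solution step is in fact exactly what the paper does a moment later in Proposition~\ref{CatalanToda}.
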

\begin{Proof}
The curve gives
\[
1= e^{-D(\lambda)F_0}+\frac{F_{01}}{\lambda}+\frac{e^{F_{00}}}{\lambda^2}e^{D(\lambda)F_0}.
\]
Expanding this with large $\lambda$, we have
\[
h_n(-\hat{\mathbf{F}}_0)
+e^{F_{00}}h_{n-2}(\hat{\mathbf{F}}_0)=0\quad n\ge 2.
\]
 Then one can see that $F_{0n}$ is determined by the previous $F_{0k}$ for $0\le k<n$.
 For examples, we have for $n=2$ and $3$,
\begin{align*}
F_{02}&=F_{01}^2+2e^{F_{00}}=F_{01}^2+2F_{11}\,,\\
F_{03}&=\frac{3}{2}F_{01}F_{02}-\frac{1}{2}F_{01}^3+3F_{01}e^{F_{00}}=F_{01}^3+6F_{01}F_{11}\,.
\end{align*}
Then from Lemma \ref{f0k}, we conclude that all $F_{mn}$ for $m,n\ge 1$ are determined by
$F_{00}$ and $F_{01}$ (or $F_{01}$ and $F_{11}=\exp F_{00}$).
\end{Proof}

  From the dToda curve, $p(\lambda)$ can be explicitly calculated as follows:
\begin{Proposition}\label{CatalanToda}
The inverse of the dToda curve for the case $p(\lambda)\to\lambda$ as $\lambda\to\infty$ is give by
\[
p(\lambda)=\lambda-F_{01}-\sum_{n=0}^{\infty}\frac{F_{1,n+1}}{(n+1)\lambda^{n+1}}\,,
\]
where
\[
\frac{F_{1,n+1}}{n+1}=\sum_{k=0}^{[\frac{n}{2}]}C_k\binom{n}{2k}F_{01}^{n-2k}F_{11}^{k+1}\,,
\qquad {\rm with}\quad C_k=\frac{1}{k+1}\binom{2k}{k}\,.
\]
Note here that $F_{11}=e^{F_{00}}$ and $C_k$ is the $k$-th Catalan number
with $C_0=1$.
\end{Proposition}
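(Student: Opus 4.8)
The plan is to invert the dToda curve $\lambda=p+F_{01}+e^{F_{00}}/p$ directly and then match the resulting $1/\lambda$-series against the series whose coefficients are the two-point functions. First I would use the dToda equation \eqref{dToda-1}, $F_{11}=e^{F_{00}}$, to rewrite the curve (the second equation of \eqref{dTodahierarchy}) as the quadratic $p^2-(\lambda-F_{01})p+F_{11}=0$, and solve for the branch normalized by $p\to\lambda$ as $\lambda\to\infty$, namely $p=\frac{1}{2}\left[(\lambda-F_{01})+\sqrt{(\lambda-F_{01})^2-4F_{11}}\right]$.

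Before extracting coefficients I would establish that the coefficients of the $1/\lambda$-expansion of $p$ are indeed $F_{1,n+1}/(n+1)$, so that the formula is a statement about genuine second derivatives of $F$ and not merely a relabeling. For this I would take $\mu\to\infty$ in the first equation of \eqref{dTodahierarchy}. Writing it as $D(\lambda)D(\mu)F=-\log\!\left(1-e^{F_{00}}/(p(\lambda)p(\mu))\right)$ and using $p(\mu)=\mu-F_{01}+O(\mu^{-1})$ together with $D(\mu)F=\mu^{-1}F_1+O(\mu^{-2})$, the coefficient of $1/\mu$ gives $D(\lambda)F_1=e^{F_{00}}/p(\lambda)$, that is $\sum_{m\ge1}\frac{F_{1m}}{m\lambda^m}=\frac{F_{11}}{p(\lambda)}$. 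Since the curve itself reads $F_{11}/p=\lambda-p-F_{01}$, combining the two yields $p=\lambda-F_{01}-\sum_{m\ge1}\frac{F_{1m}}{m\lambda^m}$, which is exactly the asserted form after the substitution $m=n+1$.

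It then remains to compute these coefficients explicitly, which is where the Catalan numbers enter. Setting $x=\lambda-F_{01}$ and invoking the Catalan generating function in the form $\sqrt{1-4z}=1-2\sum_{k\ge0}C_k z^{k+1}$ with $z=F_{11}/x^2$, the square root expands as $\sqrt{x^2-4F_{11}}=x-2\sum_{k\ge0}C_kF_{11}^{k+1}x^{-(2k+1)}$, so that $p=x-\sum_{k\ge0}C_kF_{11}^{k+1}(\lambda-F_{01})^{-(2k+1)}$. I would then re-expand each factor about $\lambda=\infty$ by the negative binomial series $(\lambda-F_{01})^{-(2k+1)}=\sum_{j\ge0}\binom{2k+j}{j}F_{01}^j\lambda^{-(2k+1+j)}$ and collect the coefficient of $\lambda^{-(n+1)}$, which forces $j=n-2k$ with $0\le k\le[n/2]$; using $\binom{2k+j}{j}=\binom{n}{n-2k}=\binom{n}{2k}$ this produces precisely $\sum_{k=0}^{[n/2]}C_k\binom{n}{2k}F_{01}^{n-2k}F_{11}^{k+1}$, which matches $F_{1,n+1}/(n+1)$.

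I expect the main obstacle to be the identification step rather than the algebra: one must argue carefully that the purely algebraic series obtained from inverting the curve has coefficients that genuinely coincide with the two-point functions $F_{1,n+1}$, and the cleanest route is the $\mu\to\infty$ reduction of the Fay-type first equation described above. The combinatorial formula itself is then a routine, if slightly delicate, double expansion whose only subtleties are recognizing the Catalan generating function inside the square root and simplifying the product of binomial coefficients down to $\binom{n}{2k}$.
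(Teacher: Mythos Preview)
Your proposal is correct and follows essentially the same route as the paper: solve the quadratic $p^2-(\lambda-F_{01})p+F_{11}=0$, expand the square root via the Catalan generating function, and extract the coefficient of $\lambda^{-(n+1)}$. The only differences are cosmetic: the paper extracts coefficients by a contour integral with the shift $\lambda\mapsto\lambda+F_{01}$ rather than by the negative binomial series, and the paper does not justify in this proof that the expansion coefficients are $F_{1,n+1}/(n+1)$ (this identification is established separately via the Faber polynomial $\Phi_1$ in the following subsection), whereas you supply it here directly via the $\mu\to\infty$ limit of the first dToda equation.
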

\begin{Proof}
 From the curve, first we have
\[
p^2-(\lambda-F_{01})p+F_{11}=0.
\]
Solving this for $p$ with the asymptotic condition $p(\lambda)\to \lambda$ as
$\lambda\to\infty$,
we have
\begin{align*}
p&= \frac{1}{2}\left(\lambda-F_{01}+\sqrt{(\lambda-F_{01})^2-4F_{11}}\,\right) \\[0.5ex]
&=\lambda-F_{01}-\frac{1}{2}\left(\lambda-F_{01}-\sqrt{(\lambda-F_{01})^2-4F_{11}}\,\right)\,.
\end{align*}
Noting the last term can be expressed as
\[
\frac{F_{11}}{\lambda-F_{01}}\frac{1-\sqrt{1-4x}}{2x}=\frac{F_{11}}{\lambda-F_{01}}\sum_{k=0}^{\infty} C_kx^k\qquad {\rm with}\quad x=\frac{F_{11}}{(\lambda-F_{01})^2}.
\]
Thus we have
\[
\sum_{n=0}^{\infty}\frac{F_{1,n+1}}{(n+1)\lambda^{n+1}}=\sum_{k=0}^{\infty}C_k\frac{F_{11}^{k+1}}{(\lambda-F_{01})^{2k+1}}.
\]
Then the coefficients $P_n$ can be found by
\[
\frac{F_{1,n+1}}{n+1}= \sum_{k=0}^{\infty}C_kF_{11}^{k+1}\oint_{\lambda=\infty}\frac{d\lambda}{2\pi i}\,\frac{(\lambda+F_{01})^{n}}{\lambda^{2k+1}}.
\]
Using the expansion $(\lambda+F_{01})^n=\sum_{j=0}^{n}\binom{n}{j}\lambda^jF_{01}^{n-j}$,
we obtain the formula.
\end{Proof}
The appearance of the Catalan numbers indicates some connections of combinatorial problems
to the dToda hierarchy. This is one of the main motivations of the present study.

\subsection{The dToda hierarchy implies the dKP hierarchy}\label{dToda-dKP}

Here we consider the second equation of \eqref{dToda-Fay}, that is, in terms of $p(\lambda)$,
\begin{equation}\label{Cauchykernel}
\frac{p(\lambda)-p(\mu)}{\lambda-\mu}=e^{D(\lambda)D(\mu)F}\,.
\end{equation}
This equation is also known as the dKP hierarchy, and
$p(\lambda)$ in the dKP hierarchy is given by $p(\lambda)=\lambda-D(\lambda)F_1$ (see Appendix A, and also for examples   \cite{CK:95, CT:06}).
For the dToda hierarchy, $p(\lambda)=\lambda\exp(-D(\lambda)F_0)$ can also be expressed in a similar form.
This is obtained from an interesting connection with the Faber polynomials:
\begin{Definition}\label{faberFG}
Let $\lambda$ be a Laurent series in $p$ given by
\[
\lambda=p+u_1+\sum_{k=1}^{\infty}\frac{u_{k+1}}{p^k}.
\]
Then the Faber polynomials $\Phi_n(p)$ of degree $n$ in $p$ are defined by
\[
\Phi_n(p)=\left[\lambda(p)^n\right]_{\ge 0} \quad n=0,1,2,\ldots,
\]
where $[\lambda(p)^n]_{\ge 0}$ represents the polynomial part of $\lambda(p)^n$ in $p$.
The polynomial $\Phi_n(p(\lambda))$ has the Laurent series in $\lambda$,
\[
\Phi_n(p(\lambda))=\lambda^n-\sum_{m=1}^{\infty}\frac{1}{\lambda^m}Q_{nm}\,,
\]
where $Q_{nm}$ are the functions of $\mathbf{u}=(u_1,u_2,\ldots)$. Those $Q_{nm}$ are related to the
Grunsky coefficients $c_{nm}=\frac{1}{n}Q_{nm}$. Then \eqref{Cauchykernel} gives the generating
function of the Faber polynomials with $c_{nm}=\frac{1}{nm}F_{nm}$
 (see Remark \ref{FG} below for a further discussion). In the dKP theory, the Faber polynomials give
 the generators of the flows in the hierarchy (see Appendix A).
\end{Definition}
 Then with (\ref{Cauchykernel}), we have
\begin{Proposition}\label{faber}
The Faber polynomials can be expressed as
\[
\Phi_n(p(\lambda))=\lambda^n-D(\lambda)F_n \qquad n\ge 1\,,
\]
and $\Phi_0(p)=1$.
\end{Proposition}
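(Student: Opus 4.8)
The plan is to recognize the polynomials $\Phi_n(p)$ as the classical Faber polynomials attached to the conformal map $\lambda \mapsto p(\lambda)$. Since $p(\lambda)$ and $\lambda(p)$ are mutually inverse Laurent series and $\Phi_n(p) = [\lambda(p)^n]_{\ge 0}$, the identity $\Phi_n(p(\lambda)) = \lambda^n - \sum_{m\ge 1}Q_{nm}\lambda^{-m}$ of Definition \ref{faberFG} holds, with tail coefficients $Q_{nm} = n c_{nm}$. Because $D(\lambda)F_n = \sum_{m\ge 1}\frac{1}{m\lambda^m}F_{nm}$, the assertion $\Phi_n(p(\lambda)) = \lambda^n - D(\lambda)F_n$ is equivalent to the single coefficient identity $Q_{nm} = \frac{1}{m}F_{nm}$ (equivalently $c_{nm} = \frac{1}{nm}F_{nm}$), so the whole argument reduces to pinning down $Q_{nm}$. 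The degenerate case needs no work: $\Phi_0(p) = [\lambda(p)^0]_{\ge 0} = [1]_{\ge 0} = 1$.

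To extract $Q_{nm}$ I would start from the standard Faber generating identity for the map $p(\lambda) = \lambda - F_{01} + O(\lambda^{-1})$ (the expansion guaranteed by Proposition \ref{CatalanToda}), namely
\[
\log\frac{p(\lambda) - q}{\lambda} = -\sum_{n\ge 1}\frac{\Phi_n(q)}{n\lambda^n},
\]
which one verifies directly by differentiating in $\lambda$ and matching against the defining relation $\Phi_n(p(\lambda)) = \lambda^n + O(\lambda^{-1})$. Setting $q = p(\mu)$ and inserting $\Phi_n(p(\mu)) = \mu^n - \sum_{m\ge 1}Q_{nm}\mu^{-m}$, the pure powers $-\sum_{n\ge 1}\frac{\mu^n}{n\lambda^n}$ reassemble into $\log\frac{\lambda-\mu}{\lambda}$, and subtracting this leaves
\[
\log\frac{p(\lambda)-p(\mu)}{\lambda-\mu} = \sum_{n,m\ge 1}\frac{Q_{nm}}{n}\frac{1}{\lambda^n\mu^m}.
\]

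Finally I would compare with the dKP hierarchy relation \eqref{Cauchykernel}, whose logarithm is
\[
\log\frac{p(\lambda)-p(\mu)}{\lambda-\mu} = D(\lambda)D(\mu)F = \sum_{n,m\ge 1}\frac{F_{nm}}{nm}\frac{1}{\lambda^n\mu^m}.
\]
Matching coefficients of $\lambda^{-n}\mu^{-m}$ gives $\frac{Q_{nm}}{n} = \frac{F_{nm}}{nm}$, i.e. $Q_{nm} = \frac{1}{m}F_{nm}$ and $c_{nm} = \frac{1}{nm}F_{nm}$, as anticipated in Definition \ref{faberFG}. Substituting back yields $\Phi_n(p(\lambda)) = \lambda^n - \sum_{m\ge 1}\frac{F_{nm}}{m\lambda^m} = \lambda^n - D(\lambda)F_n$, completing the argument.

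The main obstacle is the second step: setting up the Faber generating identity with the correct orientation (the Faber polynomials here belong to the map $p(\lambda)$, not to $\lambda(p)$) and carefully handling the two bilateral expansions in $\lambda$ and $\mu$, in particular the resummation of $-\sum_{n\ge 1}\mu^n/(n\lambda^n)$ into $\log\frac{\lambda-\mu}{\lambda}$, which is precisely what converts the one-variable generating function into the symmetric two-variable kernel that can be matched against \eqref{Cauchykernel}. Everything after that is coefficient comparison.
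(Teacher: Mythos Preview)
Your argument is correct and follows essentially the same route as the paper: both hinge on the kernel formula \eqref{Cauchykernel}, take its logarithm, and extract the Faber polynomial via the same residue/differentiation computation. The only difference is packaging: you first isolate the classical Faber generating identity $\log\frac{p(\lambda)-q}{\lambda}=-\sum_{n\ge1}\Phi_n(q)/(n\lambda^n)$ and then match coefficients against $D(\lambda)D(\mu)F$, whereas the paper differentiates $\log\frac{p(\lambda)-p(\mu)}{\lambda-\mu}=D(\lambda)D(\mu)F$ in $\lambda$ and directly computes $\oint_{\lambda=\infty}\frac{\lambda^n p'(\lambda)}{p(\lambda)-p(\mu)}\,\frac{d\lambda}{2\pi i}$ by changing variables to $p$, which simultaneously yields $\mu^n-D(\mu)F_n$ and $[\lambda(p(\mu))^n]_{\ge0}$.
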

\begin{Proof}
We first note that
\[
\log\left(\frac{p(\lambda)-p(\mu)}{\lambda-\mu}\right)=D(\lambda)D(\mu)F.
\]
Taking the derivative with respect to $\lambda$, we have
\[
\frac{1}{p(\lambda)-p(\mu)}\frac{\partial p(\lambda)}{\partial \lambda}=\frac{1}{\lambda-\mu}-\sum_{n=1}^{\infty}\frac{1}{\lambda^{n+1}}D(\mu)F_n\,.
\]
Then for $n\ge 1$, we have
\begin{align*}
&\oint_{\lambda=\infty}\frac{d\lambda}{2\pi i}\frac{\lambda^n}{p(\lambda)-p(\mu)}\frac{\partial p(\lambda)}{\partial\lambda}=\mu^n-D(\mu)F_n \\[0.5ex]
&=\oint_{p=\infty}\frac{dp}{2\pi i}\frac{\lambda(p)^n}{p-p(\mu)}= \left[\lambda(p(\mu))^n\right]_{\ge 0}.
\end{align*}
This completes a proof.
\end{Proof}

We now derive the other set of the dispersionless Hirota equations:
Taking the limit $\lambda\to\mu$ of the equation in (\ref{Cauchykernel}), we have
\[
\frac{\partial p(\lambda)}{\partial \lambda}=e^{D(\lambda)^2F}\,.
\]
With $p(\lambda)=\lambda\exp(-D(\lambda)F_0)$, we have
\begin{align*}
-\lambda\frac{\partial}{\partial \lambda}D(\lambda)F_0&=e^{D(\lambda)^2F+D(\lambda)F_0}-1\\
&=\exp\left(\sum_{n=1}^{\infty}\frac{1}{\lambda^n}Z_n\right)-1=\sum_{n=1}^{\infty}\frac{1}{\lambda^n}h_n(\mathbf{Z})\,,
\end{align*}
where $\mathbf{Z}=(Z_1,Z_2,\ldots)$ with $Z_k$ defined by
\[
Z_1=F_{01},\quad Z_n=\frac{F_{0n}}{n}+\sum_{k+l=n}\frac{F_{kl}}{kl}\,.
\]
This then gives the other half set of the dispersionless Hirota equation for the dToda hierarchy
(the first half set is given by (\ref{dToda1})),
\begin{equation}\label{dToda2}
F_{0n}=h_n(\mathbf{Z})\quad n\ge 2,
\end{equation}
which gives the dispersionless limit of the hierarchy \eqref{dHT2}.
The first equation of (\ref{dToda2}) with $n=2$  gives
\[
F_{02}=F_{01}^2+2F_{11}\,.
\]

With Proposition \ref{faber} with the case $n=1$, i.e. $\Phi_1(p)=p+F_{01}$, the function $p(\lambda)$ can also be expressed by
\[
p(\lambda)=\lambda-F_{01}-D(\lambda)F_1.
\]
Then note that that  the dHirota equation of the dKP hierarchy is derived from the same equation $\partial p/\partial \lambda=e^{D(\lambda)^2F}$ with $p(\lambda)=\lambda-F_{01}-D(\lambda)F$,
\[
F_{1n}=h_{n+1}(\hat{\mathbf{Z}})\qquad n\ge 3,
\]
which is the dispersionless limit of \eqref{KPhierarchy}.
Here recall that $\hat{\mathbf{Z}}=(0,\hat{Z}_2,\hat{Z}_3,\ldots)$ with $\hat Z_k$ is given by $\hat Z_k=Z_k-F_{0k}/k$.
The first equation with $n=3$ gives the dKP equation,
\begin{equation}\label{dKPequation}
-4F_{13}+6F_{11}^2+3F_{22}=0\,.
\end{equation}

We also mention the positivity of $F_{mn}$:
\begin{Proposition} \label{parity-proposition}
The coefficients $F_{mn}$ satisfy the following properties:
\begin{itemize}
\item[(a)] If $F_{1m}\ge 0$ for all $m\ge 1$, then $F_{mn}\ge 0$ for all $m,n\ge 1$.
\item[(b)] If $F_{1,2k}=0$ for all $k\ge 1$ and all others $F_{1,m}\ge0$, then $F_{mn}=0$ for all $m,n$ with $m+n=$odd.
\end{itemize}
\end{Proposition}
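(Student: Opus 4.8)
The plan is to run everything off the two equations of the dToda hierarchy \eqref{dTodahierarchy}, read as formal power series in the spectral parameters. The crucial observation is that the first equation, written as in the proof of Lemma \ref{f0k} in the form
\[
D(\lambda)D(\mu)F=\sum_{m,n\ge1}\frac{F_{mn}}{mn}\frac{1}{\lambda^m\mu^n}=-\log\left(1-\frac{e^{F_{00}}}{\lambda\mu}e^{(D(\lambda)+D(\mu))F_0}\right),
\]
expresses the full array $(F_{mn})_{m,n\ge1}$ through $-\log(1-x)=\sum_{k\ge1}x^k/k$, a series with all positive Taylor coefficients. Hence the sign and parity behaviour of every $F_{mn}$ is controlled by that of the single series $\frac{e^{F_{00}}}{\lambda\mu}e^{(D(\lambda)+D(\mu))F_0}$, and since $e^{F_{00}}=F_{11}>0$, it is governed entirely by $D(\lambda)F_0=\sum_{n\ge1}\frac{F_{0n}}{n\lambda^n}$, i.e. by the coefficients $F_{0n}$.

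The second step is to reduce the $F_{0n}$ to the $F_{1n}$ appearing in the hypotheses. I would combine $p(\lambda)=\lambda\,e^{-D(\lambda)F_0}$ from the second equation of \eqref{dTodahierarchy}, which gives $D(\lambda)F_0=-\log(p(\lambda)/\lambda)$, with the explicit expansion $p(\lambda)=\lambda-F_{01}-D(\lambda)F_1$ from Proposition \ref{CatalanToda}. This yields
\[
D(\lambda)F_0=-\log\left(1-\sum_{m\ge1}\frac{c_m}{\lambda^m}\right),\qquad c_1=F_{01},\quad c_m=\frac{F_{1,m-1}}{m-1}\ (m\ge2),
\]
so that, again through a positive-coefficient logarithm, the $F_{0n}$ are expressed in terms of $F_{01}$ and the $F_{1n}$.

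For part (a) I would first note $F_{01}\ge0$: the relation $F_{12}=2F_{01}F_{11}$ recorded just before Proposition \ref{F00F01}, together with $F_{12}\ge0$ and $F_{11}>0$, forces it. Then every $c_m\ge0$, so the displayed logarithm shows $F_{0n}\ge0$ for all $n$; feeding this into the master identity, the series $\frac{F_{11}}{\lambda\mu}e^{(D(\lambda)+D(\mu))F_0}$ has nonnegative coefficients and vanishing constant term, and applying $-\log(1-\,\cdot\,)$ preserves nonnegativity, giving $F_{mn}\ge0$. For part (b), the hypothesis $F_{1,2k}=0$ (in particular $F_{12}=0$) forces $F_{01}=0$, so the dToda curve collapses to $\lambda=p+F_{11}/p$. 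I would then exploit the resulting symmetry: the branch with $p\to\lambda$ is unique, and $\tilde p(\lambda):=-p(-\lambda)$ satisfies the same equation with the same asymptotics, so $p(-\lambda)=-p(\lambda)$ is odd. Consequently $D(\lambda)F_0=-\log(p/\lambda)$ is even in $\lambda$, making the right-hand side of the master identity invariant under $(\lambda,\mu)\mapsto(-\lambda,-\mu)$; matching the coefficient of $\lambda^{-m}\mu^{-n}$, invariance reads $(-1)^{m+n}F_{mn}=F_{mn}$, whence $F_{mn}=0$ for $m+n$ odd.

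The only genuine obstacle is performing the bootstrap in the correct order: one must first pin down the sign (respectively the parity support) of $F_{01}$ and then of the $F_{0n}$ before anything can be said about the general $F_{mn}$, and one must check at each stage that the series entering a logarithm has vanishing constant term so the formal expansions are legitimate. Once the chain $F_{1n}\to F_{0n}\to F_{mn}$ is set up, both parts follow from the positivity, respectively the parity-invariance, of $-\log(1-x)$.
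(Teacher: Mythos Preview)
Your proof is correct and takes a genuinely different route from the paper's.

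For part (a), the paper works from the Cauchy kernel identity $e^{D(\lambda)D(\mu)F}=(p(\lambda)-p(\mu))/(\lambda-\mu)$ with $p=\lambda-F_{01}-D(\lambda)F_1$, then takes a logarithm and differentiates in $\lambda$ to obtain a relation expressing $\sum F_{mn}/(n\lambda^{m+1}\mu^n)$ as $e^{-D(\lambda)D(\mu)F}$ times a series in the $F_{1m}$. You instead use the complementary dToda identity $e^{-D(\lambda)D(\mu)F}=1-e^{F_{00}}/(p(\lambda)p(\mu))$ and factor the argument through the $F_{0n}$: first $F_{12}=2F_{01}F_{11}\ge0$ forces $F_{01}\ge0$; then $D(\lambda)F_0=-\log(1-\sum c_m\lambda^{-m})$ with $c_m\ge0$ gives $F_{0n}\ge0$; finally the master identity, again of the shape $-\log(1-x)$, gives $F_{mn}\ge0$. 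Your chain is arguably cleaner, since every step is a composition of series ($e^x$ and $-\log(1-x)$) with manifestly nonnegative Taylor coefficients, whereas the paper's displayed relation involves the factor $e^{-D(\lambda)D(\mu)F}$ whose positivity is not immediate.

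For part (b), the paper sets $\lambda=\mu$ in the kernel identity to get $D(\lambda)^2F=\log\bigl(1+\sum_k F_{1k}\lambda^{-k-1}\bigr)$; the hypothesis $F_{1,2k}=0$ makes the right side even in $\lambda$, yielding $\sum_{m+n\ \mathrm{odd}}F_{mn}/(mn)=0$, and then part (a) is invoked to conclude each term vanishes. Your argument is more structural: $F_{12}=0$ forces $F_{01}=0$, the curve becomes $\lambda=p+F_{11}/p$, uniqueness of the branch gives $p(-\lambda)=-p(\lambda)$, hence $D(\lambda)F_0$ is even, and the two-variable generating function is invariant under $(\lambda,\mu)\mapsto(-\lambda,-\mu)$, killing each $F_{mn}$ with $m+n$ odd directly. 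This makes (b) logically independent of (a), which the paper's proof is not.
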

\begin{Proof}
With $p(\lambda)=\lambda-F_{01}-D(\lambda)F_1$, we have
\begin{align*}
e^{D(\lambda)D(\mu)F}&=\frac{p(\lambda)-p(\mu)}{\lambda-\mu}=1-\frac{(D(\lambda)-D(\mu))F}{\lambda-\mu}\\
&=1-\frac{1}{\lambda-\mu}\sum_{m=1}^{\infty}\frac{F_{1m}}{m}\left(\frac{1}{\lambda^m}-\frac{1}{\mu^m}\right)\\
&=1+\frac{1}{\lambda\mu}\sum_{m=1}^{\infty}\frac{F_{1m}}{m}\left(\sum_{k=1}^m\frac{1}{\lambda^{m-k}\mu^{k-1}}\right).
\end{align*}
Then taking the logarithm and the differentiation with respect to $\lambda$ of this equation, we have
\[
\sum_m\sum_n\frac{F_{mn}}{n}\frac{1}{\lambda^{m+1}\mu^n}=e^{-D(\lambda)D(\mu)F}\sum_{m}
\frac{F_{1m}}{m}\left(\sum_{k=1}^m\frac{m-k+1}{\lambda^{m-k+2}\mu^{k-1}}\right).
\]
This shows the property (a), i.e.  $\forall F_{1m}\ge 0$ implies $\forall F_{mn}\ge 0$.

The second property (b) can be obtained from $\partial p(\lambda)/\partial\lambda=e^{D(\lambda)^2F}$.
This equation gives
\begin{align*}
D(\lambda)^2F=\sum_{n=2}^{\infty}\frac{1}{\lambda^n}\left(\sum_{k+l=n}\frac{F_{kl}}{kl}\right)&=\log\left(
1-\frac{\partial}{\partial \lambda}D(\lambda)F_1\right)\\
&=\log\left(1+\sum_{k=1}^{\infty}\frac{F_{1k}}{\lambda^{k+1}}\right),
\end{align*}
which shows that if all $F_{1,2k}=0$, then
\[
\sum_{n+m={\rm odd}}\frac{F_{mn}}{mn}=0.
\]
Then from (a), we conclude $F_{mn}=0$ if $m+n=$odd.
\end{Proof}
\begin{Remark}\label{FG}
The Faber polynomials $\Phi_n(p)$ appear in a classical complex function theory:
The Bieberbach conjecture sates that if $f(z)$ is a univalent function on $|z|<1$
with the expansion $f(z)=z+a_2z^2+a_3z^3+\cdots$, then the coefficients $a_n$ satisfy
$|a_n|\le n$. Let $g(z)$ be the function defined by
\[
g(z)=\frac{1}{f\left(\frac{1}{z}\right)},
\]
which is univalent for $|z|>1$.
Then the Faber polynomials $\Phi_n(p)$ appear in a expansion,
\[
\log\left(\frac{g(z)-p}{z}\right)=-\sum_{n=1}^{\infty}\frac{\Phi_n(p)}{nz^n}\,.
\]
With the expansion of $\Phi_n(g(z))$ for large $z$,
\begin{equation}\label{cnm}
\Phi_n(g(z))=z^n+n\sum_{m=1}^{\infty}\frac{c_{nm}}{z^m}\,.
\end{equation}
Grunsky defined the Grunsky coefficients $c_{nm}$. Then Grunsky showed that his coefficients should satisfy a sequence of inequalities in order that
$g$ is univalent on $|z|>1$.
In particular, an explicit formula for the Gransky coefficient was found in the paper by Schur in terms of
the coefficients $a_n$ of $f$ \cite{S:45}.  In the section \ref{explicitFnm}, we find an explicit formula of $F_{nm}$ for the case where the cefficients are given by the Catalan numbers and discuss their combinatorial
meaning (our derivation is independent from the calculation used in \cite{S:45}).

  From (\ref{Cauchykernel}) and (\ref{cnm}), one can see that the Grunsky coefficients $c_{nm}$ are related to our two point functions, that is, $
c_{nm}=\frac{1}{nm}F_{nm}$.
The dToda hierarchy then defines an {\it integrable} deformation with an infinite number of parameters $\mathbf{T}=(T_1,T_2,\ldots)$
for the univalent function $f(z)$, so that the Grunsky coefficients are given by the second derivatives of
a function $F$, the free energy, i.e.
\begin{equation}\label{Grunsky}
c_{nm}=\frac{1}{nm}\frac{\partial^2F}{\partial T_n\partial T_m}\,.
\end{equation}
Thus a most important consequence of the dToda hierarchy is to show the existence of a unique function $F$, the free energy.
\end{Remark}

%%%%%%%%%%%%%%%%%%%%%%%%%%%%%%%%%%%%%%%%%%%%

\section{Combintorial Results for the dToda hierarchy} \label{combin-dToda}

The appearance of the Catalan numbers in the dToda hierarchy (see Proposition \ref{CatalanToda}) directly demonstrates a combinatorial meaning of
the functions $F_{nm}$. In particular, if $F_{01}(1; \mathbf{0})=0$ and $F_{11}(1; \mathbf{0})=1$, then we have, from Proposition \ref{CatalanToda},
\[
p(\lambda)=\lambda-\sum_{k=0}^{\infty}\frac{C_k}{\lambda^{2k+1}},
\]
which gives
\begin{equation}\label{F1m}
F_{1,2k+1}=(2k+1) C_k\,.
\end{equation}
In general, the two point functions $F_{nm}$ have the following combinatorial interpretation:
One sees from the connection to random matrices discussed in section \ref{UnitaryToda} that
for the particular choice of $F(T_0,\mathbf{T}) = T_0^2 e_0(\mathbf{\hat{T}}) +C_0(T_0)$ the two point function with $n,m\ge 1$,
\[ F_{nm}(T_0 = 1; \mathbf{T}=\mathbf{0}) = \frac{\partial^2 F}{\partial T_n \partial T_m} \bigg|_{T_0=1; \mathbf{\hat{T}}=\mathbf{0} }
 = \frac{\partial^2 }{\partial T_n \partial T_m} T_0^2 e_0(\mathbf{\hat{T}})
 \bigg|_{T_0=1; \mathbf{\hat{T}}=\mathbf{0}}\,,
\]
is the number of genus 0 connected ribbon graphs with a vertex of degree $n$
and a vertex of degree $m$ (see Theorem \ref{BIZ-thm}).  Here the degree of the vertex represents the
number of edges (ribbons) attached to the vertex.
For example, the $F_{1,2k+1}$ gives the number of connected graphs with a
vertex with a single edge
and a vertex with $2k+1$ edges on a sphere (recall that the number of ways to
connect $2k$ edges
of a single vertex is given by the Catalan number $C_k$).
In this section we compute a remarkable closed formula for the $F_{n,m}(1; \mathbf{0})$ in the case with $F_{01}(1;\mathbf{0}) = 0$ and $F_{11}(1;\mathbf{0}) = 1$.  Also recall that differentiation with respect to $T_0$ has the meaning of counting ribbon graphs with a marked face.

\subsection{Explicit formulae for $F_{nm}(1;\mathbf{0})$}\label{explicitFnm}
Let us first recall Proposition \ref{parity-proposition} (b):  if $F_{1,2k}=0, \forall k>0$ and $F_{1,m}\ge0,
\forall m>0$, we have
\[
F_{nm}=0 \qquad {\rm if}\quad n+m={\rm odd}\,.
\]
Also from Proosition \ref{CatalanToda}, if $F_{10}=0$ and $F_{11}=1$, we have $F_{1,2k}=0$ and
$F_{1,2k+1}=(2k+1)C_k$. So we calculate only $F_{nm}$ with $m+n=$even and positive.
Then we state the following Theorem which gives an explicit formula of $F_{nm}$ for those cases of $m$ and $n$. The method to find the formula is based on the dispersionless Toda hierarchy.
\begin{Theorem}\label{Fformula}
The two point functions $F_{nm}=\frac{\partial^2}{\partial T_n\partial T_m}F$ for the dToda hierarchy (\ref{dTodahierarchy}) with
$F_{01}=0$ and $F_{11}=1$ (i.e. $F_{00}=0$) are given by
\[\left\{\begin{array}{rlll}
F_{0,2k}&=&(k+1)C_k,\qquad k=1,2,\ldots, \\[2.0ex]
\displaystyle{F_{2j+1,2k+1}}&=&{(2j+1)(2k+1)\frac{(j+1)(k+1)}{j+k+1}C_jC_k},\qquad j,k=0,1,2,\ldots,\\[2.0ex]
\displaystyle{F_{2j,2k}}&=&{jk\frac{(j+1)(k+1)}{j+k}C_jC_k},\qquad j,k=1,2,\ldots,\\[2.0ex]
F_{nm}&=&0, \qquad {\rm otherwise}\,,
\end{array}\right.
\]
where $C_j$ is the $j$-th Catalan number.
\end{Theorem}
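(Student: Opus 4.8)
The plan is to exploit the very special form the spectral curve takes under the hypotheses. Since $F_{11}=e^{F_{00}}$ by the dToda equation, the conditions $F_{01}=0$ and $F_{11}=1$ force $F_{00}=0$, so the second equation of \eqref{dTodahierarchy} degenerates to the Joukowski curve
\[
\lambda = p(\lambda)+\frac{1}{p(\lambda)},
\]
and Proposition \ref{CatalanToda} specializes to $p(\lambda)=\lambda-\sum_{k\ge0}C_k\lambda^{-2k-1}$, i.e. $F_{1,2k+1}=(2k+1)C_k$ and $F_{1,2k}=0$. Everything else I want to compute will be read off from this curve.

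First I would convert the dispersionless Hirota relation \eqref{Cauchykernel} into a generating function for the $F_{mn}$. Writing $p=p(\lambda)$, $q=p(\mu)$ so that $\lambda=p+1/p$ and $\mu=q+1/q$, a one-line computation gives $\lambda-\mu=(p-q)(pq-1)/(pq)$, hence $\frac{p(\lambda)-p(\mu)}{\lambda-\mu}=\frac{pq}{pq-1}$. Combined with the expansion of $D(\lambda)D(\mu)F$ from Lemma \ref{f0k}, this yields
\[
\sum_{m,n\ge1}\frac{F_{mn}}{mn}\frac{1}{\lambda^m\mu^n}=-\log\!\left(1-\frac{1}{p(\lambda)p(\mu)}\right).
\]
For the $T_0$-derivatives I would use $p(\lambda)=\lambda e^{-D(\lambda)F_0}$ together with the curve to get $D(\lambda)F_0=\log(\lambda/p)=\log(1+p^{-2})$, so that $\sum_{n\ge1}\frac{F_{0n}}{n}\lambda^{-n}=\log(1+p^{-2})$. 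I would then extract coefficients by residues, changing variables from $\lambda$ to $p$ via $d\lambda=(1-p^{-2})\,dp$ and integrating by parts to remove the logarithms (the boundary terms vanish on the closed contours around $p=\infty$). This produces the closed residue representations
\[
F_{mn}=\oint\frac{dp}{2\pi i}\oint\frac{dq}{2\pi i}\,\frac{(p+1/p)^m(q+1/q)^n}{(pq-1)^2},\qquad m,n\ge1,
\]
\[
F_{0,2c}=2\oint\frac{dp}{2\pi i}\,\frac{(p^2+1)^{2c-1}}{p^{2c+1}},
\]
the overall sign being fixed by the normalization $F_{11}=1$.

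Two of the four cases are then immediate. The residue in $p$ is nonzero only when the parities of $m$ and the summation index match, which forces $F_{mn}=0$ for $m+n$ odd (recovering Proposition \ref{parity-proposition}(b)); and the single residue above gives $F_{0,2c}=2\binom{2c-1}{c}=\binom{2c}{c}=(c+1)C_c$. For the remaining two families I would expand $\frac{1}{(pq-1)^2}=\sum_{l\ge0}(l+1)(pq)^{-l-2}$ and evaluate the two residues, reducing $F_{2\alpha+1,2\beta+1}$ and $F_{2\alpha,2\beta}$ to the explicit single sums
\[
\sum_{l\ge0}(2l+1)\binom{2\alpha+1}{\alpha-l}\binom{2\beta+1}{\beta-l},\qquad \sum_{s\ge1}2s\binom{2\alpha}{\alpha-s}\binom{2\beta}{\beta-s}.
\]

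The main obstacle is the evaluation of these two sums in closed form; each is a terminating, (very-)well-poised hypergeometric series at unit argument. The even-even sum equals $2\binom{2\alpha}{\alpha-1}\binom{2\beta}{\beta-1}\,{}_3F_2(2,1-\alpha,1-\beta;2+\alpha,2+\beta;1)$, and Dixon's theorem sums the ${}_3F_2$ to $(1+\alpha)(1+\beta)/\bigl(2(\alpha+\beta)\bigr)$; after the simplification $(1+\alpha)\binom{2\alpha}{\alpha-1}=\alpha\binom{2\alpha}{\alpha}$ this gives $\frac{\alpha\beta}{\alpha+\beta}\binom{2\alpha}{\alpha}\binom{2\beta}{\beta}$, which is the stated $F_{2\alpha,2\beta}$. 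The odd-odd sum is the very-well-poised ${}_4F_3(1,3/2,-\alpha,-\beta;1/2,2+\alpha,2+\beta;1)$, which evaluates to $(\alpha+1)(\beta+1)/(\alpha+\beta+1)$ and yields $F_{2\alpha+1,2\beta+1}=\frac{(2\alpha+1)(2\beta+1)(\alpha+1)(\beta+1)}{\alpha+\beta+1}C_\alpha C_\beta$ after using $(\alpha+1)C_\alpha=\binom{2\alpha}{\alpha}$. I expect the bookkeeping of the hypergeometric normalizations to be the only delicate point; if one prefers to avoid invoking the classical summation theorems, both identities can instead be certified uniformly by the Wilf--Zeilberger method, and small cases ($F_{11}=1$, $F_{3,1}=3$, $F_{2,2}=2$, $F_{3,3}=12$) provide a check.
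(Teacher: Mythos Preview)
Your argument is correct. Both you and the paper reduce the problem to the same single sums
\[
F_{2j+1,2k+1}=\sum_{l\ge0}(2l+1)\binom{2j+1}{j-l}\binom{2k+1}{k-l},\qquad
F_{2j,2k}=\sum_{l\ge1}2l\binom{2j}{j-l}\binom{2k}{k-l},
\]
and the computation of $F_{0,2k}$ is essentially identical in the two treatments. The genuine difference lies in how the sums are obtained and how they are evaluated.

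For the derivation, the paper extracts $F_{mn}$ from the Faber polynomials $\Phi_n(p)=\lambda^n-D(\lambda)F_n$ via a single contour integral, whereas you go directly through the first equation of the hierarchy, obtaining the compact double-residue representation $F_{mn}=\oint\oint(p+1/p)^m(q+1/q)^n(pq-1)^{-2}\,\frac{dp\,dq}{(2\pi i)^2}$; expanding the kernel reproduces the same sums. Your route is slightly more streamlined and makes the symmetry $F_{mn}=F_{nm}$ and the parity vanishing manifest.

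For the evaluation, the paper gives a self-contained elementary argument: it rewrites the odd--odd sum using the key identity $(2j-2l+1)\binom{2j+1}{l}=(2j+1)\bigl[\binom{2j}{l}-\binom{2j}{l-1}\bigr]$, reduces the claim to a polynomial identity for an auxiliary $G(j,k)$, and proves that identity by an iterated telescoping/degree-lowering induction. You instead recognize the sums as (very-)well-poised hypergeometric series at unit argument; the even--even case is precisely a Dixon ${}_3F_2$ and your evaluation $(1+\alpha)(1+\beta)/\bigl(2(\alpha+\beta)\bigr)$ is correct. The odd--odd case is a terminating very-well-poised ${}_4F_3$ with leading parameter $1$; this is not literally Dixon, so one either invokes the appropriate classical summation (or reduces it to a ${}_3F_2$ via the same binomial identity the paper uses) or, as you note, certifies it by the WZ method. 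Your approach is shorter and places the identities in a standard framework; the paper's approach avoids any appeal to hypergeometric machinery and keeps everything inside elementary binomial manipulations.
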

\begin{Proof}
We first derive the formula $F_{0,2k}$ for $k\ne 0$:
We start with the definition of $p$,
\[ p(\lambda) = \lambda e^{-D(\lambda) F_0}\,. \]
which implies
\[  D(\lambda) F_0 =\sum_{k=1}^{\infty}\frac{F_{0,2k}}{2k}\frac{1}{\lambda^{2k}}= \log\left( \frac{\lambda}{p(\lambda)} \right)\,. \]
Therefore we may compute $F_{0, 2k}$ by the contour integral
\[ \frac{1}{2k} F_{0, 2k} =  \oint _{\lambda=\infty}\frac{d\lambda}{2\pi i}\,\lambda^{2k-1} \log\left( \frac{\lambda}{p(\lambda)}\right)\,.
\]
Using the dToda curve,
\[ \lambda(p) = p + \frac{1}{p}=p\left(1+\frac{1}{p^2}\right) \,,\]
we write the integral with respect to $p$, and
then integrate by parts,
\begin{align*}
\frac{1}{2k} F_{0, 2k} &=
 \oint _{p=\infty} \frac{dp}{2\pi i}\,\frac{1}{2k}\left(\frac{d}{dp}\lambda(p)^{2k}\right)\, \log\left( 1 + \frac{1}{p^2}\right)  \\
&= \oint_{p=\infty}\frac{dp}{2\pi i}\,\frac{1}{k}\, p^{2k-3} \left(1+\frac{1}{p^2}\right)^{2k-1} = \frac{1}{k} \binom{2k-1}{k-1} \,.
\end{align*}
In the last step, we have used the expansion, $(1+\frac{1}{p^2})^{2k-1}=\sum_{j=0}^{2k-1}\binom{2k-1}{j}p^{-2j}$.
Thus we obtain
\begin{equation*}\label{F0}
F_{0,2k}=(k+1) C_k\,.
\end{equation*}

We now derive the formula $F_{nm}$ for $nm\ne 0$: We use the Faber polynomials in Proposition
\ref{faber}, i.e.
\[
\Phi_n(p)=[\lambda(p)^n]_{\ge 0}=\lambda^n-D(\lambda)F_n\,.
\]
This implies
\[
\frac{F_{mn}}{m}=-\oint_{\lambda=\infty}\frac{d\lambda}{2\pi i}\,\lambda^{m-1}\Phi_n(p(\lambda))\,.
\]
Now changing variables from $\lambda$ to $p$, we have
\begin{align*}
{F_{mn}} &= -\oint_{p=\infty}\frac{dp}{2\pi i}\,\left(\frac{d}{dp}\lambda(p)^m\right)
\Phi_n(p) \\
&=-\oint_{p=\infty}\frac{dp}{2\pi i}\,\left(p+\frac{1}{p}\right)^n\,\frac{d}{dp}\left[\left(p+\frac{1}{p}\right)^m\right]_{\le 0}\,.
\end{align*}
Here we have noticed that $\Phi_n(p)$ can be replaced by $\lambda(p)^n$ with
$d([\lambda(p)^m]_{<0}])$ for $d(\lambda(p)^m)$. The notation $[f(p)]_{<0}$ implies
the part of the Laurent series $f(p)$ with the inverse powers of $p$, i.e. $[f(p)]_{<0}=f(p)-[f(p)]_{\ge0}$.
Thus we have
\begin{equation}\label{fmnint}
{F_{mn}}=\sum_{k=0}^{\left[\frac{m}{2}\right]}(m-2k)\binom{m}{k}\oint_{p=\infty}
\frac{dp}{2\pi i}\,\left(p+\frac{1}{p}\right)^np^{2k-m-1}\,.
\end{equation}
Then from the expansion $(p+\frac{1}{p})^n=\sum_{l=0}^n\binom{n}{l}p^{n-2l}$, we have the residue
integral
\[
\oint_{p=\infty}\frac{dp}{2\pi i}\,p^{n-m+2(k-l)-1}\,,
\]
which implies $n-m+2(k-l)=0$ in the non-zero case. Therefore, $F_{mn}\ne 0$ only if $n-m$ is even (recall Proposition \ref{parity-proposition}). So we have two cases: (a) $m=2j+1, n=2k+1$, and
(b) $m=2j, n=2k$. With the residue calculation for the case (a), we obtain
\[
{F_{2j+1,2k+1}}=\sum_{l=0}^{{\rm min}(j,k)}(2l+1)\binom{2j+1}{j-l}\binom{2k+1}{k-l}\,.
\]

We consider the case (a) (the case (b) can be shown in the similar way):
Let us assume $k\ge j$ (because of the symmetry in $j,k$, this case is enough to show).
Also confirm that the case of $j=0$ gives (\ref{F1m}).
Then from (\ref{fmnint}), we have
\[
F_{2j+1,2k+1}=\sum_{l=0}^j(2j-2l+1)\binom{2j+1}{l}\binom{2k+1}{k-j+l}.
\]
We note the following identity which is a key equation in the proof,
\begin{equation}\label{key}
(2j-2l+1)\binom{2j+1}{l}=(2j+1)\left[\binom{2j}{l}-\binom{2j}{l-1}\right]\,.
\end{equation}
Also we write
\[
\frac{1}{2k+1}\binom{2k+1}{k-j+l}=\frac{k(k-1)\cdots(k-j+l+1)}{(k+2)(k+3)\cdots(k+j-l+1)}\,C_k\,.
\]
Then we have the following formula with the common denominator $(k+2)(k+3)\cdots(k+j+1)$,
\begin{align*}
\frac{F_{2j+1,2k+1}}{(2j+1)(2k+1)}&=\sum_{l=0}^j\left[\binom{2j}{l}-\binom{2j}{l-1}\right]
\frac{k(k-1)\cdots(k-j+l+1)}{(k+2)(k+3)\cdots(k+j-l+1)}\,C_k \\
&=\frac{C_k}{(k+2)(k+3)\cdots(k+j+1)}G(j,k)\,,
\end{align*}
where $G(j,k)$ is the polynomial of degree $j$ in $k$
\begin{equation}
 \label{G1-virgil}
 G(j, k) := \sum_{l=0}^j \left[ \binom{2j}{l} - \binom{2j}{l-1} \right] \prod_{\alpha=0}^{j-l-1} (k-\alpha) \times \prod_{\alpha=0}^{l-1} (k+j+1-\alpha) \,.
\end{equation}
Here the products $\displaystyle{\prod_{\alpha=0}^{-1}}(\cdots)$ are understood to give 1 when $l=0$ and $l=j$.
  We then claim that
the polynomial $G(j,k)$ can be expressed by the simple form,
\begin{align}\label{G}
G(j,k) &= \binom{2j}{j}(k+1)(k+2)\cdots (k+j)\\
&=(j+1)C_j(k+1)(k+2)\cdots (k+j)\,.\nonumber
\end{align}
Here note that the coefficient $\binom{2j}{j}$ is just $\sum_{l=0}^j\left[\binom{2j}{l}-\binom{2j}{l-1}\right]$.
This then gives the desired formula, that is,
this equation for $G(j,k)$ leads to the formula $F_{2j+1,2k+1}$.

 To prove the claim, we write (\ref{G1-virgil}) in the sum of
the terms with the coefficients $\binom{2j}{l}$ (like a telescoping sum):
\begin{equation}\label{G1-B}
G(j, k) = - 2 \sum_{l=0}^{j-1} (j-l) \binom{2j}{l} \prod_{\alpha=0}^{j-l-2} (k-\alpha) \times \prod_{\alpha=0}^{l-1} (k+j+1-\alpha)
+ \binom{2j}{j} (k+2) (k+3) \dots (k+j+1) \,,
\end{equation}
where the last term in (\ref{G1-B}) is the remainder from the telescoping.
We then observe that this reduces the degree of the polynomials of both sides of (\ref{G}), that is (\ref{G}) becomes the equation
\begin{equation}\label{G1}
 2\sum_{l=0}^{j-1}(j-l)\binom{2j}{l} \prod_{\alpha=0}^{j-l-2} (k-\alpha) \times \prod_{\alpha=0}^{l-1} (k+j+1-\alpha)
=j\binom{2j}{j}(k+2)(k+3)\cdots (k+j) \,,
\end{equation}
where the right hand side of this equation is the difference between the  remainder in (\ref{G1-B}) and the right hand side of (\ref{G}).
We then note that the coefficients in the sum can be written in the form similar to (\ref{key}),
\[
(j-l)\binom{2j}{l}=j\left[\binom{2j-1}{l}-\binom{2j-1}{l-1}\right]\,.
\]
Here we have used the relations (the second one is the Pascal rule),
\[
l\binom{2j}{l}=2j\binom{2j-1}{l-1}\qquad{\rm and}\qquad  \binom{2j}{l}=\binom{2j-1}{l}+\binom{2j-1}{l-1}\,.
\]
Then we can write (\ref{G1}) in a similar form to (\ref{G}),
\[
\sum_{l=0}^{j-1}\left[\binom{2j-1}{l}-\binom{2j-1}{l-1}\right]
\prod_{\alpha=0}^{j-l-2} (k-\alpha) \times \prod_{\alpha=0}^{l-1} (k+j+1-\alpha)
=\binom{2j-1}{j-1}(k+2)(k+3)\cdots(k+j)\,.
\]
Again we write the sum in the terms with the coefficients $\binom{2j-1}{l}$ and move the remainder to the right hand side to arrive at the equation
\[
\sum_{l=0}^{j-2}(2j-2l-1)\binom{2j-1}{l}
\prod_{\alpha=0}^{j-l-3} (k-\alpha) \times \prod_{\alpha=0}^{l-1} (k+j+1-\alpha)
=\binom{2j-1}{j-1}(k+3)(k+4)\cdots (k+j).
\]
Both sides are now polynomials of degree $j-2$ in $k$. In the similar way with the identity,
\[
(2j-2l-1)\binom{2j-1}{l}=(2j-1)\left[\binom{2j-2}{l}-\binom{2j-2}{l-1}\right]\,,
\]
the above equation takes the similar form as before, i.e.
\[
\sum_{l=0}^{j-2}\left[\binom{2j-2}{l}-\binom{2j-2}{l-1}\right]
\prod_{\alpha=0}^{j-l-3} (k-\alpha) \times \prod_{\alpha=0}^{l-1} (k+j+1-\alpha)
=\binom{2j-2}{j-2}(k+3)(k+4)\cdots (k+j).
\]
Then we can further reduce the degree of the polynomials in the same way. Continuing this process,
we complete the proof.
\end{Proof}

%%%%%%%%%%%%%%%%%%%%%%%%%%%%%%%%%%%

\subsection{Graph enumeration on the sphere}

We recall that $F_{nm}$ has a combinatorial meaning of enumeration of
connected (ribbon) graphs on a sphere.
 The presentation of such combinatoric problems in terms of the function $F(T_0; \mathbf{T})$ arising from the Gaussian unitary ensemble example is not a new idea.  However viewing the genus zero solutions as arising purely from the dToda hierarchy is a new approach to solving the problem, although a similar idea was used in the case of a single even time for the computations of \cite{EMP-2008}.
 Surprisingly we have found a closed form for the numbers $F_{nm}$ with the conditions $F_{01}=0$
 and $F_{11}=1$ for general $n$ and $m$.

The solution of the two-vertex problem given by Theorem \ref{Fformula}
is new, previously
 the problem has been solved only in the so called ``dipole" case when the
 vertices are of the same degree (i.e. $F_{nn}$) \cite{J:1993, KL:93}.  These authors consider the general genus problem of two vertices of the same degree on a genus $g$ surface.
 Similar formulae were also found for other low genus problems in \cite{EMP-2008, BIZ}, but again the vertices must all have the same degree.

 Let us start by giving a combinatoric argument for the $F_{0, 2k}$  formula of
Theorem \ref{Fformula}, that is, $F_{0,2k}$  gives the number of genus 0 ribbon graphs
with a vertex of degree $2k$ and a marked face:
As shown above counting graphs with marked faces in the case of fixed genus is
equivalent to just counting the number of graphs, the number of faces being
given by the vertex structure.  So the problem is then to count the number of
genus 0 ribbon graphs with a single vertex of degree $2k$.
It is easy to see that this is equivalent to counting the number of ways to
make pair-wise connections
with $2k$ points on a circle  without crossing and the number of sections
divided by the chords giving
the connections of the vertices. The first number is well-knwon and is given
by the $k$-th Catalan number $C_k$.
Then the sphere is divided up to $k+1$ sections. So we have
$(k+1)C_k$ which is $F_{0,2k}$ (see Theorem \ref{Fformula}).

Now we consider the two vertices problem, and show that $F_{nm}$ satisfy the following
recursion relations
as a direct consequence of the enumeration of connected ribbon graphs
appearing in the problem:
\begin{Proposition}\label{FmnGraph}
The two point functions $F_{m n}$ with $nm\ne 0$ for the two-vertex problem satisfy the recursion relations,
\[\left\{\begin{array}{lll}
F_{2j+1, 2k+1} =2 \sum_{l=1}^{j} C_{j-l} F_{2l- 1, 2k+1} + (2k+1)C_{j+k}
\qquad j,k=0,1,2,\ldots\\
F_{2j, 2k} =  2 \sum_{l=2}^{j} C_{j-l} F_{2l - 2, 2k} + 2k C_{j+k-1} \qquad j,k=1,2,\ldots
\end{array}\right.
\]
\end{Proposition}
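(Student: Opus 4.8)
The plan is to derive both recursions from a single combinatorial decomposition of the connected genus-$0$ ribbon graphs that $F_{nm}$ enumerates (Theorem \ref{BIZ-thm}): $F_{2j+1,2k+1}$ counts the connected ribbon graphs on the sphere having one vertex $A$ of degree $2j+1$ and one vertex $B$ of degree $2k+1$ with labelled half-edges, and $F_{2j,2k}$ is the analogous count for degrees $2j$ and $2k$. I would distinguish the half-edge of $A$ carrying label $1$, call it the root $r$, and partition all graphs according to the half-edge that $r$ is paired with. Since $r$ is a single fixed half-edge this partition is disjoint and exhaustive, splitting $F_{nm}$ into a contribution in which $r$ meets $B$ and one in which $r$ meets $A$.

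First I would handle the contribution where $r$ is paired with a half-edge $s$ of $B$. There are $2k+1$ (resp.\ $2k$) choices of the labelled half-edge $s$, and the edge $\{r,s\}$ is a bridge. Contracting this bridge merges $A$ and $B$ into a single vertex of degree $2(j+k)$ (resp.\ $2(j+k-1)$) whose remaining half-edges are matched among themselves; because contracting a bridge does not change the genus, the result is a one-vertex genus-$0$ ribbon graph, i.e.\ a non-crossing perfect matching, and such matchings are counted by $C_{j+k}$ (resp.\ $C_{j+k-1}$). As the half-edges retain their labels the contraction is invertible, so this contribution equals $(2k+1)C_{j+k}$ (resp.\ $2k\,C_{j+k-1}$), which is exactly the inhomogeneous term of each recursion.

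Next I would handle the contribution where $r$ is paired with another half-edge of $A$, producing a loop $\ell$ at $A$. Together with $A$ the loop $\ell$ bounds two disks on the sphere, and $B$ lies in precisely one of them; planarity then forces the half-edges in the disk free of $B$ to be matched among themselves without crossings, contributing a Catalan number $C_{j-l}$, while the half-edges in the disk containing $B$, together with $B$, constitute a smaller two-vertex graph counted by $F_{2l-1,2k+1}$ (resp.\ $F_{2l-2,2k}$). The factor $2$ appears because, for each admissible size split into a $B$-free block of $2(j-l)$ half-edges and the complementary $B$-block, there are exactly two ways to realise it, coming from the two assignments of these blocks to the two disks cut out by $\ell$, and these two configurations are always distinct. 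Summing over $l$ produces $2\sum_l C_{j-l}F_{\cdots}$; the lower limit of the summation is imposed by connectivity, since in the even-even case the value $l=1$ would leave no half-edge of $A$ joining it to $B$ and hence disconnect $B$, forcing the sum to begin at $l=2$, whereas in the odd-odd case $l=1$ is admissible.

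The step I expect to demand the most care is verifying that both reductions are genuine bijections on \emph{labelled} genus-$0$ graphs: I must confirm that bridge contraction and its inverse set up a bijection with non-crossing matchings so that no configuration is lost or counted twice, and that the loop-peeling partitions the labelled half-edges correctly, in particular that the count of realisations giving the factor $2$ is exact and that the $B$-free block is always obliged to close up among itself. As an independent check I would verify each recursion directly from Theorem \ref{Fformula}: substituting $F_{2l-1,2k+1}=(2l-1)(2k+1)\frac{l(k+1)}{l+k}C_{l-1}C_k$ together with the closed form for $F_{2j+1,2k+1}$ (and the analogous even formulas) turns each identity into a finite summation over Catalan numbers that can be checked, and small cases such as $F_{3,1}=3$ and $F_{4,2}=8$ already agree with both the decomposition and the formulas.
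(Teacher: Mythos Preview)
Your approach is correct and is essentially the same as the paper's: both fix the half-edge labelled $1$ at the first vertex and partition the graphs according to its mate, obtaining the inhomogeneous Catalan term when the mate lies on the second vertex (reducing to a one-vertex problem of degree $2(j+k)$, resp.\ $2(j+k-1)$) and the recursive sum when the mate lies on the first vertex, with the factor $2$ arising because exactly two choices of mate produce the same block-size split. The paper does not include your closing verification against the closed formulas of Theorem~\ref{Fformula}, but otherwise the arguments coincide.
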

\begin{Proof}
\begin{figure}
\begin{center}
\includegraphics[width=9cm]{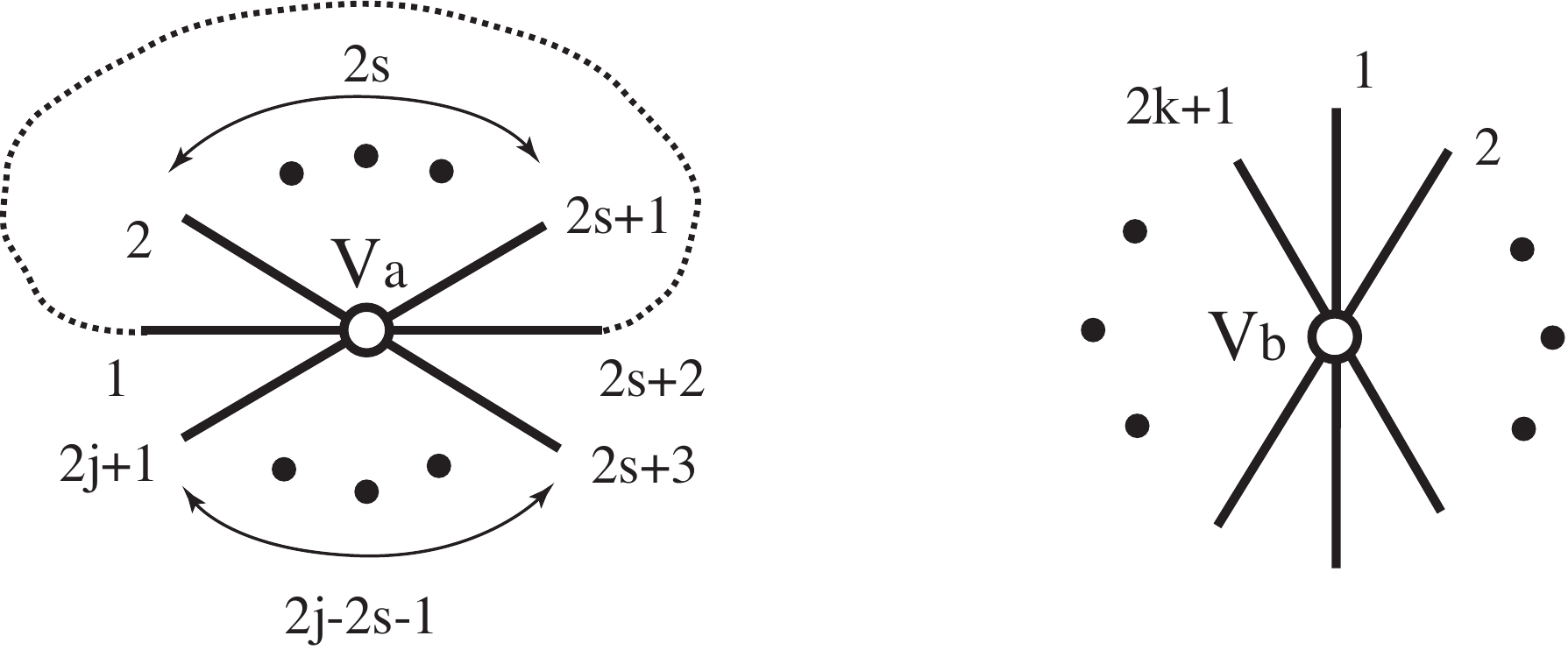}
\caption{Two-vertex problem with degrees $2j+1$ and $2k+1$.
The gluing of the edge $1$ of $V_a$ to the edge $2s+2$ of the same vertex divides into the two
sets of edges; the set of $2s$ edges marked by $\{2,3,\ldots,2s-1\}$ is of the one-vertex problem, since
those edges can not cross the glued ribbon between the edges $1$ and $2s+2$ on a sphere.
The gluing of the $2j-2s-1$ edges of the vertex $V_a$ to $2k+1$ edges of $V_b$
gives the two-vertex problem.\label{isolate}
}
\end{center}
\end{figure}
We consider the case of $F_{2j+1,2k+1}$ (the case of $F_{2j,2k}$ can be shown in the same way).
We will use $V_a$ and $V_b$ to refer to the vertex of degree $2j+1$ and $2k+1$ respectively.
We start with the graph, in which a specific edge, say 1, of $V_a$ is glued
to an edge of the vertex $V_b$. The number of ways to accomplish this gluing is $2k+1$ (the number of edges at $V_b$).
Having glued the edge 1 of $V_a$ with an edge of $V_b$, we have the situation of one-vertex problem
of degree $2j+2k$. So the number of connected graphs for each situation is given by the Catalan number $C_{j+k}$. Thus the total number of the edge 1 gluing is $(2k+1)C_{j+k}$, which is the
last term in the equation.
Now we calculate the other cases where the edge 1 is glued to another edge of $V_a$.
Let  the edge 1 be glued with the edge $2s+2$ ($s=0,1,\ldots,j-1$).
This divides the edges of $V_a$ into two sets one of which is cut off the rest of the graph (see Figure \ref{isolate}).
This group of edges is marked by $\{2, 3, \ldots, 2s+a\}$, and these $2s$ edges of $V_a$ must be glued together within this group.  This is the one-vertex problem of degree $2s$, and the number of
gluings is $C_{s}$.  Now we are left with the two vertices problem with the vertex of degree $2j-2s-1$
for $V_a$ and the vertex of degree $2k+1$. The number of connected graphs of this problem is,
of course, $F_{2j-2s-1,2k+1}$.  This situation with one-vertex problem of degree $2s$ and
two-vertex problem of degrees $2j-2s-1$ and $2k+1$ appears also the case where the edge 1
is glued with the edge $2j-2s+1$. Here the group of $2s$ edges in the one-vertex problem is $\{2j-2s+2,
2j-2s+3,\ldots,2j+1\}$.
So we have the total number of connected graphs is
\[
F_{2j+1,2k+1}= 2\sum_{s=0}^{j-1}C_sF_{2j-2s-1,2k+1}+(2k+1)C_{j+k}\,.
\]
Then changing the variables in the sum with $s=j-l$, we have the equation.
\end{Proof}
We also note that these relations in Proposition \ref{FmnGraph} define a recursion relation of the Faber polynomials on the dToda curve:
We consider only the first equation (the second equation can be shown in the same manner).
Multiplying $\frac{1}{(2k+1)\lambda^{2k+1}}$ to the first equation in Proposition \ref{FmnGraph} and summing up for $k=0,1,2\ldots$, we obtain, after rearranging the terms,
\begin{equation}\label{Frecursion}
\Phi_{2j+1}-2\sum_{l=1}^jC_{j-l}\Phi_{2l-1}=\lambda^{2j}\Phi_1-\sum_{l=1}^jC_{j-l}\lambda^{2l-1}\,.
\end{equation}
(Recall that $\Phi_n=\lambda^n-\sum_{m=1}^{\infty}\frac{1}{m\lambda^m}F_{nm}$ and
$\Phi_1=p=\lambda-\sum_{k=0}^{\infty}\frac{1}{\lambda^{2k+1}}C_k$.) This can be considered as
a recursion relation of the Faber polynomials defined on the dToda curve $\lambda=p+\frac{1}{p}$, that is,
with \eqref{Frecursion}, one can define $\Phi_{2j+1}$ from $\Phi_1=p$. Conversely, starting from this recursion, one can derive the first equation of Proposition \ref{FmnGraph}.  This implies that we have an alternative proof of this
Proposition using only the Faber polynomials, that is, the explicit form of two point functions $F_{mn}$ obtained in Theorem \ref{Fformula} satisfy the recursion relations of Proposition \ref{FmnGraph}.

Let us now show that the relation \eqref{Frecursion} can be directly derived from the dToda curve. We first note that the left hand side is a polynomial in $p$.  It is indeed easy to see that the projection of the right hand side on the polynomial part gives the polynomial in the right hand side. This means that  the non-polynomial part in the right hand side must vanish, when $(\lambda, p)$ is on the dToda curve, i.e. $\lambda=p+\frac{1}{p}$. So we want to prove
\begin{equation}\label{NonP}
\left[\lambda(p)^{2j}p\right]_-=\left[\sum_{l=1}^jC_{j-l}\lambda(p)^{2l-1}\right]_-\,,
\end{equation}
where $[f(p)]_-$ is the non-polynomial part of the Laurent series $f(p)$. With the dToda curve, this leads to
\[
\sum_{k=0}^{j-1}\binom{2j}{k}p^{-2(j-k)+1}=\sum_{l=0}^{j-1}\sum_{k=0}^{j-l-1}C_l\binom{2(j-l)-1}{k}p^{-2(j-k)+2l+1}\,.
\]
Then from the coefficients of the power $p^{-2(j-k)+1}$ and using the Pascal relation, $\binom{2j}{k}=\binom{2j-1}{k}+\binom{2j-1}{k-1}$, we have the following Lemma
which proves equation \eqref{NonP}:
\begin{Lemma} For $j>k$, we have
\[
\binom{2j-1}{k-1}=\sum_{l=1}^kC_l\binom{2(j-l)-1}{k-l}\,.
\]
\end{Lemma}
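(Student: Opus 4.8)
The plan is to recast both sides as coefficients of a single pair of generating functions and reduce the identity to the defining functional equation of the Catalan numbers. Throughout I write $C(x)=\sum_{l\ge0}C_lx^l=\frac{1-\sqrt{1-4x}}{2x}$, which satisfies $C(x)=1+xC(x)^2$, equivalently $xC(x)^2=C(x)-1$. The decisive observation is that in the sum $\sum_{l=1}^kC_l\binom{2(j-l)-1}{k-l}$ the quantity $j-k$ is preserved by every term, so I would first set $n:=j-k\ge1$ and substitute $b=k-l$. This turns the right-hand side into $\sum_{b=0}^{k-1}C_{k-b}\binom{2b+2n-1}{b}$, which is exactly the coefficient of $x^k$ in the product $(C(x)-1)\sum_{b\ge0}\binom{2b+2n-1}{b}x^b$, since $C(x)-1=\sum_{c\ge1}C_cx^c$.

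The key auxiliary input I would establish is the classical generalized-binomial identity
\[
A_r(x):=\sum_{b\ge0}\binom{2b+r}{b}x^b=\frac{C(x)^r}{\sqrt{1-4x}},\qquad r\ge0.
\]
Granting this with $r=2n-1\ge1$ (the hypothesis $j>k$ is precisely what guarantees $2n-1\ge1$, so the formula applies), the right-hand side of the Lemma becomes $[x^k]\,(C(x)-1)\,C(x)^{2n-1}/\sqrt{1-4x}$. The left-hand side is handled in the same way: writing $\binom{2j-1}{k-1}=\binom{2(k-1)+(2n+1)}{k-1}$ and applying the identity with $r=2n+1$ gives $\binom{2j-1}{k-1}=[x^{k-1}]A_{2n+1}(x)=[x^k]\,x\,C(x)^{2n+1}/\sqrt{1-4x}$.

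It then suffices to show that the two generating functions agree, i.e. $xC(x)^{2n+1}=(C(x)-1)C(x)^{2n-1}$; dividing by the nonzero formal power series $C(x)^{2n-1}$ (its constant term is $C_0=1$) leaves exactly $xC(x)^2=C(x)-1$, the Catalan functional equation, which completes the proof. I expect the only genuine work to be establishing $A_r(x)=C(x)^r/\sqrt{1-4x}$; the cleanest route is the Pascal relation $\binom{2b+r}{b}=\binom{2b+r-1}{b}+\binom{2b+r-1}{b-1}$, which after reindexing yields the three-term recursion $A_r=A_{r-1}+xA_{r+1}$, and $C^r/\sqrt{1-4x}$ satisfies this precisely because $C=1+xC^2$ (one pins down the solution using the base cases $A_0=1/\sqrt{1-4x}$ and $A_1=C/\sqrt{1-4x}$). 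Alternatively $A_r$ follows directly from Lagrange inversion applied to $C$. Everything downstream of this standard identity is purely formal, so once it is in hand no real obstacle remains.
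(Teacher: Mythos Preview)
Your proof is correct and, at its core, follows the same route as the paper: both arguments recognise the two sides as Cauchy--product coefficients against the Catalan generating function and reduce the identity to $xC(x)^2=C(x)-1$. The paper phrases coefficient extraction as contour integrals and establishes the key auxiliary expansion in the form $\dfrac{(z-1)^i}{2-z}=\sum_{n\ge i}\binom{2n-1}{n-i}t^n$ with $z=C(t)$, proved by residue calculus after the substitution $t=(z-1)/z^2$; since one checks easily that $2-z=z\sqrt{1-4t}$ and $(z-1)^i=t^iz^{2i}$, this identity is exactly your $A_{2i-1}(t)=C(t)^{2i-1}/\sqrt{1-4t}$. The only genuine difference is how this auxiliary identity is obtained: your Pascal-recursion argument (with the observation that $A_r-A_{r-1}$ has vanishing constant term, so $A_{r+1}=(A_r-A_{r-1})/x$ makes sense in formal power series) is a clean, purely algebraic alternative to the paper's contour-integral computation. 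The overall architecture of the two proofs is otherwise the same.
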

\begin{Proof}
Let us first note that we may write the second binomial coefficient as
\[ \binom{2 (j-l) -1}{(j-l) - (j-k)} =\binom{2n-1}{n-i}\,,
\]
where we denoted $n=j-l$ and $i=j-k$. We also note
\begin{equation}\label{binom}
\binom{2n-1}{n-i} = \frac{(n+i)}{2i} \frac{i}{n} \binom{2n}{n-i}.
\end{equation}

Now let $z(t)$ be a generating function of the Catalan numbers,
\[
z=\sum_{j=0}^{\infty}C_jt^j\,,
\]
which satisfies $1=z-tz^2$ with $z=1$ at $t=0$.
Then one can show that
\[
(z-1)^i=\sum_{n=i}^{\infty}\frac{i}{n}\binom{2n}{n-i}t^n\,.
\]
This result can be found by computing the contour integral, using $t=\frac{z-1}{z^2}$ and $dt=\frac{2-z}{z^3}dz$,
\begin{align*}
 \oint_{t=0}\frac{dt}{2\pi i}\, \frac{1}{t^{n+1}}(z-1)^i  &=  \oint_{z=1} \frac{dz}{2\pi i}\frac{ z^{2n-1} }{(z-1)^{n-i+1} } (2-z)  \\
&= \oint_{u=0} \frac{du}{2\pi i}\,\frac{ (u+1)^{2n-1} }{ u^{n-i+1} } (1 - u) \\
&= \binom{2n-1}{n-i} - \binom{2n-1}{n-i-1} = \frac{i}{n} \binom{2n}{n-i} \,.
\end{align*}
Then from \eqref{binom}, we see that
\begin{align*}
 \binom{2n-1}{n-i} &= \oint_{t=0}\frac{dt}{2\pi i} \,\frac{1}{t^{n+i}} \frac{1}{2i} \frac{d}{dt} \left[  t^i (z-1)^i \right] \\
 &= \oint_{t=0}\frac{dt}{2\pi i} \,\frac{1}{2t^{n+1}}  (z-1)^{i-1} \left( z-1 + t z' \right) \\
 &=  \oint_{t=0}\frac{dt}{2\pi i} \,\frac{1}{2t^{n+1}}  (z-1)^i \left(1 + \frac{z}{2-z} \right)\\
 &=  \oint_{t=0}\frac{dt}{2\pi i} \,\frac{1}{t^{n+1}}  \frac{(z-1)^i }{2-z} \,.
 \end{align*}
 This implies that
 \[
 \frac{(z-1)^i}{2-z}=\sum_{n=i}^{\infty}\binom{2n-1}{n-i}t^n\,.
 \]
Also with $C_0=1$, we have $z-1=\sum_{l=1}^{\infty}C_lt^l$, so that we see by setting $i=j-k$,
\begin{align*}
(z-1)\frac{(z-1)^i}{2-z}&=\sum_{l=1}^{\infty}\sum_{n=j-k}^{\infty}C_l\binom{2n-1}{n-(j-k)}t^{n+l}\\
&=\sum_{m=j-k+1}^{\infty}\sum_{l=1}^{m-j+k}C_l\binom{2(m-l)-1}{(m-l)-(j-k)}t^m\,.
\end{align*}
 Then the sum in the Lemma is given by the coefficient of $t^j$, i.e.
 \begin{align*}
  \sum_{l=1}^k C_l \binom{2(j-l)-1}{k-l}  &= \oint_{t=0} \frac{dt}{2\pi i}\,\frac{1}{t^{j+1}}(z-1)\frac{ (z-1)^{j-k} }{ 2-z } \\
 &= \oint_{z=1} \frac{dz}{2\pi i}\, \frac{ z^{2j-1}}{(z-1)^k} = \oint_{u=0}  \frac{du}{2\pi i}\, \frac{ (u+1)^{2j-1}}{u^k} \\
 &= \binom{2j-1}{k-1} \,.
 \end{align*}
  \end{Proof}
 This Lemma proves the relation \eqref{NonP} which leads to the recursion relation of the Faber polynomials \eqref{Frecursion} defined on the dToda curve. The point here is that the combnatorial meaning of the two point functions $F_{mn}$ given in Proposition \ref{FmnGraph} can be
directly obtained from the structure of the Faber polynomials defined on the dToda curve (i.e.
independent from the random matrix description of the two point functions). We summarize this as the following Theorem:
\begin{Theorem}\label{GFcomb}
The Grunsky coefficients $c_{nm}=\frac{1}{nm}F_{nm}$ of the Faber polynomials defined on the dToda curve $\lambda=p+\frac{1}{p}$ satisfy the recursion relations given in Proposition \ref{FmnGraph}, that is, they have the combinatorial meaning of counting the connected ribbon graphs for the two-vertex
problem on a sphere.
\end{Theorem}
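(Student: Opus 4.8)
The plan is to assemble the pieces developed just above into a derivation of the recursion that is internal to the curve $\lambda=p+\frac{1}{p}$ and makes no reference to the random-matrix expansion. First I would fix the objects: by Definition \ref{faberFG} and Proposition \ref{faber}, on this curve the Faber polynomials satisfy $\Phi_n(p(\lambda))=\lambda^n-\sum_{m\ge1}\frac{F_{nm}}{m\lambda^m}$, so their Grunsky coefficients are precisely $c_{nm}=\frac{1}{nm}F_{nm}$. Thus the claim is exactly that these $c_{nm}$ obey the two recursions of Proposition \ref{FmnGraph}.

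Second, I would show that each recursion of Proposition \ref{FmnGraph} is equivalent to the single Faber identity \eqref{Frecursion}. Multiplying the odd relation by $\frac{1}{(2k+1)\lambda^{2k+1}}$, summing over $k\ge0$, and using $\Phi_n=\lambda^n-\sum_m\frac{F_{nm}}{m\lambda^m}$ together with $\Phi_1=p$ converts it termwise into \eqref{Frecursion}; since every step is reversible, the two statements carry the same content, and the even relation is treated identically.

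Third---and this is the heart of the matter---I would derive \eqref{Frecursion} from the curve alone. The left-hand side of \eqref{Frecursion} is a polynomial in $p$, and the polynomial part of the right-hand side already equals it, so \eqref{Frecursion} holds on the curve exactly when the non-polynomial parts agree, which is \eqref{NonP}. Substituting $\lambda=p+\frac{1}{p}$, expanding both sides in powers of $p^{-1}$, and comparing coefficients reduces \eqref{NonP} (after one application of the Pascal rule) to the binomial--Catalan identity $\binom{2j-1}{k-1}=\sum_{l=1}^{k}C_l\binom{2(j-l)-1}{k-l}$ of the Lemma, which I would prove by generating functions: writing $z=\sum_j C_j t^j$ with $z=1+tz^2$, one identifies $(z-1)^i/(2-z)$ as the generating series of $\binom{2n-1}{n-i}$ and extracts the coefficient of $t^j$ in $(z-1)\,(z-1)^{j-k}/(2-z)$ by a residue computation.

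The main obstacle is this last identity: it is not transparent term by term, and the decisive idea is to recognize the compact generating function $(z-1)^i/(2-z)$ and to convert the sum into a contour integral in $z$ (via $t=(z-1)/z^2$) that collapses to a single binomial coefficient. Once the Lemma is established, the chain \eqref{NonP}$\Rightarrow$\eqref{Frecursion}$\Rightarrow$Proposition \ref{FmnGraph} shows that $c_{nm}=\frac{1}{nm}F_{nm}$ satisfies the gluing recursions purely as a feature of the Faber polynomials on $\lambda=p+\frac{1}{p}$, whence the two-vertex counting interpretation follows.
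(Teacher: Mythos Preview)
Your proposal is correct and follows essentially the same route as the paper: the paper also reduces Proposition~\ref{FmnGraph} to the Faber identity \eqref{Frecursion}, observes that on the curve this amounts to the vanishing condition \eqref{NonP}, expands in powers of $p^{-1}$ and applies the Pascal rule to obtain the binomial--Catalan identity of the Lemma, and proves that Lemma via the generating function $z=\sum_j C_j t^j$ (satisfying $z=1+tz^2$) together with the closed form $(z-1)^i/(2-z)=\sum_{n\ge i}\binom{2n-1}{n-i}t^n$ and a residue computation. There is no substantive deviation.
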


%%%%%%%%%%%%%%%%%%%%%%%%%%%%%%%%%%%%%%%%%%%%%%

\section{The dispersionless Pfaff hierarchy}\label{dPfaffhierarchy}
In this section, we consider some basic properties of the dPfaff hierarchy.
In particular, we show that the dPfaff hierarchy can be reduced to the dToda hierarchy
by choosing a particular section of the solution space which depends only on the even times
of the hierarchy (i.e. the symplectic Pfaff lattice hierarchy studied in \cite{KP:08}).
We also discuss a rescaling of the variables based on the structure of the $\tau$-functions
of the Toda and Pfaff lattice hierarchies.  That structure is that one is given by determinants and
other by Pfaffians.

\subsection{The dispersionless limits of the symplectic Pfaff lattice hierarchy}
Let us first note that the dPfaff hierarchy \eqref{dDKP-Fay} reduces to the dToda hierarchy
as a special choice of the variables.
\begin{Proposition}\label{SR-Todaconnection}
Let $\alpha=\lambda^2$ and $\beta=\mu^2$. Suppose $F_{2m+1}=\frac{\partial F}{\partial t_{2m+1}}=0$
for all $m\ge 0$, and let $s_n=2t_{2n}$ for $n\ge 1$ and $s_0=t_0$. Then the dPfaff hierarchy
(\ref{dDKP-Fay})
is reduced to the dToda hierarchy \eqref{dToda-Fay},
\[\left\{\begin{array}{llll}
\displaystyle{e^{\tilde D(\alpha)\tilde D(\beta)\tilde F}=\frac{\alpha e^{-\tilde D(\alpha)\tilde F_0}-\beta e^{-\tilde D(\beta)\tilde F_0}}{\alpha-\beta}}\\[2.0ex]
\displaystyle{e^{-\tilde D(\alpha)\tilde D(\beta)\tilde F}=1-\frac{e^{\tilde F_{00}}}{\alpha\beta}e^{(\tilde D(\alpha)+\tilde D(\beta))\tilde F_0}}\,,
\end{array}\right.
\]
where $\tilde F(s_0,s_1,\ldots)=F(t_0,t_1,\ldots)$ and $\tilde D(\alpha)=\sum_{k=1}^{\infty}\frac{1}{k\alpha^k}\frac{\partial}{\partial s_k}$.
\end{Proposition}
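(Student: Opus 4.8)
The plan is to prove this by a direct change of variables: substitute the stated hypotheses together with the substitution $\alpha=\lambda^2,\ \beta=\mu^2,\ s_k=2t_{2k},\ s_0=t_0$ into the dispersionless Fay identities \eqref{dDKP-Fay}, and check that each of the two equations collapses onto the corresponding equation of \eqref{dToda-Fay}. The only genuine content is bookkeeping: identifying which derivative terms vanish under the hypothesis, and verifying that the operator $D(\lambda)$ restricts to $\tilde D(\alpha)$ on functions of the even times.

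First I would record the vanishing consequences of the hypothesis $F_{2m+1}=0$ for all $m\ge0$. Since $F$ then depends only on the even times, every derivative of $F$ carrying at least one odd index vanishes; in particular $F_1=0$, $F_{01}=0$, and $F_{mn}=0$ whenever $m$ or $n$ is odd. Differentiating, $D(\lambda)F_1=\sum_{n\ge1}\frac{1}{n\lambda^n}F_{1n}=0$ and likewise $D(\mu)F_1=0$, so the correction factors $\lambda+\mu-(D(\lambda)+D(\mu))F_1-F_{01}$ and $1-\frac{(D(\lambda)-D(\mu))F_1}{\lambda-\mu}$ appearing in \eqref{dDKP-Fay} reduce to $\lambda+\mu$ and $1$, respectively.

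Next I would establish the key operator identity. For any function $g$ of the even times, using $\partial/\partial t_{2k}=2\,\partial/\partial s_k$ (from $s_k=2t_{2k}$) and $\lambda^{2k}=\alpha^k$,
\[
D(\lambda)g=\sum_{k\ge1}\frac{1}{2k\lambda^{2k}}\frac{\partial g}{\partial t_{2k}}=\sum_{k\ge1}\frac{1}{k\alpha^k}\frac{\partial g}{\partial s_k}=\tilde D(\alpha)g,
\]
so the factor of $2$ in $s_k=2t_{2k}$ is chosen precisely to absorb the Jacobian. Applying this to $F$, to $F_0$ (which is again an even-time function, and $s_0=t_0$ gives $\tilde F_0=F_0$), and iterating for the double operator, I obtain $D(\lambda)F=\tilde D(\alpha)\tilde F$, $D(\lambda)F_0=\tilde D(\alpha)\tilde F_0$, and $D(\lambda)D(\mu)F=\tilde D(\alpha)\tilde D(\beta)\tilde F$, together with $F_{00}=\tilde F_{00}$ and $\lambda^2=\alpha,\ \mu^2=\beta$.

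Finally I would substitute. The second equation of \eqref{dDKP-Fay} immediately becomes the second equation of the reduced hierarchy. For the first, after the simplifications above it reads
\[
e^{\tilde D(\alpha)\tilde D(\beta)\tilde F}(\lambda+\mu)=\frac{\alpha e^{-\tilde D(\alpha)\tilde F_0}-\beta e^{-\tilde D(\beta)\tilde F_0}}{\lambda-\mu},
\]
and I would clear the factor $(\lambda+\mu)$ using the identity $(\lambda+\mu)(\lambda-\mu)=\lambda^2-\mu^2=\alpha-\beta$, which turns the right-hand side into $(\alpha-\beta)^{-1}$ times the numerator and yields the first dToda equation. The main point requiring care — really the only obstacle — is the factor-of-$2$ bookkeeping in the operator identity and the algebraic conversion of the $(\lambda\pm\mu)$-factors into the single $(\alpha-\beta)$-denominator; once these are in place the reduction to \eqref{dToda-Fay} is immediate.
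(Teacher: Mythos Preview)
Your proof is correct and follows essentially the same route as the paper: the paper's argument consists precisely of the operator identity $D(\lambda)F=\tilde D(\alpha)\tilde F$ under the even-time hypothesis, after which it declares the reduction ``immediate'' once $F_1=0$. You have simply made that immediacy explicit by spelling out the vanishing of $F_{01}$ and $D(\lambda)F_1$ and the algebraic cancellation $(\lambda+\mu)(\lambda-\mu)=\alpha-\beta$.
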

\begin{Proof}
 The proof is based on the following observation:  With $\alpha=\lambda^2$ and $F_{2m+1}=0$  $(\forall m\ge0)$,
\begin{align*}
D(\lambda)F&=\sum_{m=1}^{\infty}\frac{1}{2m\lambda^{2m}}\frac{\partial F}{\partial t_{2m}}\\
&=\sum_{m=1}^{\infty}\frac{1}{m\alpha^m}\frac{\partial \tilde F}{\partial s_m}=\tilde D(\alpha)\tilde F\,.
\end{align*}
It is then immediate to see that \eqref{dDKP-Fay} with $F_1=0$ gives \eqref{dToda-Fay}.
\end{Proof}
This can be considered as the dispersionless limit of the Pfaff lattice restricted to symplectic matrices (a continuous version of the SR algorithm \cite{KP:08}).
The dPfaff  equation is then given by
\[
F_{22}=4e^{F_{00}} \quad {\rm which ~is~}\quad \tilde F_{11}=e^{\tilde F_{00}}\,.
\]
The second  equation in the dPfaff (dDKP) equation (\ref{dDKPequation}) becomes trivial.

\subsection{The dPfaff curve and the rescaled dPfaff hierarchy}
We consider the following scalings:
\begin{equation}\label{scalings}
T_0\to 2T_0, \quad \mathbf{T}\to 2\mathbf{T},\quad {\rm and}\quad F\to \frac{1}{2}F\,.
\end{equation}
The scaling in the time variables may be considered as the size of the matrix models, i.e.
the matrices we consider for the orthogonal ensembles are twice the size of matrices for the unitary ensembles.  The scaling in the free energy function
is due to the structure of the $\tau$-functions, i.e. the determinant for the Toda lattice and
the Pfaffian for the Pfaff lattices (a Pfaffian is a square root of a determinant).
We will show in Section \ref{dToda-dPfaff} that those scalings appear in the particular
choices of our $\tau$-functions corresponding to the Gaussian ensembles discussed
in the Introduction.

With the scalings (\ref{scalings}), $F_{mn}$ is scaled by $2F_{mn}$ for all $m,n\ge 0$, and we redefine the dPfaff hierarchy \eqref{dDKP-Fay} in the follwoing form similar to the
dToda hierarchy,
\begin{equation}\label{dDKPrescaled}\left\{
\begin{array}{llll}
&\displaystyle{1-\frac{e^{2F_{00}}}{p(\lambda)^2p(\mu)^2}}=e^{-2D(\lambda)D(\mu)F}\left(
\frac{q(\lambda)-q(\mu)}{\lambda-\mu}\right) \\[1.5ex]
&\displaystyle{q(\lambda)^2=p(\lambda)^2+G_2+\frac{e^{2F_{00}}}{p(\lambda)^2}}\,,
\end{array}\right.
\end{equation}
with $p(\lambda)=\lambda e^{-D(\lambda)F_0}, q(\lambda)=\lambda-F_{01}-2D(\lambda)F_1$, and
\[
G_2=F_{02}-F_{01}^2-4F_{11}.
\]
We call the algebraic equation of $(p,q)$ in \eqref{dDKPrescaled} the dPfaff curve.
The first equation is the scaled version of the second one in \eqref{dDKP-Fay}, and the
first one in \eqref{dDKP-Fay} is derived from this definition \eqref{dDKPrescaled}:
\begin{Lemma}\label{2nddDKP} From \eqref{dDKPrescaled}, we have
\[
\frac{p(\lambda)^2-p(\mu)^2}{\lambda-\mu}=e^{2D(\lambda)D(\mu)F}\left(q(\lambda)+q(\mu)\right),
\]
which is the first equation in \eqref{dDKP-Fay} with the scaling $F_{mn}\to2F_{mn}$.
\end{Lemma}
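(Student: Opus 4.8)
The plan is to eliminate the combination $1-e^{2F_{00}}/(p(\lambda)^2p(\mu)^2)$ between the two equations of \eqref{dDKPrescaled}, in direct analogy with the elimination of $e^{D(\lambda)D(\mu)F}$ carried out in the proof of Lemma \ref{tridiagonality}. The key observation is that the dPfaff curve (the second equation of \eqref{dDKPrescaled}) is quadratic in $q$, so subtracting it at the two points $\lambda$ and $\mu$ produces a difference of squares that factors as $(q(\lambda)-q(\mu))(q(\lambda)+q(\mu))$, and the factor $q(\lambda)-q(\mu)$ is precisely what appears in the first equation of \eqref{dDKPrescaled}.

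First I would write the curve at $\lambda$ and at $\mu$ and subtract. Using $\frac{1}{p(\lambda)^2}-\frac{1}{p(\mu)^2}=-\frac{p(\lambda)^2-p(\mu)^2}{p(\lambda)^2p(\mu)^2}$, the constant $G_2$ cancels and I obtain
\[
q(\lambda)^2-q(\mu)^2=\bigl(p(\lambda)^2-p(\mu)^2\bigr)\left(1-\frac{e^{2F_{00}}}{p(\lambda)^2p(\mu)^2}\right).
\]
Factoring the left-hand side as $(q(\lambda)-q(\mu))(q(\lambda)+q(\mu))$ then isolates the difference $q(\lambda)-q(\mu)$ on both sides.

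Next I would substitute the first equation of \eqref{dDKPrescaled}, which expresses the bracket $1-e^{2F_{00}}/(p(\lambda)^2p(\mu)^2)$ as $e^{-2D(\lambda)D(\mu)F}(q(\lambda)-q(\mu))/(\lambda-\mu)$. After this substitution both sides carry a common factor $q(\lambda)-q(\mu)$; dividing it out gives
\[
q(\lambda)+q(\mu)=e^{-2D(\lambda)D(\mu)F}\,\frac{p(\lambda)^2-p(\mu)^2}{\lambda-\mu},
\]
which rearranges at once to the claimed identity.

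The derivation is short, and the only point requiring care is the legitimacy of cancelling $q(\lambda)-q(\mu)$. Since $q(\lambda)=\lambda-F_{01}-2D(\lambda)F_1$ and $D(\lambda)F_1$ is a power series in $1/\lambda$ with no constant term, one has $q(\lambda)-q(\mu)=(\lambda-\mu)+O(\lambda^{-1},\mu^{-1})$, so this factor is a nonzero element of the ring of formal Laurent series and the division is valid. Finally, to confirm the identification with \eqref{dDKP-Fay}, I would note that $q(\lambda)+q(\mu)=\lambda+\mu-2F_{01}-2(D(\lambda)+D(\mu))F_1$ and $p(\lambda)^2=\lambda^2e^{-2D(\lambda)F_0}$, which match term-by-term the bracket and the right-hand numerator of the first equation of \eqref{dDKP-Fay} after the substitution $F_{mn}\to 2F_{mn}$; this is a routine bookkeeping check rather than a genuine obstacle.
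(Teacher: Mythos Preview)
Your proof is correct and follows essentially the same route as the paper: subtract the dPfaff curve at $\lambda$ and $\mu$, substitute the first equation of \eqref{dDKPrescaled} for the factor $1-e^{2F_{00}}/(p(\lambda)^2p(\mu)^2)$, and cancel $q(\lambda)-q(\mu)$. Your additional remarks on the legitimacy of the cancellation and the term-by-term match with \eqref{dDKP-Fay} under $F_{mn}\to 2F_{mn}$ are details the paper leaves implicit.
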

\begin{Proof}
  From the dPfaff curve,  we have
\begin{align*}
q(\lambda)^2-q(\mu)^2
&=\left(p(\lambda)^2-p(\mu)^2\right)\left(1-\frac{e^{2F_{00}}}{p(\lambda)^2p(\mu)^2}\right)\\[0.5ex]
&=\left(p(\lambda)^2-p(\mu)^2\right)e^{-2D(\lambda)D(\mu)F}\left(\frac{q(\lambda)-q(\mu)}{\lambda-\mu}
\right),
\end{align*}
which leads to the equation in the Lemma.
\end{Proof}
This shows the equivalence between \eqref{dDKP-Fay} and \eqref{dDKPrescaled}.

Let us now discuss the functional relations among $F_{mn}$, i.e. the dispersionless Hirota equations
for the dPfaff hierarchy:  We first note
\begin{Lemma}
Each $F_{mn}$ for $m,n\ge 1$ can be determined by the set $\{F_{0k}:0\le k\le m+n\}\cup\{F_{11}\}$.
\end{Lemma}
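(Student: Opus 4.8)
The plan is to mirror the proof of Lemma~\ref{f0k} for the dToda case, using the two equations of the rescaled dPfaff hierarchy \eqref{dDKPrescaled} in tandem: the first (a Cauchy-kernel type relation) produces a double generating identity for the $F_{mn}$ with $m,n\ge 1$, while the second (the dPfaff curve) is used to eliminate the auxiliary function $q(\lambda)$ in favor of the data $\{F_{0k}\}$ and $F_{11}$. First I would take the logarithm of the first equation in \eqref{dDKPrescaled} and solve for
\[
2D(\lambda)D(\mu)F=\sum_{m\ge 1}\sum_{n\ge 1}\frac{2F_{mn}}{mn}\frac{1}{\lambda^m\mu^n}
=\log\left(\frac{q(\lambda)-q(\mu)}{\lambda-\mu}\right)-\log\left(1-\frac{e^{2F_{00}}}{p(\lambda)^2p(\mu)^2}\right).
\]
Since $p(\lambda)=\lambda e^{-D(\lambda)F_0}$ depends only on $F_{00}$ and the $F_{0k}$ with $k\ge 1$, the second logarithm on the right is already a function of the $F_{0k}$ alone. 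The only obstruction to the claim is the appearance of $q(\lambda)=\lambda-F_{01}-2D(\lambda)F_1$, whose Laurent coefficients are (twice) the unknowns $F_{1k}$.

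The key step is to remove this dependence through the dPfaff curve, the second equation of \eqref{dDKPrescaled},
\[
q(\lambda)^2=p(\lambda)^2+G_2+\frac{e^{2F_{00}}}{p(\lambda)^2},\qquad G_2=F_{02}-F_{01}^2-4F_{11}.
\]
The right-hand side involves only $p(\lambda)$ (hence the $F_{0k}$), the quantity $e^{2F_{00}}$, and $G_2$, which in turn depends solely on $F_{02}, F_{01}, F_{11}$. Choosing the branch of the square root normalized by $q(\lambda)\to\lambda$ as $\lambda\to\infty$ (consistent with $D(\lambda)F_1\to 0$), I would expand $q(\lambda)=\sqrt{\,p(\lambda)^2+G_2+e^{2F_{00}}/p(\lambda)^2\,}$ as a Laurent series in $1/\lambda$. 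This exhibits every coefficient of $q(\lambda)$, and therefore the first logarithm above, as a function of $\{F_{0k}:k\ge 0\}\cup\{F_{11}\}$ alone; substituting back establishes the qualitative statement that all $F_{mn}$ with $m,n\ge 1$ are determined by this data. The structural reduction here is entirely parallel to the manipulation in Lemma~\ref{2nddDKP}.

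To obtain the sharp index bound $0\le k\le m+n$ I would track orders in the Laurent expansions. Reading off the coefficient of $\lambda^{-m}\mu^{-n}$ in the generating identity, the second logarithm can involve $F_{0k}$ only for $k\le\max(m,n)$, while the first logarithm---built from $(q(\lambda)-q(\mu))/(\lambda-\mu)$---can involve the coefficient $F_{1k}$ of $q$ only for $k\le m+n-1$. From the square-root expansion of the curve, the coefficient of $\lambda^{-k}$ in $q(\lambda)$ is in turn determined by $\{F_{0\ell}:\ell\le k+1\}$ together with $F_{11}$. Combining these, and using $m,n\ge 1$ so that $\max(m,n)\le m+n$ and $(m+n-1)+1=m+n$, every $F_{0\ell}$ that can appear satisfies $\ell\le m+n$, while $F_{11}$ accounts for the extra element of the stated set. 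I expect this last bookkeeping---propagating the weights through the square root of the dPfaff curve---to be the only delicate point, since the qualitative reduction itself follows immediately once $q(\lambda)$ is solved from the curve.
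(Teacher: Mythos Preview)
Your proposal is correct and follows essentially the same argument as the paper: take the logarithm of the first equation of \eqref{dDKPrescaled} to express $2D(\lambda)D(\mu)F$ in terms of $p(\lambda),p(\mu)$ and $q(\lambda),q(\mu)$, then use the dPfaff curve to solve for $q$ (equivalently, for the $F_{1k}$) in terms of the $F_{0k}$ and $F_{11}$. Your bookkeeping of the index bound $k\le m+n$ is more explicit than the paper's terse assertion, but the underlying strategy is identical.
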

\begin{Proof}
  From the dDKP curve, $q^2=p^2+G_2+e^{2F_{00}}/p^2$, we can see that each $F_{1,m}$ for
  $m\ge 2$ can be determined by $\{F_{0k}:0\le k\le m+1\}\cup\{F_{11}\}$.
  Then from the first equation,  we have
 \[
 2D(\lambda)D(\mu)F=-\log\left(1-\frac{e^{2F_{00}}}{p(\lambda)^2p(\mu)^2}\right)
 +\log\left(1-2\frac{(D(\lambda)-D(\mu))F_1}{\lambda-\mu}\right).
 \]
 This implies the assertion.
\end{Proof}

In the similar way as in the dToda hierarchy, the dispersionless Hirota equations can be derived
as follows:  From the dPfaff curve, we have for $n\ge 1$
\begin{equation}\label{dHirota-Pfaff}
\begin{array}{lll}
h_{n+2}(-2\hat{\mathbf{F}}_0)+e^{2F_{00}}h_{n-2}(2\hat{\mathbf{F}}_0)\\[1.5ex]
\displaystyle{=-\frac{4}{n+1}F_{1,n+1}+\frac{4}{n}F_{01}F_{1n}+4\sum_{j+k=n}\frac{F_{1j}}{j}\frac{F_{1k}}{k}\,,}
\end{array}
\end{equation}
where $\hat{\mathbf{F}}_0=(\hat{F}_{01},\hat{F}_{02},\ldots)$ with $\hat{F}_{0j}=\frac{F_{0j}}{j}$.
This gives the first half of the dispersionless Hirota equations for the dPfaff hierarchy, which shows
that  $F_{1n}$ are determined by $\{F_{0m}:0\le m\le n+1\}$. The first
nontrivial equation with $n=1$ is given by
\[
F_{12}=\frac{1}{3}F_{03}-F_{01}F_{02}+2F_{01}F_{11}+\frac{2}{3}F_{01}^3\,.
\]
This is the second equation of the dPfaff lattice equation (\ref{dDKPequation}) with the scaling
$F_{nm}\to 2F_{nm}$.

The second half of the dispersionless Hirota equations is given by
taking $\lambda=\mu$ in the first equation, i.e.
\[
1-\frac{e^{2F_{00}}}{p(\lambda)^4}=e^{-2D(\lambda)^2F}\frac{dq}{d\lambda}.
\]
Expanding for large $\lambda$, this gives
\[\begin{array}{lllll}
\displaystyle{1-\frac{e^{2F_{00}}}{\lambda^4}\sum_{n=0}^{\infty}\frac{1}{\lambda^n}
h_n(4\hat{\mathbf{F}}_0)} \\[2.0ex]
\displaystyle{=\left(\sum_{n=0}^{\infty}\frac{1}{\lambda^{n}}h_n(-2\hat{\mathbf{Z}}))\right)\left(1+2\sum_{n=1}^{\infty}\frac{1}{\lambda^{n+1}}F_{1n}\right)}\,.
\end{array}
\]
Then we have for $n\ge4$
\begin{equation}\label{dDKPhierarchy1}
h_n(-2\hat{\mathbf{Z}})+ \sum_{k+j=n-1}2F_{1k}h_j(-2\hat{\mathbf{Z}})
=-e^{2F_{00}}h_{n-4}(4\hat{\mathbf{F}}_0),
\end{equation}
where $\hat{\mathbf{Z}}=(0,\hat{Z}_2,\hat{Z}_3,\ldots)$ with $\hat{Z}_n=\sum_{j+k=n}\frac{F_{jk}}{jk}$.
The first equation with $n=4$ gives
\[
-\frac{2}{3}F_{13}+2F_{11}^2+\frac{1}{2}F_{22}=
e^{2F_{00}}.
\]
This is the first equation of the dispersionless limit of the Pfaff lattice equation (\ref{dDKPequation}) with the scaling $F_{mn}\to 2F_{mn}$.

\begin{Remark}
The dPfaff curve in \eqref{dDKPrescaled} has the structure on $(P,Q)\in\mathbb{C}^2$ with
$P=p^2$ and $Q=q^2$,
\[
Q=P+G_2+\frac{e^{2F_{00}}}{P}\,,
\]
which is the same as the dToda curve representing a sphere. The curve has also only two parameters
of the deformation. However, in terms of the coordinates $(\lambda,p)\in \mathbb{C}^2$, the curve
has an infinite number of parameters $F_{0k}$, unlike the dToda case.
In $(p,q)$-coordinates, the dPfaff curve has an asymptotic property
\[
q=p+\sum_{m=1}^{\infty}\frac{G_{2m}}{2mp^{2m-1}}\,,
\]
where $G_{2m}$ are determined by $G_2$ and $F_{00}$.
Then one can consider the Faber polynomials associated with the function $q(p)$.
However it seems that there is no role for those polynomials in the dPfaff theory.
\end{Remark}

\section{Universality of the dToda hierarchy}\label{dToda-dPfaff}

In this section we first show that the combinatoric descriptions for
the functions $e_0(\mathbf{T})$ and $E^{(1)}_2(\mathbf{T})$ in the leading order terms
of the free energies (see Theorems  \ref{BIZ-thm} and \ref{ghj}) are intimately related.
The fundamental fact underlying this is that the Euler characteristic two
M\"{o}bius graphs are equivalent to genus 0 ribbon graphs together with some
data specifying local orientations at the vertices.  The relationship between
the leading orders of the asymptotic expansions of the free energies $\log(Z^{(\beta)}_N)$ in the GUE ($\beta=2$),
GOE ($\beta=1$) and GSE ($\beta=4$) random matrix settings, implies that for the special initial condition
corresponding to these examples the solutions of the dToda and dPfaff
hierarchies agree (after a rescaling with proper choice of initial conditions).  We will then show that this is merely
an example of a more general fact: the solution of the dToda hierarchy is in
fact always a solution of the rescaled dPfaff hierarchy, or in other words the
dToda hierarchy implies the dPfaff hierarchy.
However the two hierarchies are not equivalent, the family of solutions to the
dPfaff hierarchy being much larger (see  \cite{KP:07})

\subsection{Universality in random matrix theory}

In broad terms universality in random matrix theory states that the limiting
behavior of eigenvalues of the unitary ensembles of random matrices is
independent of the external potential $V_0(\lambda)$ in certain scaling regimes.
There
is also a universality for the limiting behavior of the eigenvalues of the
random matrices across the orthogonal, unitary, and symplectic ensembles in
some regions and choices of scalings, see \cite{DG:2007A, DG:2007B} and \cite{S:07}.
In the Gaussian case, when $V_0(\lambda)$ is
quadratic, we will illustrate universality explicitly by giving a simple
relationship between the leading order expressions of the
free energy $F^{(\beta)}(T_0,\mathbf{T})$ for the GUE and GOE random matrix models
(recall that $F^{(1)}=F^{(4)}$, that is, the GOE and GSE cases give the same leading order).
A related result is that there is a universality between integrals over the unitary and orthogonal groups (not the same expressions as our $Z_n^{(2)}$ and $Z_n^{(1)}$) after appropriate scalings \cite{CGM-S:08}.

Let us first note the following Lemma relating the leading order combinatoric interpretation given by Theorems \ref{BIZ-thm} and \ref{ghj}.
\begin{Lemma} \label{mobius-is-ribbon}
M\"obius graphs of Euler characteristic 2 are in 1-1 correspondence with pairs made up of a ribbon graph of genus 0 together with one of the $2^{v-1} $ possible configurations for the local orientations of the vertices, where $v$ is the number of vertices of the graph
\end{Lemma}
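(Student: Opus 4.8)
The plan is to encode a M\"obius graph as purely combinatorial data and then reduce the assertion to an elementary count over $\mathbb{Z}/2$. First I would record a M\"obius graph by its vertex set $V$ (with labeled vertices and, at each vertex, a reference local orientation given by a cyclic order of its labeled half-edges), a pairing $m$ of the half-edges into an edge set $E$, and a twist bit $\tau_e\in\mathbb{Z}/2$ for each $e\in E$ telling whether, relative to the chosen reference orientations at its two endpoints, the ribbon is glued without a twist ($\tau_e=0$) or with one ($\tau_e=1$). The spanned closed surface, and hence its Euler characteristic $\chi$, is unaffected if one reverses the reference orientation at a single vertex $i$: this replaces $\tau$ by $\tau+\delta\mathbf{1}_{i}$, where $\mathbf{1}_i$ is the indicator of $i$ and $\delta\colon(\mathbb{Z}/2)^{V}\to(\mathbb{Z}/2)^{E}$ is the coboundary $(\delta\epsilon)_e=\epsilon_i+\epsilon_j$ for $e$ joining $i$ and $j$ (a self-loop always receiving $0$). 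Thus the surface depends on $\tau$ only through its class modulo $\operatorname{im}\delta$.

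The key topological input I would invoke is the standard orientability criterion in the ribbon-graph model: the surface determined by $(m,\tau)$ is orientable precisely when $\tau\in\operatorname{im}\delta$, i.e. when the local orientations can be chosen consistently so that every ribbon becomes untwisted. Combining this with the elementary classification fact that the sphere is the only closed surface with $\chi=2$, whereas every non-orientable closed surface has $\chi\le1$, I conclude that a connected M\"obius graph has $\chi=2$ if and only if $\tau\in\operatorname{im}\delta$ and the untwisted graph $(m,0)$ determines the sphere, that is, $(m,0)$ is a genus-$0$ ribbon graph. This equivalence is the heart of the proof and the step I expect to be the main obstacle, since it is where the topological condition ``$\chi=2$'' must be translated into the combinatorial condition ``$\tau$ is a coboundary'', which in turn rests on verifying the orientability criterion.

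With this reduction in hand the correspondence is a bijection of finite sets. A M\"obius graph with $\chi=2$ determines a genus-$0$ ribbon graph $R$, namely $(m,0)$, together with a coboundary $\tau=\delta\epsilon$; the class $\epsilon\in(\mathbb{Z}/2)^{V}$ is exactly a configuration of local orientations at the vertices. Conversely a pair $(R,\epsilon)$ is sent to the M\"obius graph $(m,\delta\epsilon)$, and these assignments are mutually inverse. For a connected graph the kernel of $\delta$ consists only of the two constant functions, so $\epsilon$ is pinned down by $\tau$ up to the simultaneous (global) flip of all vertices, and the number of admissible configurations is $|\operatorname{im}\delta|=2^{v}/|\ker\delta|=2^{v-1}$, with $v=|V|$ the number of vertices. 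Finally I would note that the bijection preserves the underlying labeled vertex structure, so it restricts to a bijection for each fixed vertex type $\mathbf{j}$; this yields the relation $\mathcal{K}_2(\mathbf{j})=2^{v-1}\kappa_0(\mathbf{j})$ needed in the following section.
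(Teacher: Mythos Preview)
Your argument is correct and follows essentially the same idea as the paper's proof: a M\"obius graph has $\chi=2$ precisely when its surface is the sphere, hence orientable, hence the twists can all be removed by reorienting vertices, recovering a genus-$0$ ribbon graph. Your coboundary formulation is considerably more detailed than the paper's two-sentence sketch and in particular supplies the justification for the count $2^{v-1}$ (via $|\ker\delta|=2$ for a connected graph), a point the paper's own proof leaves implicit.
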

\begin{Proof}
It is easy to see that a ribbon graph of genus 0 together with a choice of local orientations of the $v$ vertices gives a M\"obius graph of Euler characteristic 2.  For a M\"obius graph to have Euler characteristic 2 there must be a choice of the local orientations of the vertices so that the ribbons are untwisted (as the graph lays flat on a sphere), thus giving the pair of a ribbon graph of genus 0 and a choice of local orientations.
\end{Proof}
  From this lemma,  we see that (for leading order only) there is the following relation between the coefficients
$\kappa_0(\mathbf{j})$ and ${\mathcal{K}}_2(\mathbf{j})$ of the expansions
of the free energies (\ref{Ufree}) and (\ref{Ofree}),
\[ \mathcal{K}_2(\mathbf{j}) = 2^{v-1} \kappa_0(\mathbf{j}), \]
where $v = \sum_ij_i$ is the number of vertices.
We now have:
\begin{Proposition}\label{UOSrelations}
The free energies $F^{(\beta)}(T_0,\mathbf{T})$ have the following relations,
\[
F^{(1)}(T_0,\mathbf{T})=
F^{(4)}(T_0,\mathbf{T})=\frac{1}{2}F^{(2)}(2T_0,2\mathbf{T}).
\]
\end{Proposition}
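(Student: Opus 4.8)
The plan is to prove only the single nontrivial identity $F^{(1)}(T_0,\mathbf{T})=\frac{1}{2}F^{(2)}(2T_0,2\mathbf{T})$, since the first equality $F^{(1)}=F^{(4)}$ has already been established in Section \ref{OSensemble} through $C_0^{(4)}(T_0)=C_0^{(1)}(T_0)$ together with the common leading term $4T_0^2E_2^{(1)}(\hat{\mathbf{T}})$. My strategy is to expand both sides in their respective graph-enumeration series and match them term by term, the bridge being the correspondence $\mathcal{K}_2(\mathbf{j})=2^{v-1}\kappa_0(\mathbf{j})$ supplied by Lemma \ref{mobius-is-ribbon}, where $v=\sum_i j_i$ is the number of vertices.

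First I would take the closed form (\ref{Ufree}), namely $F^{(2)}(T_0,\mathbf{T})=\sum_{\mathbf{j}}\kappa_0(\mathbf{j})\,T_0^{f}\,\frac{\mathbf{T}^{\mathbf{j}}}{\mathbf{j}!}+C_0(T_0)$, and substitute $T_0\mapsto 2T_0$ and $\mathbf{T}\mapsto 2\mathbf{T}$. The essential bookkeeping is to separate the two independent scalings: the monomial $(2\mathbf{T})^{\mathbf{j}}=2^{\sum_i j_i}\mathbf{T}^{\mathbf{j}}=2^{v}\mathbf{T}^{\mathbf{j}}$ produces the vertex factor $2^{v}$, while $T_0^{f}$ becomes $(2T_0)^{f}$ and contributes the face scaling. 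Multiplying through by the overall $\frac{1}{2}$ then leaves the weight $2^{v-1}$ attached to each $\kappa_0(\mathbf{j})$, so that $\frac{1}{2}F^{(2)}(2T_0,2\mathbf{T})=\sum_{\mathbf{j}}2^{v-1}\kappa_0(\mathbf{j})\,(2T_0)^{f}\,\frac{\mathbf{T}^{\mathbf{j}}}{\mathbf{j}!}+\frac{1}{2}C_0(2T_0)$.

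Next I would invoke Lemma \ref{mobius-is-ribbon} in the form $\mathcal{K}_2(\mathbf{j})=2^{v-1}\kappa_0(\mathbf{j})$ to rewrite the enumerative part as $\sum_{\mathbf{j}}\mathcal{K}_2(\mathbf{j})\,(2T_0)^{f}\,\frac{\mathbf{T}^{\mathbf{j}}}{\mathbf{j}!}$, which is precisely the graph-counting series (\ref{Ofree}) for $F^{(1)}$. Thus the two enumerative sums coincide coefficient by coefficient. The only remaining point, which is not governed by Lemma \ref{mobius-is-ribbon}, is to reconcile the non-enumerative constant terms: I would cite the identity $C_0^{(1)}(T_0)=\frac{1}{2}C_0^{(2)}(2T_0)$ recorded in Section \ref{OSensemble} and computed in Appendix B, which gives exactly $\frac{1}{2}C_0(2T_0)=C_0^{(1)}(T_0)$, matching the constant term of (\ref{Ofree}).

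The main subtlety, rather than any genuine obstacle, is the careful accounting of the powers of $2$: the vertex factor $2^{v}$ coming from the time scaling, the face factor absorbed into $(2T_0)^{f}$, and the overall $\frac{1}{2}$ must combine to reproduce precisely the weight $2^{v-1}$ predicted by the M\"obius-to-ribbon correspondence. Once these three scalings are reconciled and the constant-term identity is applied, the two series agree term by term and the proposition follows.
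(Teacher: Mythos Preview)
Your proposal is correct and follows essentially the same approach as the paper: both invoke Lemma \ref{mobius-is-ribbon} to obtain $\mathcal{K}_2(\mathbf{j})=2^{v-1}\kappa_0(\mathbf{j})$, match the enumerative series (\ref{Ufree}) and (\ref{Ofree}) via the scalings $T_0\mapsto 2T_0$, $\mathbf{T}\mapsto 2\mathbf{T}$, and handle the constant term through $C_0^{(1)}(T_0)=\frac{1}{2}C_0^{(2)}(2T_0)$. The only cosmetic difference is that the paper starts from $F^{(1)}$ and works toward $\frac{1}{2}F^{(2)}(2T_0,2\mathbf{T})$, whereas you start from the latter and work toward $F^{(1)}$.
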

\begin{Proof}
The duality between $F^{(1)}$ and $F^{(4)}$ has already been shown in section \ref{OSensemble}.
The relation between $F^{(1)}$ and $F^{(2)}$ is a direct consequence of the relation between $\kappa_0(\mathbf{j})$ and $\mathcal{K}_2(\mathbf{j})$ above. Namely from (\ref{Ofree}), we have
\[
\begin{array}{llll}
F^{(1)}(T_0,\mathbf{T})&=&\displaystyle{\sum_{\mathbf{j}}\mathcal{K}_2(\mathbf{j})(2T_0)^f\frac{\mathbf{T}^{\mathbf{j}}}{\mathbf{j}!}+C^{(1)}_0(T_0)}\\[2.0ex]
&= &\displaystyle{\sum_{\mathbf{j}}\kappa_0(\mathbf{j})2^{v-1}(2T_0)^f\frac{\mathbf{T}^{\mathbf{j}}}{\mathbf{j}!}+C_0^{(1)}(T_0)}\\[2.0ex]
&=& \displaystyle{\frac{1}{2}\,\sum_{\mathbf{j}}\kappa_0(\mathbf{j})(2T_0)^f\frac{(2\mathbf{T})^{\mathbf{j}}}{\mathbf{j}!}
+\frac{1}{2}C_0^{(2)}(2T_0)}\\[2.0ex]
&=&\displaystyle{\frac{1}{2}F^{(2)}(2T_0,2\mathbf{T})}\,.
\end{array}
\]
\end{Proof}
This Proposition implies that the free energy of the dToda hierarchy gives an universality
over those Gaussian ensembles in the leading order with the scalings given in (\ref{scalings}).
Also notice that the scalings in (\ref{scalings}) explicitly appear in this result.

\begin{Remark}
It is interesting to note that the scalings  in Proposition \ref{UOSrelations}, which give universality between the dToda and dPfaff
hierarchies, agree with the scalings found in \cite{CGM-S:08} giving the universality at leading order between the matrix integrals over the orthogonal and unitary groups (see Theorems 7.1 and 9.1 of \cite{CGM-S:08}).  In our case the partition functions are derived from integrals over the spaces of symmetric and hermitian matrices.
\end{Remark}

\subsection{The dToda hierarchy implies the {d}Pfaff hierarchy}

We now show that a solution of the dToda hierarchy \eqref{dTodahierarchy} gives a solution of the dPfaff hierarchy
\eqref{dDKPrescaled}. This extends the previous results for special choices of the free energies associated with the Gaussian ensembles, and it implies that the universality holds
for the general choice of the potential $V_0(\lambda)$ or the measure in the random matrix theory.
\begin{Theorem}\label{DKP-Toda}
The free energy $F$ of the dToda hierarchy \eqref{dTodahierarchy} also satisfies the dPfaff
hierarchy \eqref{dDKPrescaled}.
\end{Theorem}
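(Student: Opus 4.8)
The plan is to reduce everything to a single observation: for a dToda solution, the auxiliary function $q(\lambda)$ appearing in the rescaled dPfaff hierarchy \eqref{dDKPrescaled} is nothing but the conjugate branch of the dToda curve, namely $q(\lambda)=p(\lambda)-e^{F_{00}}/p(\lambda)$. Once this identification is made, both defining equations of \eqref{dDKPrescaled} follow from the two dToda relations \eqref{dTodahierarchy} by direct substitution, together with the companion relation $\frac{p(\lambda)-p(\mu)}{\lambda-\mu}=e^{D(\lambda)D(\mu)F}$ of \eqref{Cauchykernel} that the dToda hierarchy already implies. So first I would take an arbitrary $F$ solving \eqref{dTodahierarchy}, define $p,q,G_2$ exactly as in \eqref{dDKPrescaled}, and verify the two dPfaff equations in turn.

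To compute $q(\lambda)$ I would use Proposition \ref{faber} with $n=1$, which gives $\Phi_1(p)=p+F_{01}=\lambda-D(\lambda)F_1$, hence $D(\lambda)F_1=\lambda-F_{01}-p(\lambda)$. Substituting into $q(\lambda)=\lambda-F_{01}-2D(\lambda)F_1$ yields $q(\lambda)=2p(\lambda)-\lambda+F_{01}$, and the dToda curve $\lambda-F_{01}=p(\lambda)+e^{F_{00}}/p(\lambda)$ then collapses this to $q(\lambda)=p(\lambda)-e^{F_{00}}/p(\lambda)$. Squaring gives $q(\lambda)^2=p(\lambda)^2-2e^{F_{00}}+e^{2F_{00}}/p(\lambda)^2$, so the dPfaff curve (the second equation of \eqref{dDKPrescaled}) holds precisely when $G_2=-2e^{F_{00}}$. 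This is exactly what $G_2=F_{02}-F_{01}^2-4F_{11}$ evaluates to, since the dToda relation \eqref{dToda2} (giving $F_{02}=F_{01}^2+2F_{11}$) together with $F_{11}=e^{F_{00}}$ produces $G_2=F_{01}^2+2F_{11}-F_{01}^2-4F_{11}=-2F_{11}=-2e^{F_{00}}$.

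For the first equation of \eqref{dDKPrescaled} I would factor the difference of the two branches,
\[
q(\lambda)-q(\mu)=\left(p(\lambda)-p(\mu)\right)\left(1+\frac{e^{F_{00}}}{p(\lambda)p(\mu)}\right),
\]
divide by $\lambda-\mu$, and apply \eqref{Cauchykernel} to replace $\frac{p(\lambda)-p(\mu)}{\lambda-\mu}$ by $e^{D(\lambda)D(\mu)F}$. Multiplying by the prefactor $e^{-2D(\lambda)D(\mu)F}$ on the right-hand side of \eqref{dDKPrescaled} leaves $e^{-D(\lambda)D(\mu)F}\bigl(1+e^{F_{00}}/(p(\lambda)p(\mu))\bigr)$, and the first equation of \eqref{dTodahierarchy} replaces the remaining exponential by $1-e^{F_{00}}/(p(\lambda)p(\mu))$. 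The product of the two factors is the difference of squares $1-e^{2F_{00}}/(p(\lambda)^2p(\mu)^2)$, which is precisely the left-hand side. By Lemma \ref{2nddDKP} the remaining equation of \eqref{dDKP-Fay} is then automatic, so no further work is required.

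The computation is entirely algebraic and presents no real obstacle; the only genuine step is recognizing that $q=p-e^{F_{00}}/p$, i.e. that the dPfaff quasi-momentum is forced onto the conjugate branch of the dToda spectral curve. The significance of the theorem is structural rather than computational: it shows the dPfaff hierarchy contains the dToda hierarchy as the section $F_1=0$ (cf.\ Proposition \ref{SR-Todaconnection}), so the leading-order universality verified for the Gaussian ensembles in Proposition \ref{UOSrelations} in fact persists for an arbitrary potential $V_0(\lambda)$.
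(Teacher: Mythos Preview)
Your proof is correct and uses the same ingredients as the paper's: the identification $q=2p-\lambda+F_{01}$ (equivalently $q=p-e^{F_{00}}/p$ via the dToda curve), the evaluation $G_2=-2e^{F_{00}}$, and the Cauchy kernel \eqref{Cauchykernel}. The only difference is organizational: the paper obtains the first dPfaff equation by squaring the dToda relation $1-e^{F_{00}}/(p(\lambda)p(\mu))=e^{-D(\lambda)D(\mu)F}$, splitting $1-2x+x^2=2(1-x)-(1-x^2)$, and then invoking $2p=q+\lambda-F_{01}$; you instead establish $q=p-e^{F_{00}}/p$ first and factor $q(\lambda)-q(\mu)$ directly. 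The algebra is the same difference-of-squares identity read in two directions, so the approaches are essentially identical.

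One small remark on your closing sentence: the embedding exhibited by this theorem is not the ``section $F_1=0$''; that would force $q=\lambda-F_{01}$, whereas here $q$ is the nontrivial conjugate branch. Proposition \ref{SR-Todaconnection} is a different reduction (vanishing of all odd-time derivatives), not the one realized by Theorem \ref{DKP-Toda}.
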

\begin{Proof}
We first calculate the square of the first equation of the dToda hierarchy,
\begin{align*}
1 &= e^{2D(\lambda)D(\mu)F}\left(1-2\frac{e^{F_{00}}}{p(\lambda)p(\mu)}+\frac{e^{2F_{00}}}{p(\lambda)^2p(\mu)^2}\right)\\
&=e^{2D(\lambda)D(\mu)F}\left[2\left(1-\frac{e^{F_{00}}}{p(\lambda)p(\mu)}\right)-
\left(1-\frac{e^{2F_{00}}}{p(\lambda)^2p(\mu)^2}\right)\right]\\
&=2 e^{D(\lambda)D(\mu)F}-e^{2D(\lambda)D(\mu)F}\left(1-\frac{e^{2F_{00}}}{p(\lambda)^2p(\mu)^2}\right)\,.
\end{align*}
Then from Lemma \ref{Cauchykernel} of the kernel formula of $p(\lambda)$, we have
\[
e^{2D(\lambda)D(\mu)F}\left(1-\frac{e^{2F_{00}}}{p(\lambda)^2p(\mu)^2}\right)=2\frac{p(\lambda)-p(\mu)}{\lambda-\mu}-1.
\]
Now recall that $p(\lambda)$ for the dToda hierarchy is expressed by $p(\lambda)=\lambda-F_{01}-D(\lambda)F_1$. Then noting that $2p(\lambda)=q(\lambda)+\lambda-F_{01}$, we obtain the
first equation of the dPfaff hierarchy \eqref{dDKPrescaled}.

Now we show that the dPfaff curve $q^2=p^2+G_2+{e^{2F_{00}}}{p^{-2}}$ is reduced to
the dToda curve.  If $F_{mn}$ are the solutions of the dToda hierarchy, then the $G$ in the curve is given by
\[
G_2=-4F_{11}-F_{01}^2+F_{02}=-2F_{11}=-2e^{F_{00}}.
\]
Here we have used $F_{02}=F_{01}^2+2F_{11}$ and $F_{11}=e^{F_{00}}$ (see section \ref{dTodaC}).
This implies
\[
q^2=p^2-2e^{F_{00}}+\frac{e^{2F_{00}}}{p^2}=\left(p-\frac{e^{F_{00}}}{p}\right)^2.
\]
With $q=\lambda-F_{01}-2D(\lambda)F_1=2p-\lambda+F_{01}$, this equation leads to the dToda curve,
which is equivalent to the choice, $q=p-\frac{e^{F_{00}}}{p}$.
\end{Proof}
Finally, we remark that this universality can be seen in the dPfaff equation with
 the rescaling and the proper normalization:
The dPfaff equation is given by
\[
-\frac{2}{3}F_{13}+2F_{11}^2+\frac{2}{4}F_{22}=e^{2F_{00}}\,.
\]
which can be split to
\[
2\left(-\frac{1}{3}F_{13}+\frac{1}{2}F_{11}^2+\frac{1}{4}F_{22}\right)=-F_{11}^2+e^{2F_{00}}.
\]
Then if $F$ satisfies the dToda hierarchy, then both sides of this equation are zero.
Notice that the left hand side is the dKP equation and the other side is the dToda equation.
This splitting can be also seen in the Pfaff lattice equation (\ref{DKPequation}), i.e.
the first equation in \eqref{DKPequation} can be written in the form,
\[
\frac{\partial}{\partial t_1}\left(-4\frac{\partial u}{\partial t_3}+\frac{\partial^3u}{\partial t_1^3}+6u\frac{\partial u}{\partial t_1}\right)+3\frac{\partial^2u}{\partial t_2^2}=-3\frac{\partial^2}{\partial t_1^2}(
u^2-4v^+v^-)\,.
\]
Then one has a KP solution for $u$ which also satisfies the Toda lattice equation given
in the right hand side (see e.g. \cite{BK:03}).

\section{Appendix A: The dKP hierarchy}
The purpose of this Appendix is to show how one can get the kernel formula (\ref{Cauchykernel}) for the
KP hierarchy, which is one of the most important formulae in this paper. We begin by giving a quick introduction to the dKP hierarchy:
Let $\lambda$ be a Laurent  series in $p$ given by
\[
\lambda=p+\frac{u_2}{p}+\frac{u_3}{p^2}+\cdots,
\]
where $u_k$ are the functions of an infinite variables $(T_1,T_2,\ldots)$.
Then the dKP hierarchy is defined by the set of equations,
\[
\frac{\partial\lambda}{\partial T_n}=\{\Phi_n,\lambda\}\qquad n=1,2,\ldots,
\]
where $\Phi_n(\lambda(p))$ is the Faber polynomial given by
\[
\Phi_n(\lambda(p))=[\lambda(p)^n]_{\ge0}.
\]
Here $[f(p)]_{\ge0}$ is a polynomial part of the Laurent series $f(p)$ in $p$. In particular,
$\Phi_1(\lambda(p))=p$, and we denote sometime $T_1=X$.  The bracket $\{f,g\}$ is
the Poisson bracket defined by
\[
\{f,g\}=\frac{\partial f}{\partial p}\frac{\partial g}{\partial X}-\frac{\partial f}{\partial X}\frac{\partial g}{\partial p}\,.
\]
The Faber polynomial can be written in the Laurent series (see Definition \ref{faberFG}),
\[
\Phi_n(\lambda)=\lambda^n-\sum_{m=1}^{\infty}\frac{1}{\lambda^m}Q_{nm}\,.
\]
In particular, we have
\[
\Phi_1(\lambda(p))=p(\lambda)=\lambda^n-\sum_{m=1}^{\infty}\frac{1}{\lambda^m}Q_{1m}.
\]
Then, following the computation in \cite{CK:95} (see also \cite{CT:06}), we first multiply $\lambda^{n-1}\frac{\partial\lambda}{\partial p}$ to $p(\lambda)$, i.e.
\[
p(\lambda)\lambda^{n-1}\frac{\partial\lambda}{\partial p}
=\lambda^n\frac{\partial\lambda}{\partial p}-\sum_{m=1}^{\infty}{Q_{1m}}\lambda^{n-m-1}\frac{\partial \lambda}{\partial p}\,.
\]
Taking the projection of this equation on the polynomial part,
\[
p\frac{\partial}{\partial p}\left(\frac{1}{n}\Phi_n\right)=\frac{\partial}{\partial p}\left(\frac{1}{n+1}\Phi_{n+1}\right)-\sum_{m=1}^{n-1}Q_{1m}\frac{\partial}{\partial p}\left(\frac{1}{n-m}\Phi_{n-m}\right).
\]
Now multiply by $\mu^{-n}$ and sum over $n\ge1$, we have
\begin{align*}
p\sum_{n=1}^{\infty}\frac{1}{\mu^n}\frac{\partial}{\partial p}\left(\frac{\Phi_n}{n}\right)
&=\sum_{n=1}^{\infty}\frac{1}{\mu^n}\frac{\partial}{\partial p}\left(\frac{\Phi_{n+1}}{n+1}\right)-\sum_{n=1}^{\infty}\sum_{m=1}^{n-1}\frac{Q_{1m}}{m\mu^m}\frac{1}{\mu^{n-m}}\frac{\partial}{\partial p}\left(\frac{\Phi_{n-m}}{n-m}\right)\\
&=\mu\sum_{n=1}^{\infty}\frac{1}{\mu^n}\frac{\partial}{\partial p}\left(\frac{\Phi_n}{n}\right)-1-\sum_{m=1}^{\infty}\frac{Q_{1m}}{\mu^m}\sum_{j=1}^{\infty}\frac{1}{\mu^{j}}\frac{\partial}{\partial p}\left(\frac{\Phi_j}{j}\right)\,.
\end{align*}
This then gives,
\begin{align*}
1&=\left(\mu-p-\sum_{m=1}^{\infty}\frac{Q_{1,m}}{\mu^m}\right)\sum_{n=1}^{\infty}\frac{1}{\mu^n}\frac{\partial}{\partial p}\left(\frac{\Phi_n}{n}\right)\\
&=(p(\mu)-p)\sum_{n=1}^{\infty}\frac{1}{\mu^n}\frac{\partial}{\partial p}\left(\frac{\Phi_n}{n}\right).
\end{align*}
Thus we have
\[
\frac{1}{p(\mu)-p}=\sum_{n=1}^{\infty}\frac{1}{\mu^n}\frac{\partial}{\partial p}\left(\frac{\Phi_n(\lambda(p))}{n}\right)\,.
\]
Integrating this with respect to $p$, and fixing the boundary condition at $\mu=\infty$, we obtain
\begin{align*}
\frac{\mu}{p(\mu)-p(\lambda)}&=\exp\left(\sum_{n=1}^{\infty}\frac{1}{n\mu^n}\Phi(\lambda)\right)\\
&=\exp\left[\sum_{n=1}^{\infty}\frac{1}{n\mu^n}\left(\lambda^n-\sum_{m=1}^{\infty}\frac{1}{\lambda^m}Q_{nm}\right)\right]\\
&=\frac{1}{1-\frac{\lambda}{\mu}}\exp\left(-\sum_{n=1}^{\infty}\sum_{m=1}^{\infty}\frac{1}{\lambda^m\mu^n}\frac{Q_{nm}}{n}\right)\,. \end{align*}
Then expressing the Grunsky coefficients in terms of the free energy, i.e.
$c_{nm}=\frac{1}{n}Q_{nm}=\frac{1}{nm}F_{nm}$, we obtain
 the dKP hierarchy (\ref{Cauchykernel}),
\[
\frac{p(\mu)-p(\lambda)}{\mu-\lambda}=e^{D(\lambda)D(\mu)F}\,.
\]
One should note here that the dispersionless Hirota equation for $F_{nm}$ is completely
determined by the function $\lambda=p+\sum_{m=1}^{\infty}\frac{u_{m+1}}{p^m}$ (spectral curve).
Then the dKP hierarchy defines an integrable deformation of the function $\lambda(p)$, and
as a result, the Grunsky coefficients can be expressed by the second derivatives of
the free energy with the deformation parameters.

We also remark that in \cite{T:07} (see also \cite{TT:95}), Takasaki derives this formula as a classical limit of the differential
Fay identity, which can be obtained from the bilinear identity \cite{DJKM:83},
\[
\oint_{z=\infty}\frac{dz}{2\pi i}\,e^{\xi(\mathbf{t}'-\mathbf{t},z)}\tau\left(\mathbf{t}'-[z^{-1}]\right)
\tau\left(\mathbf{t}+[z^{-1}]\right)=0\,.
\]
where $\xi(\mathbf{t},z)=\sum_{n=1}^{\infty}z^nt_n$ and $[z^{-1}]$ is given by
\[
[z^{-1}]=\left(\frac{1}{z},\,\frac{1}{2z^2},\,\frac{1}{3z^3},\,\ldots\right)\,.
\]
Takasaki also derives the dPfaff hierarchy from the classical limit of the differential Fay identity
obtained from the bilinear identity for the DKP hierarchy \cite{JM:83}. This is remarkable, since there seems to be no
proper classical limit for the Pfaff lattice hierarchy in the matrix representation given in \cite{AvM:02, KP:07}
(this is contrary to the Toda lattice hierarchy).

\section{Appendix B:  Computation of $C_0(T_0)$}
Here we give the computation of $C_0^{(\beta)}(T_0)$ for $\beta=1,2$ and $4$.
We use formula (3.3.10) given in \cite{M:2004}, reproduced here:
with
\[
Z_{n}^{(\beta)}\left(\frac{\beta}{2}{\lambda^2};\mathbf{0}\right) = \int_{\mathbb{R}}d\lambda_1 \dots \int_{\mathbb{R}}d\lambda_n \, \prod_{i<j} \big| \lambda_i - \lambda_j \big|^\beta \,\exp\left( - \frac{\beta}{2} \sum_{j=1}^n \lambda_j^2 \right)\,,
\]
one finds the following formula, which we call Mehta's formula,
\[ Z_n^{(\beta)} \left(\frac{\beta}{2}\lambda^2;\mathbf{0}\right)= (2\pi)^{\frac{n}{2}} \beta^{-\frac{n}{2} - \beta \frac{n(n-1)}{4}} \left[ \Gamma\left( 1 + \frac{\beta}{2}\right) \right]^{-n} \prod_{j=1}^n \Gamma\left( 1 + \frac{\beta  j}{2}\right) \,.\]

\subsection{GUE case}
For $\beta=2$, we have
\[
 C^{(2)}(T_0;N) = \log\left[ Z^{(2)}_n\left( \frac{N}{2} \lambda^2; \mathbf{0}\right) \right]. \]
 From Mehta's formula, we have
\[
Z^{(2)}_n( \lambda^2; \mathbf{0}) = (2\pi)^{\frac{n}{2}} 2^{-\frac{n^2}{2}} \prod_{j=1}^n j!\,.
\]
A straightforward change of variables gives
\[
Z^{(2)}_n\left( \frac{N}{2} \lambda^2; \mathbf{0}\right) = N^{-\frac{n^2}{2}} (2\pi)^{\frac{n}{2}} \prod_{j=1}^n j!\,.
\]
Inserting this into our formula for $C(T_0)$ we find:
\[
\log\left[ Z^{(2)}_n\left(\frac{n}{2}\lambda^2;0\right)\right] = -\frac{n^2}{2} \log N+ \frac{n}{2} \log(2\pi)
+ \sum_{j=1}^n \log(j!)\,.
\]
So the problem is to find the leading order behavior of $\sum \log(j!)$.  We compute this in a generalization of the Riemann sums approach to deriving Sterlings approximation.

We start by rewriting the expression $\sum \log(j!)$ as a double summation which then simplifies:
\begin{align*}
\sum_{j=1}^n \log(j!) &= \sum_{j=1}^n \sum_{k=1}^j \log k= \sum_{k=1}^n \sum_{j=k}^n \log k \\
&= \sum_{k=1}^n (n-k+1) \log k\,.
\end{align*}
We then approximate this sum by an integral using the trapezoid rule
\[ \sum_{j=1}^n \log(j!) = \frac{1}{2} \log n + \int_1^n (n-x+1) \log x \,dx+ E\,,
\]
where the error in the approximation can be computed using the Euler-MacLaurin formula.  For our purposes it is enough to show that $E = \mathcal{O}(n\log n)$ in truth it performs much better.  The integral then gives
\[ \sum_{j=1}^n \log(j!) = - \frac{3}{4} n^2 + \frac{1}{2} n^2 \log n + \mathcal{O}(n\log n)\,. \]
Inserting this back into our expression for $C^{(2)}(T_0;N)$ we find
\[ C^{(2)}(T_0;N) = \frac{1}{2} n^2 \log n - \frac{1}{2} n^2 \log N + \frac{3}{4} n^2 + \mathcal{O}(n\log n) \,.\]
Then in the limit, we have with $n=NT_0$,
\[ C_0^{(2)}(T_0) :=\lim_{N\to\infty}\frac{1}{N^2}C(T_0;N)= \frac{1}{2} T_0^2 \log(T_0) - \frac{3}{4} T_0^2 \,. \]

\subsection{GOE case}
In the case of $\beta=1$, we have
\[C^{(1)}(T_0;N) = \log\left[ Z_{2n}^{(1)}\left( \frac{N}{4} \lambda^2; \mathbf{0}\right) \right]
\]
Mehta's formula then gives
\begin{align*}
Z_{2n}^{(1)}\left( \frac{1}{2}\lambda^2; \mathbf{0}\right) &= (2\pi)^n \Gamma\left(\frac{3}{2}\right)^{-2n} \prod_{j=1}^{2n} \Gamma\left(1 + \frac{j}{2}\right) \\
&= 2^{3n} \prod_{j=1}^{n} j! \prod_{j=0}^{n-1} \frac{\sqrt{\pi}}{2^{2j+1} } \frac{(2j+1)!}{j!}\\
&= 2^{2n} \pi^{\frac{n}{2}} n! \prod_{j=1}^{n-1} 2^{-2j} (2j+1)!\,.
\end{align*}
Likewise in this case we find that
\begin{align*}
 C^{(1)}(T_0;N) &= \log\left[ Z_{2n}^{(1)}\left( \frac{N}{4} \lambda^2; \mathbf{0}\right) \right] = \left(n^2 + \frac{n}{2}\right ) \log\frac{2}{N} + \log\left[ Z_{2n}^{(1)}\left(\frac{1}{2} \lambda^2; \mathbf{0}\right)\right]
\\
&=  n^2 \log \frac{2}{N} + \log(n!) - 2 \log 2\left( \sum_{j=1}^n j\right) + \sum_{j=1}^{n-1} \log[(2j+1)!] +\mathcal{O}(n\log n)\,.
\end{align*}
Stirlings approximation implies that
\[ \log(n!) = \sum_{k=1}^n \log k = \mathcal{O}(n \log n)\,. \]
So we only need to find the behavior of $\sum \log((2j+1)!) $ for large $n$: as above we follow the basic method of Stirlings approximation and write the sum as the approximation of an integral plus an error term.  For large $n$ the error term is of order $\mathcal{O}(1)$;
\begin{align*}
\sum_{j=1}^{n-1} \log((2j+1)!) &= \sum_{j=1}^{n-1} \sum_{k=1}^{2j+1} \log k= \sum_{k=1}^{2n-1} \sum_{j=\lfloor \frac{k}{2} \rfloor}^{n-1} \log k \\ \nonumber
&= \sum_{l=1}^{n-1}\left[ (n-l) \log(2l) + (n-l) \log(2l+1)\right] \\ \nonumber
&= \frac{1}{2} \log( (2n-2)(2n-1)) + \int_1^{n-1} (n-x) \log( 2x (2x+1)) dx + E \\ \label{gen-stirling}
&= -\frac{3}{2} n^2 + n^2 \log(2n) + \mathcal{O}(n\log n)\,. \end{align*}
Then we find
\[ C^{(1)}(T_0;N) = n^2 \log(2 n) - n^2 \log N - \frac{3}{2}n^2 \mathcal{O}(n\log n)\,, \]
which gives
\[  C_0^{(1)}(T_0) :=\lim_{N\to\infty}\frac{1}{N^2}C^{(1)}(T_0;N)= T_0^2 \log(2 T_0) - \frac{3}{2} T_0^2 \,. \]

\subsection{GSE case}
For $\beta=4$,
we have
\[ C^{(4)}(T_0;N) = \log\left[ Z_{n}^{(4)}\left( \frac{N}{4} \lambda^2; \mathbf{0}\right) \right]\,. \]
Mehta's formula with $\beta=4$ gives
\[ Z_n^{(4)}( \lambda^2 ;\mathbf{0}) = \pi^{\frac{n}{2}} 2^{-2n^2 + \frac{n}{2}} \prod_{j=1}^n (2j)! \,,\]
which together with a change of variables gives
\[ Z_n^{(4)}\left(\frac{N}{4} \lambda^2; \mathbf{0}\right) = \left(\frac{\pi}{2}\right)^{\frac{n}{2}}N^{-n^2 + \frac{n}{2}} \prod_{j=1}^n (2j)!\,. \]
Therefore
\[ C^{(4)}(T_0;N) = - n^2 \log N  + \sum_{j=1}^n \log(2j)! +\mathcal{O}(n\log n) \,.\]
So we must compute the asymptotic behavior of $\sum \log(2j)! $ it is not
suprising that it is identical to the asymptotic behavior of $\sum \log(2j+1)!
$ found in the $\beta=1$ case.  This is in fact yet more evidence of the
duality that exists between $\beta=1$ and $4$.
Inserting this back into our expression for $C^{(4)}(T_0;N)$ we find
\[ C^{(4)}(T_0;N) = - n^2 \log N - \frac{3}{2} n^2 + n^2 \log(2n) + \mathcal{O}(n \log n)\,,\]
from which we conclude that
\[  C_0^{(4)}(T_0) = T_0^2 \log(2 T_0) -\frac{3}{2} T_0^2.  \]

\medskip

\end{document}